\newcommand{\blind}{1}
\newcommand{\jasa}{0}
\newcommand{\vecmat}[1]{#1}%{\boldsymbol{#1}}
\DeclareMathOperator*{\argmax}{argmax}
\DeclareMathOperator*{\tr}{tr}
\newtheorem{theorem}{Theorem}[section]
\newtheorem{proposition}[theorem]{Proposition}
\newtheorem{lemma}{Lemma}[section]
\newtheorem{corollary}{Corollary}[section]
\newtheorem*{notation}{Notation}
\title{Scalable Krylov Subspace Methods for Generalized Mixed-Effects Models with Crossed Random Effects} 
\author{
  Pascal Kündig\footnotemark[1]  \footnotemark[3]\\
  \and
  Fabio Sigrist\footnotemark[2] \footnotemark[1]  \footnotemark[4]
}
\date{}
\begin{document}
\maketitle

\footnotetext[1]{Lucerne University of Applied Sciences and Arts}
\footnotetext[2]{Seminar for Statistics, ETH Zurich}
\footnotetext[3]{University of Basel}
\footnotetext[4]{Corresponding author: fabio.sigrist@stat.math.ethz.ch}

\begin{abstract}
Mixed-effects models are widely used to model data with hierarchical grouping structures and high-cardinality categorical predictor variables. However, for high-dimensional crossed random effects, current standard computations relying on Cholesky decompositions can become prohibitively slow. In this work, we present Krylov subspace-based methods that address existing computational bottlenecks, and we analyze them both theoretically and empirically. In particular, we derive new results on the convergence and accuracy of the preconditioned stochastic Lanczos quadrature and conjugate gradient methods for mixed-effects models, and we develop scalable methods for calculating predictive variances. In experiments with simulated and real-world data, the proposed methods yield speedups by factors of up to about 10,000 and are numerically more stable than Cholesky-based computations.% as implemented in state-of-the-art packages such as \texttt{lme4} and \texttt{glmmTMB}. %Our methodology is available in the open-source C++ software library \texttt{GPBoost}, with accompanying high-level Python and R packages.
\end{abstract}

\section{Introduction}
Mixed-effects models are widely used in various scientific disciplines to analyze hierarchically grouped data \citep{pinheiro2000linear, mcculloch2004generalized}. High-dimensional crossed random effects occur frequently in practice, for instance, when modeling ratings in recommender systems with categorical grouping variables corresponding to customers and products \citep{gao2017efficient, ghosh2022backfitting, simchoni2023integrating} or, in general, when there are high-cardinality categorical predictor variables \citep{sigrist2023comparison}. However, current standard sparse Cholesky decomposition-based computations can become prohibitively slow for high-dimensional crossed random effects. For instance, the cost of computing Cholesky factors has up to cubic complexity in the dimension of the random effects \citep{pandolfi2024conjugate}. 

Various methods have been proposed to overcome this computational bottleneck, including method of moments estimation \citep{gao2017efficient, gao2020estimation}, collapsed Gibbs samplers \citep{papaspiliopoulos2020scalable, papaspiliopoulos2023scalable}, backfitting algorithms \citep{ghosh2022scalable, ghosh2022backfitting}, and composite likelihoods \citep{bellio2023consistent}. In addition, the preconditioned conjugate gradient (CG) method \citep{saad2003iterative} with relatively simple diagonal or block-diagonal preconditioners has been applied to solve high-dimensional mixed model equations \citep{stranden1999solving, vandenplas2018deflated}. Recently, \citet{pandolfi2024conjugate} have analyzed the CG method with diagonal preconditioning for Bayesian generalized linear mixed models (GLMMs). For Gaussian likelihoods and the special case where all random effects have the same variance, \citet{border2019stochastic} and \citet{cheng2023slemm} have used stochastic Lanczos quadrature (SLQ) \citep{ubaru2017fast} without preconditioning in combination with a shift-invariance property to calculate log-determinants in log-marginal likelihoods. However, without preconditioning, SLQ-approximated log-marginal likelihoods have high variances and are thus inaccurate; see our theoretical and empirical analyses below. An accurate and scalable method for calculating log-determinants with a theoretical understanding of its convergence properties is thus currently missing. Furthermore, to our knowledge, there exists no scalable method for calculating predictive variances when the number of prediction points is large.%, stranden2001parallel, tsuruta2001use, taskinen2017single, garrick2019alternative, % xu2023gaussian

% , and variational inference \citep{goplerud2024partially}

% Overall, there are two major unsolved computational challenges: (i) efficiently computing log-determinants in log-likelihoods and their gradients, and (ii) solving high-dimensional sparse linear systems with many right-hand sides for prediction. 

%, when measuring public opinion and voter turnout with random effects modeling effects of a large number of geographic units and other categorical variables \citep{warshaw2012should, ghitza2013deep, broockman2018bias}

In this article, we introduce Krylov subspace methods for computationally efficient parameter estimation and prediction in generalized mixed-effects models with Gaussian and non-Gaussian likelihoods. Krylov subspace methods allow for fast computations as they rely on matrix-vector multiplications with sparse matrices, which can be trivially parallelized. Our main contributions are: (i) preconditioned SLQ and stochastic trace estimation (STE) methods for likelihood and gradient calculations; (ii) convergence theory explaining the roles of symmetric successive over-relaxation (SSOR) and diagonal preconditioning; (iii) scalable stochastic estimators for predictive variances; and (iv) an open-source implementation together with large-scale empirical validation. In experiments, we find that Krylov subspace-based methods are up to four orders of magnitude faster and more stable than Cholesky-based calculations. For instance, we are able to estimate a model with two crossed random effects on the large-scale MovieLens\_32m data set in less than five minutes, whereas the widely used \texttt{lme4} package requires more than three days and Cholesky-based computations fail in other packages due to memory limitations.

\section{Preliminaries on Generalized Mixed-Effects Models}\label{sec_model}
We assume that the response variable $y=(y_1, \dots, y_n)^T \in \mathbb{R}^n$ follows a distribution with a density $p(y|\mu,\xi)=\prod_{i=1}^n p(y_i|\mu_i,\xi)$, where $\mu\in\mathbb{R}^n$ are, potentially link function-transformed, parameters, and $\xi \in \Xi \subset \mathbb{R}^r$ are auxiliary parameters. For instance, $\mu_i$ and $\xi$ can be the mean and variance of a Gaussian distribution, the log-mean and the shape parameter of a gamma likelihood, or $\mu_i$ can be the logit-transformed success probability of a Bernoulli distribution for which there is no additional parameter $\xi$. The main parameters of interest $\mu$ are modeled as the sum of fixed $F(X)$ and random effects $Zb$: 
\begin{equation*}
    \mu = F(X) + Zb, ~~~ b \sim N(0,\Sigma),
\end{equation*}
where $X\in\mathbb{R}^{n\times p}$ contains predictor variables. In generalized linear mixed-effects models (GLMMs), the fixed effects function is linear, $F(X) = X\beta$, but $F(\cdot)$ can also be modeled using machine learning methods such as random forests \citep{hajjem2014mixed}, tree-boosting \citep{sigrist2022gaussian, sigrist2022latent}, or neural networks \citep{simchoni2023integrating}. For notational simplicity, we will denote the parameters of this function $F(\cdot)$ by $\beta$, regardless of whether $F(\cdot)$ is linear or not. The latent Gaussian variables $b = (b_1^T,b_2^T,\dots,b_K^T)^T\in\mathbb{R}^m$ consist of $K$ grouped random effect components $b_k\in\mathbb{R}^{m_k}$, $k \in \{1,\dots,K\}$, $m = \sum_{k=1}^{K} m_k$, and the covariance matrix $\Sigma\in\mathbb{R}^{m\times m}$ is diagonal and depends on a set of variance parameters $\theta = \{\sigma_1^2, \sigma_2^2, \dots, \sigma_K^2\}$: 
\begin{equation*}
\Sigma = \text{diag}(\underbrace{\sigma_1^2,\dots,\sigma_1^2}_{m_1-\text{times}},
                     \underbrace{\sigma_2^2,\dots,\sigma_2^2}_{m_2-\text{times}},
                     \dots,
                     \underbrace{\sigma_K^2,\dots,\sigma_K^2}_{m_K-\text{times}}) \in \mathbb{R}^{m\times m}.
\end{equation*}
In addition, $Z = (Z_1, Z_2, \dots, Z_K) \in \mathbb{R}^{n \times m}$, $Z_j \in \mathbb{R}^{n \times m_k}$, is usually a binary incidence matrix that maps the random effects $b$ to the corresponding observations, but it can also contain predictor variables when modeling random coefficients. We denote such a model as a generalized mixed-effects model. Note that the diagonal assumption for $\Sigma$ is not crucial and could be relaxed, e.g., to account for prior correlation among random slopes and intercepts.

\subsection{Parameter estimation}

% For non-Gaussian likelihoods, marginal likelihoods are approximated using the Laplace approximation since it is computationally efficient and converges asymptotically to the correct quantity. Supporting this argument, we show in Figure \ref{fig:LaplaceBias} the estimated variance parameter of one random effect in a model with two crossed random effects obtained with a Laplace approximation for varying numbers of repeated observations $d$ per random effect on simulated data with a Bernoulli likelihood and the setting described in Section \ref{exp_setting} with $m=4'000$. Estimation is done using the Krylov subspace methods introduced in this paper and repeated on $100$ simulated data sets for every $d$. Figure \ref{fig:LaplaceBias} shows that the variance parameter estimator has a small downward bias which vanishes as $d$ grows.
% \begin{figure}[ht!]
%     \centering
%     \includegraphics[width=0.8\linewidth]{plots/bias_analysis.png}
%     \caption{Estimated variance parameter $\sigma^2_1$ obtained with a Laplace approximation vs. different numbers of repeated observations $d$ per random effect for binary data. The red rhombi represent means and the whiskers are $\pm 2 \times$ standard errors. The dashed line indicates the true parameter $\sigma^2_1=0.25$.}
%     \label{fig:LaplaceBias}
% \end{figure}

For parameter estimation, the marginal likelihood $p(y|\beta,\theta,\xi)=\int p(y|\mu,\xi)p(b|\theta)db$ is typically maximized. For non-Gaussian likelihoods, there is no analytic expression for this marginal likelihood and an approximation has to be used. We use the Laplace approximation since it is computationally efficient and widely used in popular GLMM software libraries such as \texttt{lme4} \citep{bates2015fitting} and \texttt{glmmTMB} \citep{glmmtmb_cite}. For the latter, we assume that the likelihood $p(y|\mu,\xi)$ is log-concave and two times continuously differentiable in $\mu$. A Laplace approximation to the negative log-marginal likelihood is given by (ignoring constant terms)
\begin{equation}\label{mll_laplace}
    L^{LA}(y|\beta,\theta,\xi)=-\log p(y|\mu^*,\xi) + \frac{1}{2} {b^*}^T\Sigma^{-1} b^* + \frac{1}{2}\log\det\left(\Sigma\right) + \frac{1}{2}\log\det\left(\Sigma^{-1} + Z^T W Z\right),
\end{equation}
where $\mu^* = F(X)+Zb^*$, $b^*=\argmax_{b}\log p(y|\mu,\xi) - \frac{1}{2} b^T\Sigma^{-1} b$ is the mode of $p(y|\mu,\xi)p(b|\theta)$, and $W\in\mathbb{R}^{n\times n}$ is diagonal with $W_{ii}=-\frac{\partial^2 \log p(y_i| \mu_i,\xi)}{\partial \mu_i^2}\Big|_{\mu=\mu^*}$. 
The mode $b^*$ is typically found with Newton's method, and one iteration is given by
\begin{equation}\label{newton}
b^{*t+1}
= b^{*t} + \left(Z^T W Z + \Sigma^{-1}\right)^{-1}\left(Z^T\frac{\partial\log p(y|\mu^{*t},\xi)}{\partial b}-\Sigma^{-1}b^{*t}\right), ~~ t = 0, 1, \dots
\end{equation}
If a first- or second-order optimization method is used for minimizing $L^{LA}(y|\beta,\theta,\xi)$, gradients with respect to $\theta$, $F=F(X)\in\mathbb{R}^n$, and $\xi$ are needed. These gradients can be found, e.g., in \cite{sigrist2022latent}. Note that gradients with respect to $F\in\mathbb{R}^n$ are used, for instance, for GLMMs since $\frac{\partial L^{LA}(y|\beta,\theta,\xi)}{\partial \beta}=X^T\frac{\partial L^{LA}(y|\beta,\theta,\xi)}{\partial F}$ by the chain rule when $F(X)=X\beta$.
% Note that $\log\det\left(\Sigma Z^T W Z + I_n\right) = \log\det\left(\Sigma\right) +  \log\det\left(\Sigma^{-1} + Z^T W Z\right)$.

% \subsubsection{Gaussian likelihoods}
% For Gaussian likelihoods, we can write
% \if0\jasa{$$ y = F(X) + Zb + \epsilon, ~~~ \epsilon \sim N(0,W^{-1}),~~ W = \frac{1}{\sigma^2} I_n,$$}\else{$ y = F(X) + Zb + \epsilon, ~~~ \epsilon \sim N(0,W^{-1}),~~ W = \frac{1}{\sigma^2} I_n,$}\fi
% where $\sigma^2$ is the error variance and $I_n \in \mathbb{R}^{n \times n}$ an identity matrix. In this case, 

For Gaussian likelihoods, the Laplace approximation is exact, and we have
\begin{equation}\label{lossGaussian}
    L(y|\beta,\theta,\xi) = \frac{n}{2}\log(2\pi) + \frac{1}{2} \log\det(\Psi) + \frac{1}{2}(y-F(X))^T\Psi^{-1}(y-F(X)), ~~ \Psi = Z \Sigma Z^T + W^{-1}.
\end{equation}
Since usually $m<n$ and for sparsity reasons, one uses the Woodbury identity 
\begin{equation}\label{woodbury}
    \Psi^{-1} =  W-WZ(\Sigma^{-1} + Z^TWZ)^{-1}Z^TW
\end{equation}
for solving linear systems and the matrix determinant lemma to calculate 
\begin{equation}\label{determinant_lemma}
\log\det(\Psi) = \log\det(\Sigma^{-1} + Z^T W Z) + \log\det(\Sigma) + \log\det(W^{-1}).
\end{equation}
Gradients, e.g., with respect to the variance parameters are given by
\begin{equation}
    \frac{\partial L(y|\beta,\theta,\xi)}{\partial\theta_k} = \frac{1}{2} \tr\left((\Sigma^{-1} + Z^T W Z)^{-1}\frac{\partial\Sigma^{-1}}{\partial\theta_k} + \Sigma^{-1}\frac{\partial\Sigma}{\partial\theta_k}\right) - \frac{1}{2}(y-F(X))^T\Psi^{-1}\frac{\partial \Psi}{\partial\theta_k}\Psi^{-1}(y-F(X)).
\end{equation}
The Fisher information $\mathcal{I} \in \mathbb{R}^{K \times K}$ for $\theta$, which can be used for Fisher scoring and for obtaining asymptotic confidence sets, is given by
\begin{equation}\label{FisherI}
    (\mathcal{I})_{kl} = \frac{1}{2}\tr\left(\Psi^{-1}\frac{\partial\Psi}{\partial\theta_k}\Psi^{-1}\frac{\partial\Psi}{\partial\theta_l}\right), \; 1\leq k,l\leq K.
\end{equation}

\subsection{Prediction}
Predictions for latent variables $\mu_p \in\mathbb{R}^{n_p}$ and response variables $y_p \in\mathbb{R}^{n_p}$ are made using the posterior predictive distributions $p(\mu_p|y)$ and $p(y_p|y)$, respectively. Note that predictions can be made for random effects $b$ corresponding to levels present in the training data and new random effects $b_p\in\mathbb{R}^{m_p}$ that did not occur in the observed data. We denote by $Z_{po} \in \mathbb{R}^{n_p \times m}$ the matrix that relates the random effects $b$ from the training data to the prediction points, $Z_{pp} \in \mathbb{R}^{n_p \times m_p}$ the matrix that relates the new random effects $b_p$ to the prediction points, and $n_p$ and $m_p$ the numbers of prediction points and new random effects, respectively. Using this notation, we have
\begin{equation}\label{pred_mean}
    \mu_p = F(X_p) + Z_{po}b + Z_{pp}b_p \in\mathbb{R}^{n_p},
\end{equation}
where $X_p\in\mathbb{R}^{n_p\times p}$ is the predictor variable matrix of the prediction points, and 
\begin{equation}\label{pred_Laplace}
    \begin{split}
        \begin{pmatrix}  b\\  \mu_p \end{pmatrix}
        &= \begin{pmatrix}  0\\  F(X_p) \end{pmatrix}  +
        \begin{pmatrix}
            (I_m, 0_{m\times m_p})\\
            (Z_{po}, Z_{pp})
        \end{pmatrix}
        \begin{pmatrix}  b\\  b_p \end{pmatrix} \sim \mathcal{N}\left( \begin{pmatrix}  0\\  F(X_p) \end{pmatrix},
        \begin{pmatrix}
            \Sigma       & \Sigma Z_{po}^T\\
            Z_{po}\Sigma & Z_{po} \Sigma Z_{po}^T + Z_{pp} \Sigma_p Z_{pp}^T
        \end{pmatrix}
        \right),
    \end{split}
\end{equation}
where $(I_{m},0_{m\times m_p})\in \mathbb{R}^{m\times (m + m_p)}$, $I_{m}\in \mathbb{R}^{m\times m}$ is an identity matrix, $0_{m\times m_p}\in\mathbb{R}^{m\times m_p}$ is a matrix of zeros, and $\Sigma_{p} = \text{Cov}(b_p)$. %While predictive distributions are less commonly used in classical statistical applications of GLMMs, they are important, for example, in mixed-effects machine learning models \citep{sigrist2022gaussian, simchoni2023integrating}, where probabilistic predictions are required.

For the latent variable $\mu_p$, the posterior predictive distribution $p(\mu_p|y)$ is given by 
$$
p(\mu_p|y) = \int p(\mu_p|b)p(b|y) db \approx \mathcal{N}\left(\omega_p,\Omega_p\right),
$$
where
% \begin{subequations}
% \label{postpred_Laplace}
\begin{align}
    \omega_p &= F(X_p) + Z_{po} b^*,\label{post_mean}\\
    \Omega_p &= Z_{po} \Sigma Z_{po}^T + Z_{pp} \Sigma_p Z_{pp}^T - Z_{po} \Sigma Z^T \Psi^{-1} Z\Sigma Z_{po}^T \label{postpred_Laplace_cov1}\\
             &= Z_{pp} \Sigma_p Z_{pp}^T + Z_{po} \left(\Sigma^{-1} + Z^TWZ\right)^{-1} Z_{po}^T. \label{postpred_Laplace_cov2}
\end{align}
% \end{subequations}
This can be derived using the fact that a Laplace approximation for $p(y|\beta,\theta,\xi)$ is equivalent to the approximation $p(b|y) \approx \mathcal{N}\left(b^*,\left(\Sigma^{-1} + Z^T W Z\right)^{-1}\right)$ since $ p(b|y) = p(y|b)p(b) / p(y)$, and by applying standard results for conditional distributions of multivariate Gaussian distributions. For predicting the observable response variables $y_p$, $y_p|\mu_p\sim p(y_p|\mu_p)$, an additional integral must be calculated:
\begin{equation}\label{pred_resp}
    p(y_p|y)=\int p(y_p|\mu_p)p(\mu_p|y)d\mu_p.
\end{equation}
This is analytically intractable for most likelihoods but can be approximated using numerical integration or by simulating from $p(\mu_p|y)\approx \mathcal{N}\left(\omega_p,\Omega_p\right)$. Note that, for notational simplicity, we omit the dependence of $p(\mu_p|y)$, $p(b|y)$, $p(\mu_p|b)$, $p(y_p|y)$, and $p(y_p|\mu_p)$ on $\beta$, $\theta$, and $\xi$.

For a Gaussian likelihood, the above predictive distributions are exact and available in closed form. The predictive mean in \eqref{post_mean} can be equivalently calculated as 
\if0\jasa{$$\omega_p = F(X_p) + Z_{po} \Sigma Z^T \Psi^{-1} (y-F(X)),$$}\else{$\omega_p = F(X_p) + Z_{po} \Sigma Z^T \Psi^{-1} (y-F(X))$, }\fi
and $p(y_p|y)$ is obtained from $p(\mu_p|y)$ by adding the error variance to the diagonal of $\Omega_p$. The predictive mean $\omega_p$ is often called the best linear unbiased predictor (BLUP) in the mixed-effects literature.

\section{Krylov Subspace Methods for Mixed-Effects Models}\label{krylov_methods}
Parameter estimation and prediction for models with high-dimensional crossed random effects require the following time-consuming operations all involving operations with the sparse matrix $\Sigma^{-1} + Z^TWZ \in\mathbb{R}^{m\times m}$. First, calculating linear solves $(\Sigma^{-1} + Z^T W Z) u = v$, $v \in \mathbb{R}^m$, in four tasks: (i) in quadratic forms of log-marginal likelihoods in \eqref{lossGaussian} after applying the Woodbury identity given in \eqref{woodbury}, (ii) in Newton's method for finding the mode, see \eqref{newton}, (iii) for implicit derivatives of the log-marginal likelihood, see \cite[][Proposition 2.1]{sigrist2022latent}, and (iv) for predictive variances $\text{diag}(\Omega_p)$ given in \eqref{postpred_Laplace_cov1}. The latter is particularly challenging as the number of prediction points $n_p$ is typically large, and $n_p$ linear systems need to be solved to calculate $\text{diag}(\Omega_p)$. Note that predictive variances are required not only for predictive distributions, but also for predictive means of non-Gaussian response variables, see \eqref{pred_resp}. Second, calculating log-determinants $\log\det(\Sigma^{-1} + Z^T W Z)$ in (approximate) log-marginal likelihoods $L(y|\beta,\theta,\xi)$ given in \eqref{mll_laplace} and \eqref{determinant_lemma} for non-Gaussian and Gaussian likelihoods, respectively. And third, calculating trace terms such as $\tr((\Sigma^{-1} + Z^T W Z)^{-1}\frac{\partial(\Sigma^{-1} + Z^T W Z)}{\partial\theta_k})$ for the derivatives of log-determinants and $\tr(\Psi^{-1}\frac{\partial\Psi}{\partial\theta_k}\Psi^{-1}\frac{\partial\Psi}{\partial\theta_l})$ for entries of the Fisher information. Traditionally, these operations are performed using a Cholesky decomposition of $\Sigma^{-1} + Z^T W Z$. In the following, we show how these operations can be done using Krylov subspace methods. 

For linear solves with the matrix $\Sigma^{-1} + Z^T W Z$, we use the preconditioned conjugate gradient (CG) method, which solves a linear system $(\Sigma^{-1} + Z^T W Z) u = v$ by iteratively doing matrix-vector multiplications with $\Sigma^{-1} + Z^T W Z$. This can be done fast since $\Sigma^{-1} + Z^T W Z$ is sparse, and convergence is often achieved with $l \ll m$ iterations. Thus, linear solves can be calculated in $O(l(m+n))$ time complexity. For completeness, the preconditioned CG algorithm is included in Appendix \ref{appendix:CGalgo}. %nnz(\Sigma^{-1} + Z^T W Z) <= K^2 n <= K m 

%Several techniques have been proposed in the literature to calculate log-determinants of large, symmetric positive definite matrices. \citet{dong2017scalable} find that stochastic Lanczos quadrature (SLQ) \citep{ubaru2017fast} achieves the highest accuracy and fastest runtime compared to other methods. In this article

To calculate log-determinants, we use the preconditioned stochastic Lanczos quadrature (SLQ) \citep{ubaru2017fast} method. Specifically, we first note that
\begin{equation}\label{logdetsplit}
    \log\det(\Sigma^{-1} + Z^T W Z) = \log\det(P) + \log\det(P^{-\frac{1}{2}}(\Sigma^{-1} + Z^T W Z)P^{-\frac{T}{2}}),
\end{equation}
where $P$ is a symmetric positive definite preconditioner matrix; see Section \ref{sec_prec} for more information on preconditioners. The last term is approximated with the SLQ method \citep{ubaru2017fast} as follows:
\begin{equation}\label{log_det_slq}
    \log\det(P^{-\frac{1}{2}}(\Sigma^{-1} + Z^T W Z)P^{-\frac{T}{2}})
    % \approx \frac{1}{t} \sum_{i=1}^t\|P^{-\frac{1}{2}}z_i\|_2^2 e_1^T \log(\tilde{T}_i) e_1
    \approx \frac{n}{t} \sum_{i=1}^t e_1^T \log(\tilde{T}_i) e_1,
\end{equation}
where $z_1,\dots, z_t \in \mathbb{R}^{m}$ are i.i.d. random vectors with $\mathbb{E}[z_i]=0$ and $\mathbb{E}[z_iz_i^T]=P$, $\tilde{Q}_i \tilde{T}_i \tilde{Q}_i^T \approx P^{-\frac{1}{2}}(\Sigma^{-1} + Z^T W Z)P^{-\frac{T}{2}}$ is a partial Lanczos decomposition obtained after $l$ steps of the Lanczos algorithm with $P^{-\frac{1}{2}}z_i / \|\vecmat{P}^{-\frac{1}{2}}\vecmat{z}_i\|_2$ as initial vector, $\tilde{Q}_i \in \mathbb{R}^{m\times l}$ has orthonormal columns, $\tilde{T}_i\in \mathbb{R}^{l\times l}$ is tridiagonal, and $e_1=(1,0,\dots,0)^T$. The SLQ approximation in \eqref{log_det_slq} can be derived noting that $ \log\det(\vecmat{A})  = \tr(\mathbb{E}[\vecmat{P}^{-\frac{1}{2}} \vecmat{z}_i \vecmat{z}_i^T \vecmat{P}^{-\frac{T}{2}}]\log(\vecmat{A})) = \mathbb{E}[\vecmat{z}_i^T \vecmat{P}^{-\frac{T}{2}}\log(\vecmat{A}) \vecmat{P}^{-\frac{1}{2}} \vecmat{z}_i] \approx \mathbb{E}[\vecmat{z}_i^T \vecmat{P}^{-\frac{T}{2}} \tilde{\vecmat{Q}}_i \log(\tilde{\vecmat{T}}_i) \tilde{\vecmat{Q}}_i^T \vecmat{P}^{-\frac{1}{2}} \vecmat{z}_i] \approx \frac{1}{t} \sum_{i=1}^t \vecmat{z}_i^T \vecmat{P}^{-\frac{T}{2}} \tilde{\vecmat{Q}}_i\log(\tilde{\vecmat{T}}_i) \tilde{\vecmat{Q}}_i^T \vecmat{P}^{-\frac{1}{2}} \vecmat{z}_i = \frac{1}{t} \sum_{i=1}^t\|\vecmat{P}^{-\frac{1}{2}}\vecmat{z}_i\|_2^2 \vecmat{e}_1^T \log(\tilde{\vecmat{T}}_i) \vecmat{e}_1 \approx \frac{n}{t} \sum_{i=1}^t \vecmat{e}_1^T \log(\tilde{\vecmat{T}}_i) \vecmat{e}_1$, where $\vecmat{A}=P^{-\frac{1}{2}}(\Sigma^{-1} + Z^T W Z)P^{-\frac{T}{2}}$. In this article, we use Gaussian random vectors $z_i\sim \mathcal{N}(0,P)$. A decomposition with a preconditioner as in \eqref{logdetsplit} leads to variance reduction since, intuitively, the more accurate the preconditioner $P \approx (\Sigma^{-1} + Z^T W Z)$, the smaller $\log\det(P^{-\frac{1}{2}}(\Sigma^{-1} + Z^T W Z)P^{-\frac{T}{2}})$, and thus the smaller the variance of its stochastic approximation.

% $\log\det(\vecmat{A}) = \tr\log(\vecmat{A}) \approx \tr(\tilde{\vecmat{Q}}\log(\tilde{\vecmat{T}}) \tilde{\vecmat{Q}}^{T}) \approx \frac{1}{\ell} \sum_{i=1}^\ell {z}_i^{\top} \vecmat{\Tilde{Q}}_i\log(\vecmat{\Tilde{T}}_i)\vecmat{\Tilde{Q}}_i^{\top} {z}_i \approx \frac{n}{\ell} \sum_{i=1}^\ell {e}_1^{\top}\log(\vecmat{\Tilde{T}}_i) {e}_1$, where $\vecmat{A}=P^{-\frac{1}{2}}(\Sigma^{-1} + Z^T W Z)P^{-\frac{T}{2}}$. 

Similarly as in \citet{gardner2018gpytorch}, we use a technique from \citet{saad2003iterative} to calculate the partial Lanczos tridiagonal matrices $\tilde{T}_1,\dots,\tilde{T}_t$ from the coefficients of the preconditioned CG algorithm when solving $(\Sigma^{-1} + Z^T W Z)^{-1}z_1,\dots, (\Sigma^{-1} + Z^T W Z)^{-1}z_t$ $t$ times; see Appendix \ref{appendix:CGalgo}. In doing so, we avoid running the Lanczos algorithm, which brings multiple advantages: Numerical instabilities of the Lanczos algorithm due to loss of orthogonality \citep{saad2003iterative} are not an issue, storing $\tilde{Q}_i$ is not necessary, and the linear solves $(\Sigma^{-1} + Z^T W Z)^{-1}z_i$ can be reused in stochastic trace estimation (STE) for calculating derivatives of the log-determinant, e.g., as follows:
\begin{equation*}
    \frac{\partial\log\det(\Sigma^{-1} + Z^T W Z)}{\partial\theta_k} 
    \approx \frac{1}{t}\sum_{i=1}^t ((\Sigma^{-1} + Z^T W Z)^{-1}z_i)^T\frac{\partial(\Sigma^{-1} + Z^T W Z)}{\partial\theta_k}P^{-1} z_i.
\end{equation*}
% The derivation of this approximation is also provided in Appendix A.2 in \cite{kundig2024iterative}. 
Gradients can thus be calculated with minimal computational overhead once the likelihood is calculated. In Appendix \ref{PAppendix}, we show in detail how to calculate derivatives of log-determinants with respect to $\theta$, $F$, $b^*$, and $\xi$, using STE and a form of variance reduction with control variates based on the SSOR preconditioner introduced in the next section. We choose the Lanczos rank $l$ adaptively by running the preconditioned CG algorithm until it has converged. The number of random vectors $t$ controls the stochastic trace estimation variance and is studied empirically in Section \ref{prec_compare}.% then recovering the corresponding Lanczos tridiagonal matrix from the CG coefficients. 

Trace terms of the Fisher information given in \eqref{FisherI} can also be approximated with STE as follows:
\begin{equation}\label{STE_FisherI}
\begin{split}
    \tr\left(\Psi^{-1}\frac{\partial\Psi}{\partial\theta_k}\Psi^{-1}\frac{\partial\Psi}{\partial\theta_l}\right)
    =&\tr     \left(\left(W-WZ(\Sigma^{-1} + Z^TWZ)^{-1}Z^TW\right)Z_kZ_k^T\right.\\
     &\left.~~~~    \left(W-WZ(\Sigma^{-1} + Z^TWZ)^{-1}Z^TW\right)Z_lZ_l^T\right)\\
    \approx& \frac{1}{t} \sum_{i=1}^t \left(\left(Z_kZ_k^TW-Z_kZ_k^TWZ(\Sigma^{-1} + Z^TWZ)^{-1}Z^TW\right)z_i\right)^T\\
     &  ~~~~~~~~~                           \left(WZ_lZ_l^T-WZ(\Sigma^{-1} + Z^TWZ)^{-1}Z^TWZ_lZ_l^T\right)z_i,
\end{split}    
\end{equation}
where we use the Woodbury identity, $\frac{\partial\Psi}{\partial\theta_k} = Z_kZ_k^T$, and $z_i\sim \mathcal{N}(0,I_n)$. Linear solves $(\Sigma^{-1} + Z^TWZ)^{-1}Z^TWz_i$ and $(\Sigma^{-1} + Z^TWZ)^{-1}Z^TWZ_lZ_l^Tz_i$ in \eqref{STE_FisherI} can be computed with the CG method. In experiments, we have found that an alternative STE for the Fisher information given in Appendix \ref{STE_FisherI_alternative} leads to less accurate estimates than the STE in \eqref{STE_FisherI} (results not shown).

\subsection{Preconditioners}\label{sec_prec}
Preconditioners reduce the variance of stochastic log-determinant estimators and their derivatives and accelerate the convergence speed of the CG method. The convergence speed and accuracy of the CG and SLQ methods are determined by the spectrum of the preconditioned matrix $P^{-1/2}(\Sigma^{-1} + Z^T W Z)P^{-T/2}$; see Section \ref{sec_convergence} for more details. To practically use a matrix $P$ as a preconditioner, we need to construct it, perform linear solves with it, calculate $\log \det(P)$, and sample from $\mathcal{N}(0, P)$ in a computationally efficient manner. In the following, we describe three preconditioners that we analyze in this article. First, we consider the  diagonal preconditioner given by 
$P_\text{Diag} = \text{diag}((\Sigma^{-1} + Z^TWZ)_{ii}).$
Furthermore, we analyze the ``\textbf{s}ymmetric \textbf{s}uccessive \textbf{o}ver-\textbf{r}elaxation" (SSOR) preconditioner given by
\begin{equation}
    P_{\text{SSOR}} = (L + D) D^{-1} (L + D)^T,
\end{equation}
where $D$ is a diagonal matrix with diagonal entries $(D)_{ii} = (\Sigma^{-1} + Z^T W Z)_{ii}$, and $L$ is a strictly lower-triangular matrix with entries $(L)_{ij} = \mathbf{1}_{\{i>j\}}(\Sigma^{-1} + Z^T W Z)_{ij} = \mathbf{1}_{\{i>j\}}(Z^T W Z)_{ij}$ such that $\Sigma^{-1} + Z^T W Z = L + L^T + D$. Since $L + D$ is lower-triangular, linear solves with $P_{\text{SSOR}}$ and its log-determinant can be calculated efficiently. In addition, the \textbf{z}ero fill-in \textbf{i}ncomplete \textbf{C}holesky (ZIC) factorization described in Appendix \ref{appendix:ZIC_algo} gives the ZIC preconditioner 
$$P_{\text{ZIC}} = \tilde L \tilde L^T \approx \Sigma^{-1} + Z^T W Z,$$
where $\tilde L \in \mathbb{R}^{m\times m}$ is a sparse lower triangular matrix that has the same sparsity pattern as $\Sigma^{-1} + Z^T W Z$. Unfortunately, we sometimes observe breakdowns \citep{scott2014positive}, i.e., (clearly) negative numbers in the calculations of square roots for the ZIC preconditioner. In Section \ref{sec_convergence}, we theoretically analyze the SSOR and diagonal preconditioners, and in Section \ref{sec_simulation}, we empirically compare the different preconditioners. 

\subsection{Predictive variances}\label{section:iterativePred}
The computational bottleneck for prediction is the calculation of the posterior predictive variances $\text{diag}(\Omega_p)$ in \eqref{postpred_Laplace_cov1}. These are required for predictive distributions of the latent and response variables as well as for predictive means of non-Gaussian response variables. Calculating $\text{diag}(\Omega_p)$ requires solving $n_p$ linear equation systems with the matrix $\Sigma^{-1} + Z^TWZ$. For high-dimensional crossed random effects and large $n_p$, solving these linear systems can thus be computationally prohibitive despite the use of the CG method. %In the following, we present a stochastic estimator relying on the CG method to calculate predictive variance computationally efficiently.

% \subsubsection{Predictive variances using a stochastic estimator for the diagonal of a matrix}
Algorithm \ref{alg:approach6} presents a method for calculating predictive variances $\text{diag}(\Omega_p)$ computationally efficiently by stochastically approximating $\text{diag}(Z_{po} (\Sigma^{-1} + Z^TWZ)^{-1} Z_{po}^T)$ in \eqref{postpred_Laplace_cov2}. The latter is based on an approach of \citet{bekas2007estimator} for approximating the diagonal of a matrix $A \in \mathbb{R}^{n_p\times n_p}$ as $\text{diag}(A) \approx \frac{1}{s}\sum_{i=1}^s z_i \odot A z_i$, where $\odot$ denotes the Hadamard product, and $z_i \in \mathbb{R}^{n_p}$ are Rademacher random vectors with entries $\pm1$. Linear solves for the stochastic estimator are computed with the preconditioned CG method. We additionally apply variance reduction in Algorithm \ref{alg:approach6} by using a control variate based on $\text{diag}(Z_{po}P^{-1}Z_{po}^T)$: 
\begin{equation*}
\begin{split}
    \text{diag}(Z_{po} (\Sigma^{-1} + Z^TWZ)^{-1} Z_{po}^T) 
    \approx 
    & \; c \odot \text{diag}(Z_{po}P^{-1}Z_{po}^T) \\
    & + \frac{1}{t}\sum_{i=1}^t \underbrace{z_i \odot Z_{po} (\Sigma^{-1} + Z^TWZ)^{-1} Z_{po}^Tz_i}_\text{=:$h(z_i)$} -c \odot \underbrace{z_i \odot Z_{po}P^{-1}Z_{po}^Tz_i}_\text{=:$r(z_i)$},
\end{split}
\end{equation*}
where $c \in \mathbb{R}^{n_p}$ are the optimal weights for the variance reduction given by $c_j=\widehat{\text{Cov}}([h(z_i)]_j,[r(z_i)]_j) / \widehat{\text{Var}}([r(z_i)]_j)$, for $j=1,2,\dots,n_p$. This variance reduction is important for this stochastic estimator, as otherwise the variance of the stochastic approximation is much larger (results not shown). Algorithm \ref{alg:approach6} can be trivially parallelized and results in an unbiased and consistent approximation for $\text{diag}(\Omega_p)$; see Appendix \ref{sec_prop_proofs} for a proof of Proposition \ref{prop_approach6}. Assuming that $P^{-\frac{1}{2}}Z_{po}^T$ is precomputed, $P=P^{\frac{1}{2}}P^{\frac{T}{2}}$, the computational complexity of the algorithm is $O(sn_p + slm + s\psi)$, where $l$ denotes the number of CG iterations, $s$ the number of simulation iterations, and $\psi$ the number of non-zero entries in $P^{-\frac{1}{2}}Z_{po}^T$. For instance, for a two-way crossed random-intercept model, we have $\psi\leq 2 n_p$ for the diagonal preconditioner, and $\psi\leq n_p(2 + \max_i(Z^T Z)_{ii})$ for the SSOR preconditioner. Note that \citet{gyger2024iterative} also use an approach based on \citet{bekas2007estimator} for predictive covariances in Gaussian process regression.
%time complexity
% line 3: O(slm + s n_p)
% line 4: O(s n_p m) for given Z_po P^{-T/2} which is no longer n_p K sparse
% line 4 - direct: O(s n_p + s m) where m holds for sparse P^{1/2}, otherwise m^2
% before loop: calculating P^{-1/2} requires m forward solves of O(m) if P^{1/2} is sparse (max nnz(P^{1/2}) < K * m ) => O(m^2)
% before loop: calculating Z_po P^{-T/2} costs O(n_p m)
\begin{algorithm}[ht!]
    \caption{Predictive variances using stochastic estimator for the diagonal of a matrix}
    \label{alg:approach6}
    \begin{algorithmic}[1]
        \Require Matrices $Z_{po}, Z, \Sigma, Z_{pp}, \Sigma_p$, and $W$
        \Ensure Approximate predictive variances $\text{diag}(\hat{\Omega}_p)$
        \For{$i \gets 1$ to $s$}
            \State{Sample $z_i^{(1)}\overset{\text{i.i.d.}}{\sim} \text{Rademacher}$, $z_i^{(1)} \in \mathbb{R}^{n_p}$}
            \State{$z_i^{(2)} \gets Z_{po} (\Sigma^{-1} + Z^TWZ)^{-1} Z_{po}^T z_i^{(1)}$}
            \State{$z_i^{(3)} \gets Z_{po}P^{-1}Z_{po}^T  z_i^{(1)}$}
        \EndFor
        \State{$c_j \gets \widehat{\text{Cov}}((z_i^{(1)} \odot z_i^{(2)})_j,(z_i^{(1)} \odot z_i^{(3)})_j) / \widehat{\text{Var}}((z_i^{(1)} \odot z_i^{(3)})_j)$, $j=1,2,\dots,n_p$, $c=(c_1,\dots,c_{n_p})^T$}
        \State{$\text{diag}(\hat{\Omega}_p) \gets 
        \text{diag}(Z_{pp} \Sigma_p Z_{pp}^T)
        + c \odot \text{diag}(Z_{po}P^{-1}Z_{po}^T)
        + \frac{1}{s}\sum_{i=1}^s (z_i^{(1)} \odot z_i^{(2)}
        - c \odot z_i^{(1)} \odot z_i^{(3)})$}
    \end{algorithmic}
\end{algorithm}

\begin{proposition}\label{prop_approach6}
Algorithm \ref{alg:approach6} produces an unbiased and consistent estimator $\text{diag}(\hat{\Omega}_p)$ for the predictive variances $\text{diag}(\Omega_p)$ given in \eqref{postpred_Laplace_cov2}.
\end{proposition}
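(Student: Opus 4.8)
The plan is to reduce the claim to a statement about the randomized term alone and then combine the standard unbiasedness of the diagonal estimator of \citet{bekas2007estimator} with a law-of-large-numbers argument for the data-driven control-variate weights. First I would split the target according to \eqref{postpred_Laplace_cov2}, writing $A := Z_{po}(\Sigma^{-1}+Z^TWZ)^{-1}Z_{po}^T$ and $B := Z_{po}P^{-1}Z_{po}^T$. The term $\text{diag}(Z_{pp}\Sigma_p Z_{pp}^T)$ and the quantity $\text{diag}(B)$ are computed exactly in Algorithm \ref{alg:approach6}, so it suffices to show that the remaining randomized part is unbiased and consistent for $\text{diag}(A)$; adding the exact term then yields the result for $\text{diag}(\Omega_p)$. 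Throughout I would treat the linear solves $z_i^{(2)} = A z_i^{(1)}$ as exact, so that the only randomness is that of the Rademacher probes $z_i^{(1)}$; the CG truncation error is a separate issue from the stochastic-estimation argument analyzed here.

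For unbiasedness, the key fact is that for i.i.d.\ Rademacher entries $\mathbb{E}[z_i^{(1)}(z_i^{(1)})^T]=I_{n_p}$. A direct component-wise computation then gives, for any matrix $M$, $\mathbb{E}[(z_i^{(1)}\odot M z_i^{(1)})_j] = \sum_k M_{jk}\,\mathbb{E}[z_{ij}^{(1)}z_{ik}^{(1)}] = M_{jj}$, so that $\mathbb{E}[h(z_i)] = \text{diag}(A)$ and $\mathbb{E}[r(z_i)] = \text{diag}(B)$. Consequently, for any \emph{fixed} weight vector $c$, the control-variate correction $\tfrac1s\sum_i(h(z_i)-c\odot r(z_i)) + c\odot\text{diag}(B)$ has expectation $\text{diag}(A) - c\odot\text{diag}(B) + c\odot\text{diag}(B) = \text{diag}(A)$, which establishes exact unbiasedness for the fixed-coefficient version of the estimator.

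For consistency, I would apply the strong law of large numbers to the i.i.d., bounded summands: $\tfrac1s\sum_i h(z_i)\to\text{diag}(A)$ and $\tfrac1s\sum_i r(z_i)\to\text{diag}(B)$ almost surely, and likewise the sample covariances and variances converge to their population counterparts, so that the estimated weights satisfy $c_j\to c_j^\ast := \text{Cov}([h(z_i)]_j,[r(z_i)]_j)/\text{Var}([r(z_i)]_j)$ almost surely (using that the denominator $\widehat{\text{Var}}([r(z_i)]_j)$ is positive in the limit). A continuous-mapping/Slutsky argument applied to $\bar h_j - c_j(\bar r_j - B_{jj})$ then gives convergence to $A_{jj} - c_j^\ast(B_{jj}-B_{jj}) = A_{jj}$ almost surely, i.e.\ the full estimator converges to $\text{diag}(A)$ and hence $\text{diag}(\hat\Omega_p)\to\text{diag}(\Omega_p)$.

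The main obstacle is the coupling introduced by estimating $c$ from the \emph{same} probes used in the sample means $\bar h$ and $\bar r$: because $c_j$ is a ratio of sample moments correlated with those means, the plug-in estimator is only unbiased up to an $O(1/s)$ term, not exactly. I would resolve this by phrasing the unbiasedness claim for fixed (e.g.\ precomputed or deterministic) weights, where the mean-zero property of the control variate gives exact unbiasedness, while relegating the handling of the data-driven weights to the consistency argument, which is robust to this coupling. The remaining technical care is ensuring that the denominator of $c_j$ is almost surely nonzero in the limit (so the ratio is well defined) and that the summands have finite variance, both of which follow from boundedness of the Rademacher probes and positive definiteness of the relevant matrices.
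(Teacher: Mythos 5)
Your proof is correct and, at its core, follows the same route as the paper's own argument, which is a terse appeal to ``standard results'': namely that $\frac{1}{s}\sum_{i=1}^s z_i^{(1)}\odot z_i^{(2)}$ and $\frac{1}{s}\sum_{i=1}^s z_i^{(1)}\odot z_i^{(3)}$ are unbiased and consistent for $\text{diag}(Z_{po}(\Sigma^{-1}+Z^TWZ)^{-1}Z_{po}^T)$ and $\text{diag}(Z_{po}P^{-1}Z_{po}^T)$ respectively, after which the control-variate terms cancel in expectation and the deterministic term $\text{diag}(Z_{pp}\Sigma_p Z_{pp}^T)$ is added back. Where you go beyond the paper is in making the Rademacher computation $\mathbb{E}[(z\odot Mz)_j]=M_{jj}$ explicit and, more importantly, in confronting the fact that Algorithm \ref{alg:approach6} estimates the weights $c$ from the \emph{same} probes used in the sample means. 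The paper's proof silently treats $c$ as fixed; as you correctly observe, with data-driven $c$ the correction $c\odot\text{diag}(B)-c\odot\bar r$ (in your notation $B=Z_{po}P^{-1}Z_{po}^T$) is no longer exactly mean-zero, since $\hat c$ and $\bar r$ are correlated, leaving an $O(1/s)$ bias; hence exact unbiasedness as stated in Proposition \ref{prop_approach6} holds only for the fixed-weight version, while consistency survives the coupling via the strong law of large numbers and a continuous-mapping/Slutsky step, exactly as you argue. This is a genuine refinement of the paper's proof rather than a different method. One loose end on your side: positive definiteness alone does not guarantee $\text{Var}([r(z_i)]_j)>0$ --- if row $j$ of $B$ has no nonzero off-diagonal entries, then $[r(z_i)]_j = B_{jj}$ is deterministic and $c_j$ becomes a $0/0$ expression --- but in that degenerate case the control variate contributes nothing for that coordinate (one may set $c_j=0$), and the plain estimator of that coordinate is already unbiased and consistent, so the conclusion is unaffected.
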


We have investigated several other algorithms for predictive variances reported in Appendix \ref{app_pred_var}. In particular Algorithm \ref{alg:approach5} in Appendix \ref{app_pred_var} presents an approach that allows for approximating predictive covariances $\Omega_p$ by sampling from a Gaussian distribution with covariance matrix $Z_{po} (\Sigma^{-1} + Z^TWZ)^{-1} Z_{po}^T$. However, none of these alternative methods resulted in more accurate estimators (see Section \ref{prec_compare}). 

\subsection{Software implementation}
The methods presented in this article are implemented in the C++ library \texttt{GPBoost}, with accompanying Python and R packages\if1\blind{, see \url{https://github.com/fabsig/GPBoost}}\fi.%\footnote{Krylov subspace methods can be enabled via the parameter \texttt{matrix\_inversion\_method = "iterative"} and the preconditioner is chosen via the parameter \texttt{cg\_preconditioner\_type}.} %For linear algebra calculations, we use the \texttt{Eigen} library version 3.4.99 and its sparse matrix algebra operations whenever possible. Multi-processor parallelization is done using \texttt{OpenMP}.

\section{Convergence analysis}\label{sec_convergence}
We theoretically analyze the properties of the Krylov subspace methods presented in this article. Denote by $\lambda_{\min}=\lambda_m\leq \dots\leq \lambda_1 = \lambda_{\max}$ the eigenvalues of a symmetric matrix $\vecmat{A}\in \mathbb{R}^{m\times m}$. The spectrum of the preconditioned matrix $P^{-1/2}(\Sigma^{-1} + Z^T W Z)P^{-T/2}$ affects the CG and SLQ methods in different ways. For the CG method, a large condition number does not necessarily imply slow convergence; what matters is whether all but a few eigenvalues are clustered, leading to a small effective condition number $\kappa_{m-l,k}=\lambda_{k}/\lambda_{m-l}$ \citep[e.g.,][]{van2003iterative, nishimura2022prior}. For SLQ-based log-determinant estimation, the extremal eigenvalues are more directly relevant, both through available error bounds and through the variance of stochastic trace estimators as described below. We therefore analyze the extremal eigenvalues for the accuracy of SLQ-approximated log-determinants in Section \ref{slq_theory} and effective condition numbers for the convergence of the CG method in Section \ref{cg_theory}.

The Lanczos algorithm factorizes a symmetric matrix as $\vecmat{A} = \vecmat{Q}\vecmat{T}\vecmat{Q}^T$, where $\vecmat{Q} \in \mathbb{R}^{m\times m}$ is orthonormal, and $\vecmat{T}\in \mathbb{R}^{m\times m}$ is a tridiagonal matrix. This decomposition is computed iteratively, and $l$ iterations result in an approximation $\vecmat{A} \approx \tilde{\vecmat{Q}}\tilde{\vecmat{T}} \tilde{\vecmat{Q}}^{T}$, where $\tilde{\vecmat{Q}} \in \mathbb{R}^{m\times l}$ contains the first $l$ columns of $\vecmat{Q}$, and $\tilde{\vecmat{T}} \in \mathbb{R}^{l\times l}$ the corresponding coefficients of $\vecmat{T}$. In brief, SLQ is based on $\log\det(\vecmat{A}) = \tr\log(\vecmat{A})$, $\log(\vecmat{A}) \approx \tilde{\vecmat{Q}}\log(\tilde{\vecmat{T}}) \tilde{\vecmat{Q}}^{T}$, and stochastic trace estimation. The logarithm function used in $\log(\vecmat{A}) \approx \tilde{\vecmat{Q}}\log(\tilde{\vecmat{T}}) \tilde{\vecmat{Q}}^{T}$ is strongly affected by the extremal eigenvalues of $A=P^{-1/2}(\Sigma^{-1} + Z^T W Z)P^{-T/2}$. It is known that the Lanczos algorithm approximates the smallest and largest eigenvalues $\lambda_m$ and $\lambda_1$ the faster, the smaller $\lambda_1 - \lambda_m$ and the larger $(\lambda_{m-1} - \lambda_m)/(\lambda_1 - \lambda_{m-1})$ and $(\lambda_{1} - \lambda_2)/(\lambda_2 - \lambda_{m})$, respectively \citep[][Theorem 10.1.2, Corollary 10.1.3]{golub2013matrix}. Similar results can also be obtained for other eigenvalues. Moreover, even if the Lanczos approximation were exact, stochastic trace estimators have high variance when the spectrum of $\log(\vecmat{A})$ is poorly conditioned. The extremal eigenvalues $\lambda_{m}$ and $\lambda_{1}$ are thus expected to be important for the accuracy of an SLQ-approximated log-determinant. Besides, there is convergence theory for SLQ-approximated log-determinants with stochastic error bounds that depend on the condition number $\kappa = \lambda_1/\lambda_m$ of $P^{-1/2}(\Sigma^{-1} + Z^T W Z)P^{-T/2}$. In Theorem \ref{acc_SLQ}, we restate Theorem 3.2 of \citet{kundig2024iterative} for mixed-effects models to illustrate why extremal eigenvalues can affect SLQ approximations. %We are not aware of any convergence result that relates an effective condition number $\kappa_{m-l,k}=\lambda_{k}/\lambda_{m-l}$, $l,k>1$, to the approximation error of an SLQ-approximated log-determinant.
\begin{theorem}\label{acc_SLQ}
    Let $\kappa = \lambda_1/\lambda_m$ denote the condition number of $P^{-1/2}(\Sigma^{-1} + Z^T W Z)P^{-T/2}$, $C_{mt} = \frac{1}{mt}Q_{\chi^2_{mt}}({1-\eta/2})$, where $Q_{\chi^2_{mt}}(\cdot)$ is the quantile function of a $\chi^2-$distribution with $mt$ degrees of freedom, and $\widehat \Gamma = \frac{1}{t} \sum_{i=1}^t\|\vecmat{P}^{-\frac{1}{2}}\vecmat{z}_i\|_2^2 \vecmat{e}_1^T \log(\tilde{\vecmat{T}}_i) \vecmat{e}_1$ an SLQ approximation. If the SLQ method is run with $l\geq \frac{\sqrt{3\kappa}}{4}\log\left( \frac{C_{mt}20\log(2(\kappa+1))\sqrt{2\kappa+1}}{\epsilon}\right)$ preconditioned CG steps and $t\geq \frac{32}{\epsilon^2} \log(\kappa + 1)^2\log\left(\frac{4}{\eta}\right)$ number of random vectors, the following holds for any preconditioner $\vecmat{P}$:
\begin{equation*}\label{bound_slq}
    P(|\widehat \Gamma - \log\det(P^{-1/2}(\Sigma^{-1} + Z^T W Z)P^{-T/2})|\leq \epsilon m) \geq 1 - \eta,~~~~\epsilon, \eta \in (0,1).
\end{equation*}
 % where $\epsilon, \eta \in (0,1)$.
\end{theorem}
% In line with this, we find in our experiments in Section \ref{sec_simulation} that preconditioners for which the difference between the largest and smallest eigenvalues is small, such as the SSOR preconditioner, result in much more accurate log-determinant approximations compared to the diagonal preconditioner for which this difference is larger. 

\subsection{Extremal eigenvalues and convergence of the SLQ method}\label{slq_theory}
Motivated by the SLQ error bound in Theorem \ref{acc_SLQ} and by the role of extremal eigenvalues in SLQ-approximated log-determinants outlined above, we next analyze the extremal eigenvalues $\lambda_{\min}$ and $\lambda_{\max}$ and the condition numbers $\lambda_{\max}/\lambda_{\min}$ of the preconditioned matrices $P^{-1/2}(\Sigma^{-1} + Z^T W Z)P^{-T/2}$ for the SSOR and diagonal preconditioners and the unpreconditioned matrix $\Sigma^{-1} + Z^T W Z$. We use the following notation.

\begin{notation}
We denote by $\lambda_{\min}^{\text{P}}=\lambda_m^{\text{P}}\leq \dots\leq \lambda_1^{\text{P}} =\lambda_{\max}^{\text{P}}$ the eigenvalues of $P^{-1/2}(\Sigma^{-1} + Z^T W Z)P^{-T/2}$ for a preconditioner $P$. The eigenvalues of the unpreconditioned matrix $\Sigma^{-1} + Z^T W Z$ are denoted by $\lambda_{\min}^{\text{none}}=\lambda_m^{\text{none}}\leq \dots\leq \lambda_1^{\text{none}} =\lambda_{\max}^{\text{none}}$. Furthermore, let $d_{\max} = \max_i(Z^T Z)_{ii}$ and $d_{\min} = \min_i(Z^T Z)_{ii}$ denote the largest and smallest number of occurrences per random effect, and $\sigma^2_{\min} = \min(\sigma^2_1,\dots,\sigma^2_K)~~~~\text{and} ~~~~\sigma^2_{\max} = \max(\sigma^2_1,\dots,\sigma^2_K).$ We call a random effects design ``balanced" if every random effect $(b_k)_j\in\mathbb{R}$, $j=1,\dots,m_k$, $k=1,\dots K$, occurs exactly $d$ times, i.e., $d = n/m_k = \sum_{i=1}^n (Z_k)_{ij}$ for all $1\leq j \leq m_k$ and $k=1,\dots,K$.
\end{notation}

In Lemmas \ref{EVs_SSOR} and \ref{EVs_Diag_no_PC}, we first derive bounds for the extremal eigenvalues in the general case including Gaussian and non-Gaussian likelihoods. Proofs of Lemmas \ref{EVs_SSOR} and \ref{EVs_Diag_no_PC} can be found in Appendix \ref{sec_proofs}. Based on these results, we derive in Theorem \ref{cond_number_gaussian} upper and lower bounds for the condition numbers if $K=2$ and the likelihood is Gaussian. If the random effects design is balanced, the condition numbers for the SSOR and diagonal preconditioners are known exactly. Theorem \ref{cond_number_gaussian} also analyzes and compares theses condition numbers asymptotically for large $d_{\max}$ and $d$, i.e., when the matrix $\Sigma^{-1} + Z^T W Z$ is usually less sparse and preconditioners are thus even more important given that computations per CG iteration are more expensive. Furthermore, in Theorem \ref{cond_number_binary}, we derive upper bounds for the condition numbers for Bernoulli likelihoods.

\begin{lemma}\label{EVs_SSOR}
If $K=2$, the following hold true for the SSOR preconditioner:
\begin{align}
 % & P_{\text{SSOR}}^{-1/2}(\Sigma^{-1} + Z^T W Z)P_{\text{SSOR}}^{-T/2} = I_m - \begin{pmatrix}
 %           0       & 0\\
 %            0 & D_2^{-1/2}Z_2^T W Z_1 D_1^{-1}Z_1^T W Z_2 D_2^{-1/2}
 %        \end{pmatrix},\label{pced_ssor}\\
\lambda_{m_1}^{\text{SSOR}} = \dots =\lambda_1^{\text{SSOR}} &=\lambda_{\max}^{\text{SSOR}} = 1,\label{large_ev_ssor}\\
 1 - \frac{1}{(\min(\Sigma^{-1}_{ii}(Z^TWZ)_{ii}^{-1}) + 1)^2}  &\leq  \lambda_{\min}^{\text{SSOR}} \leq  1 - \frac{1}{(\max(\Sigma^{-1}_{ii}(Z^TWZ)_{ii}^{-1}) + 1)^2}, \label{smallest_ev_ssor}\\
 1 - \frac{\max((Z^TWZ)_{ii})^2}{\min((D_1)_{ii})\min((D_2)_{ii})}  &\leq \lambda_{\min}^{\text{SSOR}} \leq  1 - \frac{\min((Z^TWZ)_{ii})^2}{\max((D_1)_{ii})\max((D_2)_{ii})}, \label{smallest_ev_ssor2}
\end{align}
where $D_1 = \text{diag}((\Sigma^{-1} + Z^T W Z)_{ii}) \in \mathbb{R}^{m_1\times m_1}$ and $D_2 = \text{diag}((\Sigma^{-1} + Z^T W Z)_{ii}) \in \mathbb{R}^{(m-m_1)\times (m-m_1)}$.
\end{lemma}

\begin{lemma}\label{EVs_Diag_no_PC}
For the diagonal preconditioner $P_{\text{Diag}} = D$, it holds that
\begin{align}
 1 + \frac{K-1}{\max(\Sigma_{ii}^{-1}(Z^TWZ)_{ii}^{-1})+1} &\leq  \lambda_{\max}^{\text{Diag}} \leq 1 + \frac{K-1}{\min(\Sigma_{ii}^{-1}(Z^TWZ)_{ii}^{-1})+1},\label{largest_ev_diag} \\
 \frac{1}{1 + \max(\Sigma_{ii}(Z^TWZ)_{ii})} &\leq  \lambda_{m+1-k}^{\text{Diag}} \leq \frac{1}{1 + \min(\Sigma_{ii}(Z^TWZ)_{ii})}, ~~ k=1,\dots,K-1.\label{smallest_ev_diag}
\end{align}
Moreover, the following hold true for the unpreconditioned matrix $\Sigma^{-1} + Z^T W Z$:
\begin{align}
\frac{1}{\max(\Sigma_{ii})} + K \min((Z^TWZ)_{ii}) &\leq  \lambda_{\max}^{\text{none}} \leq \frac{1}{\min(\Sigma_{ii})} + K \max((Z^TWZ)_{ii}),\label{largest_ev_no_pc} \\
\frac{1}{\max(\Sigma_{ii})} &\leq  \lambda_{m+1-k}^{\text{none}} \leq \frac{1}{\min(\Sigma_{ii})}, ~~ k=1,\dots,K-1.\label{smallest_ev_no_pc}
\end{align}
\end{lemma}

\begin{theorem}\label{cond_number_gaussian}
If $K=2$ and the likelihood is Gaussian, it holds that
\begin{align}
\frac{1}{1 - \left(\frac{\sigma^2_{\min}d_{\min}}{\sigma^2+\sigma^2_{\min}d_{\min}}\right)^2} &\leq \frac{\lambda_{\max}^{\text{SSOR}}}{\lambda_{\min}^{\text{SSOR}}} \leq \frac{1}{1 - \left(\frac{\sigma^2_{\max}d_{\max}}{\sigma^2+\sigma^2_{\max}d_{max}}\right)^2} = \frac{1}{2}\frac{\sigma^2_{\max}d_{\max}}{\sigma^2}  + \frac{3}{4} + o(1) ~~ ( d_{\max}\to\infty)  \label{cond_numb_ssor_gauss_non_balanced},\\
2\frac{\sigma^2_{\min}d_{\min}}{\sigma^2} + 1 &\leq \frac{\lambda_{\max}^{\text{Diag}}}{\lambda_{\min}^{\text{Diag}}} \leq 2\frac{\sigma^2_{\max}d_{\max}}{\sigma^2} + 1\label{cond_numb_diag_gauss_non_balanced},\\
2\frac{\sigma^2_{\min}d_{\min}}{\sigma^2} + \frac{\sigma^2_{\min}}{\sigma^2_{\max}} &\leq \frac{\lambda_{\max}^{\text{none}}}{\lambda_{\min}^{\text{none}}} \leq 2\frac{\sigma^2_{\max}d_{\max}}{\sigma^2} + \frac{\sigma^2_{\max}}{\sigma^2_{\min}}.\label{none_cond_number}
\end{align}

If, in addition, the design of the random effects is balanced, the following hold:
\begin{align}
 \frac{\lambda_{\max}^{\text{SSOR}}}{\lambda_{\min}^{\text{SSOR}}} &= \frac{1}{1 - \left(\frac{1}{\frac{\sigma^2}{\sigma^2_1d}+1}\right)\left(\frac{1}{\frac{\sigma^2}{\sigma^2_2d}+1}\right)} = \frac{\sigma_1^2 \sigma_2^2}{\sigma^2(\sigma_1^2+\sigma_2^2)}\, d + 1 + o(1) ~~ ( d\to\infty),\label{cond_number_ssor_balanced}\\
    % \frac{\lambda_{\max}^{\text{Diag}}}{\lambda_{\min}^{\text{Diag}}} &= \frac{1 + \frac{1}{\sqrt{\left(\frac{\sigma^2}{\sigma_1^2d}+ 1\right)\left(\frac{\sigma^2}{\sigma_2^2d}+ 1\right)}}}{ 1 - \frac{1}{\sqrt{\left(\frac{\sigma^2}{\sigma_1^2d}+ 1\right)\left(\frac{\sigma^2}{\sigma_2^2d}+ 1\right)}}}\\
        \frac{\lambda_{\max}^{\text{Diag}}}{\lambda_{\min}^{\text{Diag}}} &= \frac{\sqrt{\left(\frac{\sigma^2}{\sigma_1^2d}+ 1\right)\left(\frac{\sigma^2}{\sigma_2^2d}+ 1\right)}+1}{\sqrt{\left(\frac{\sigma^2}{\sigma_1^2d}+ 1\right)\left(\frac{\sigma^2}{\sigma_2^2d}+ 1\right)}-1} = \frac{4\sigma^2_1\sigma^2_2}{\sigma^2(\sigma^2_1+\sigma^2_2)}d + 1 + o(1) ~~ ( d\to\infty). \label{cond_number_diag_balanced}
\end{align}
\end{theorem}
A proof of Theorem \ref{cond_number_gaussian} can be found in Appendix \ref{sec_proofs}. This theorem has the following implications. First, Theorem \ref{cond_number_gaussian} shows that the condition numbers $\lambda_{\max}^{\text{P}}/\lambda_{\min}^{\text{P}}$ of the SSOR and diagonal preconditioners as well as the unpreconditioned matrix do not directly depend on neither the sample size $n$ nor the dimension of the random effects $m$ but rather only on (i) the number of occurrences per random effect $(Z^TZ)_{ii}$ (=$d=n/m_1=n/m_2$ when having a balanced design) and (ii) the signal-to-noise ratios $\sigma^2_k/\sigma^2$ (and $\sigma^2_{\max}/\sigma^2_{\min}$ when no preconditioning is used). Specifically, we find that the condition numbers are increasing in the number of repeated occurrences per random effect and the signal-to-noise ratios. Concerning the former, the theorem shows that the condition numbers $\lambda_{\max}^{\text{P}}/\lambda_{\min}^{\text{P}}$ of the SSOR and diagonal preconditioners both grow asymptotically linearly in $d$ when having a balanced random effects design. The SSOR preconditioner improves the leading constant in this full-condition-number comparison; in the balanced case, the leading term for the diagonal preconditioner is four times larger than that for SSOR; see \eqref{cond_number_ssor_balanced} and \eqref{cond_number_diag_balanced}. Regarding the signal-to-noise ratios, similar arguments as in the derivations of the asymptotic results in \eqref{cond_number_ssor_balanced} and \eqref{cond_number_diag_balanced} can be made that $\lambda_{\max}^{\text{SSOR}}/\lambda_{\min}^{\text{SSOR}}$ and $\lambda_{\max}^{\text{Diag}}/\lambda_{\min}^{\text{Diag}}$ both have linear asymptotes in the ``average" signal-to-noise ratio $2\sigma^2_1\sigma^2_2/(\sigma^2(\sigma^2_1+\sigma^2_2))$, and the SSOR preconditioner improves the leading constant relative to the diagonal preconditioner.

%However, $\lambda_{\max}^{\text{Diag}}/\lambda_{\min}^{\text{Diag}}$ asymptotically grows four times faster in $d$ than $\lambda_{\max}^{\text{SSOR}}/\lambda_{\min}^{\text{SSOR}}$ ; see \eqref{cond_number_ssor_balanced} and \eqref{cond_number_diag_balanced}. Regarding the signal-to-noise ratios, similar arguments as in the derivations of the asymptotic results in \eqref{cond_number_ssor_balanced} and \eqref{cond_number_diag_balanced} can be made that $\lambda_{\max}^{\text{SSOR}}/\lambda_{\min}^{\text{SSOR}}$ and $\lambda_{\max}^{\text{Diag}}/\lambda_{\min}^{\text{Diag}}$ both have linear asymptotes in the ``average" signal-to-noise ratio $2\sigma^2_1\sigma^2_2/(\sigma^2(\sigma^2_1+\sigma^2_2))$. Again, $\lambda_{\max}^{\text{SSOR}}/\lambda_{\min}^{\text{SSOR}}$ grows four times slower in this signal-to-noise ratio compared to $\lambda_{\max}^{\text{Diag}}/\lambda_{\min}^{\text{Diag}}$. Analogous arguments concerning $d_{\max}$ and the signal-to-noise ratio can be made for the upper bounds of the condition numbers for general, non-balanced designs.

Moreover, Theorem \ref{cond_number_gaussian} implies $\lambda_{\max}^{\text{SSOR}} < \lambda_{\max}^{\text{Diag}}$ and $\lambda_{\min}^{\text{SSOR}} > \lambda_{\min}^{\text{Diag}}.$ This holds for all random effects designs and parameter choices. For the balanced design, we can additionally obtain an explicit expression for the condition number of the unpreconditioned matrix $\Sigma^{-1} + Z^T W Z$ if the two random effects have equal variances since the inequalities in \eqref{none_cond_number} then become equalities. In this special case, the condition numbers of the unpreconditioned and the diagonal-preconditioned matrix are equal and given by $2\sigma_1^2d/\sigma^2 + 1$.%, whereas the condition number of the SSOR-preconditioned matrix asymptotically grows linearly with a rate of $\sigma^2_1/(2\sigma^2)$ in $d$. %We conjecture that similar results also hold for the general case $\sigma^2_1 \neq \sigma^2_2$.

Next, in Theorem \ref{cond_number_binary}, we derive upper bounds for the condition numbers of $P^{-1/2}(\Sigma^{-1} + Z^T W Z)P^{-T/2}$ for a Bernoulli likelihood with logit and probit links used in binary classification. Similarly as for a Gaussian likelihood, the upper bounds for all three preconditioners grow linearly in (i) the maximal number of repeated random effects occurrences $d_{\max}$ and (ii) the maximal random effects variance $\sigma^2_{\max}$. Compared to no preconditioning and the diagonal preconditioner, the SSOR preconditioner improves the leading constant. A proof of Theorem \ref{cond_number_binary} can be found in Appendix \ref{sec_proofs}. 

% for the condition number $\lambda_{\max}^{\text{SSOR}}/\lambda_{\min}^{\text{SSOR}}$ of the SSOR preconditioner given in \eqref{cond_numb_ssor_binary_non_balanced} is smaller and grows four times slower in both (i) the maximal number of repeated random effects occurrences $d_{\max}$ and (ii) the maximal random effects variance $\sigma^2_{\max}$ compared to the diagonal preconditioner and the unpreconditioned matrix given in \eqref{cond_numb_diag_binary_non_balanced} and \eqref{none_cond_number_binary}, respectively. 

\begin{theorem}\label{cond_number_binary}
If $K=2$ and the likelihood is Bernoulli with a logit or probit link, it holds that
\begin{align}
\frac{\lambda_{\max}^{\text{SSOR}}}{\lambda_{\min}^{\text{SSOR}}} &\leq \frac{1}{1 - \left(\frac{\sigma^2_{\max}d_{\max}C}{1+\sigma^2_{\max}d_{\max}C}\right)^2} = \frac{1}{2}\sigma^2_{\max}d_{\max}C  + \frac{3}{4} + o(1) ~~ ( d_{\max}\to\infty) \label{cond_numb_ssor_binary_non_balanced},\\
\frac{\lambda_{\max}^{\text{Diag}}}{\lambda_{\min}^{\text{Diag}}} &\leq 2\sigma^2_{\max}d_{\max}C + 1\label{cond_numb_diag_binary_non_balanced},\\
\frac{\lambda_{\max}^{\text{none}}}{\lambda_{\min}^{\text{none}}} &\leq 2\sigma^2_{\max}d_{\max}C + \frac{\sigma^2_{\max}}{\sigma^2_{\min}},\label{none_cond_number_binary}
\end{align}
where $C$ equals $0.25$ and $1$ for a logit and a probit link, respectively.
\end{theorem}

In summary, the results in Theorems \ref{acc_SLQ}, \ref{cond_number_gaussian}, and \ref{cond_number_binary}, imply that the SSOR preconditioner is expected to yield more accurate SLQ-approximated log-determinants compared to no preconditioning and to the diagonal preconditioner for both Gaussian and non-Gaussian likelihoods. These findings are empirically confirmed in the experiments in Section \ref{sec_simulation}. In the next section, we address the convergence of the CG method by analyzing effective condition numbers.
% In particular, this holds in challenging cases when the number of repeated random effect occurrences are large, and thus $\Sigma^{-1} + Z^T W Z$ is less sparse, and when the signal-to-noise ratios $\sigma^2_k/\sigma^2$ are large. 

\subsection{Effective condition numbers and convergence of the CG method}\label{cg_theory}
Under additional assumptions on the random effects design, Theorem \ref{second_smallest_EV_SSOR_gaussian} gives a stochastic upper bound for the effective condition number \(\kappa_{m-1,1}^{\mathrm{SSOR}}=\lambda_{1}^{\mathrm{SSOR}}/\lambda_{m-1}^{\mathrm{SSOR}}\) of the SSOR-preconditioned matrix, obtained after removing the smallest eigenvalue. \citet{pandolfi2024conjugate} show that diagonal preconditioning can yield fast CG convergence because, after removing a fixed number of outliers, the relevant eigenvalues cluster and the resulting effective condition number is small. Theorem \ref{second_smallest_EV_SSOR_gaussian} is complementary: under the same random bipartite design assumptions, the SSOR preconditioner yields a sharper clustering result. In the balanced case, Corollary \ref{eff_cond_SSOR_Diag_gaussian_balanced} shows that the SSOR-preconditioned effective condition number approaches one at rate \(O(d^{-1})\), whereas the diagonal-preconditioned effective condition number approaches one at rate \(O(d^{-1/2})\). Thus, the practical implication for the CG method is not merely a constant-factor improvement in a full condition number, but faster concentration of the relevant spectrum and, consequently, faster convergence in this design regime. The improvement can be understood from the block structure of two-way crossed random effects. After diagonal preconditioning, the off-diagonal block is essentially a normalized bipartite adjacency matrix. Its nontrivial singular values control the effective condition number, leading to terms of order \(d^{-1/2}\) under random biregular designs. In contrast, SSOR preconditioning transforms the relevant nontrivial eigenvalues so that they depend on squared singular values of this normalized adjacency block. Consequently, the leading spectral error is of order \(d^{-1}\), explaining the faster clustering in Corollary \ref{eff_cond_SSOR_Diag_gaussian_balanced}.

\begin{theorem}\label{second_smallest_EV_SSOR_gaussian}
If $K=2$, the likelihood is Gaussian, and $\begin{pmatrix} 0  & Z_1^T Z_2\\  Z_2^T Z_1 & 0 \end{pmatrix} \in \{0,1\}^{m\times m}$ is an adjacency matrix of a random bipartite, biregular graph with uniform distribution over all bipartite, biregular graphs, the following holds:
\begin{equation}\label{cond_second_smalles_EV_SSOR}
\kappa_{m-1,1}^{\text{SSOR}}=\frac{\lambda_{1}^{\text{SSOR}}}{\lambda_{m-1}^{\text{SSOR}}} \leq \frac{1}{ 1 - \left(\frac{1}{d_1} + \frac{1}{d_2} + \frac{2}{\sqrt{d_1d_2}}\right) - \epsilon_m}
\end{equation}
asymptotically almost surely with $\epsilon_m\rightarrow 0$ as $m\rightarrow \infty$, where $d_k = \frac{n}{m_k}$ for $k=1,2$.
\end{theorem}

%This theorem shows that $\kappa_{m-1,1}^{\text{SSOR}}$ decreases fast to one as the number of occurrences per random effect $d_k$ increases. 
The assumptions of Theorem \ref{second_smallest_EV_SSOR_gaussian} imply that the random effects design is balanced, with every random effect in component \(k\) occurring \(d_k=n/m_k\) times, and that pairs of levels from the two random-effect components co-occur at most once, i.e., \(Z_2^\top Z_1\in\{0,1\}^{m_2\times m_1}\). A proof is given in Appendix \ref{sec_proofs}. For comparison, under the same design assumptions, the corresponding diagonal-preconditioned effective condition number satisfies
\begin{equation}\label{eff_cond_diag}
    \kappa_{m-1,2}^{\text{Diag}}=\frac{\lambda_{2}^{\text{Diag}}}{\lambda_{m-1}^{\text{Diag}}} \leq \frac{ 1 + \frac{1}{\sqrt{d_1}} + \frac{1}{\sqrt{d_2}} + \epsilon_m}{ 1 - \left(\frac{1}{\sqrt{d_1}} + \frac{1}{\sqrt{d_2}}\right) - \epsilon_m}.
\end{equation}
The balanced case \(m_1=m_2\) in Corollary \ref{eff_cond_SSOR_Diag_gaussian_balanced} makes the difference between the two bounds explicit. Note that the random effects model used in \citet{pandolfi2024conjugate} contains a random intercept term, which we do not include, and they derive their result for $\kappa_{m-2,3}^{\text{Diag}}$ instead of $\kappa_{m-1,2}^{\text{Diag}}$. However, in Theorem \ref{second_smallest_EVs_diag_gaussian} in Appendix \ref{sec_proofs}, we show that the result in \eqref{eff_cond_diag} analogously holds for a model without a global random intercept.

\begin{corollary}\label{eff_cond_SSOR_Diag_gaussian_balanced}
If $K=2$, the likelihood is Gaussian, $m_1=m_2$, and $\begin{pmatrix} 0  & Z_1^T Z_2\\  Z_2^T Z_1 & 0 \end{pmatrix} \in \{0,1\}^{m\times m}$ is an adjacency matrix of a bipartite, biregular random graph with uniform distribution over all bipartite, biregular random graphs, the following hold:
\begin{align}
    \kappa_{m-1,1}^{\text{SSOR}} = \frac{\lambda_{1}^{\text{SSOR}}}{\lambda_{m-1}^{\text{SSOR}}} & \leq \frac{1}{ 1 - \frac{4}{d} - \epsilon_m} = 1 + \frac{4}{d} + \epsilon_m + O\left(\frac{1}{d^2}, \epsilon_m^2, \frac{\epsilon_m}{d}\right)\label{ineq_asym_eff_cond_ssor},\\
    \kappa_{m-1,2}^{\text{Diag}} = \frac{\lambda_{2}^{\text{Diag}}}{\lambda_{m-1}^{\text{Diag}}}& \leq \frac{ 1 + \frac{2}{\sqrt{d}} + \epsilon_m'}{ 1 - \frac{2}{\sqrt{d}} - \epsilon_m'} = 1 + \frac{4}{\sqrt{d}} + \epsilon_m' + O\left(\frac{1}{d}, \epsilon_m'^2, \frac{\epsilon_m'}{\sqrt{d}}\right)\label{ineq_asym_eff_cond_diag}
\end{align}
asymptotically almost surely with $\epsilon_m\rightarrow 0$ and $\epsilon_m'\rightarrow 0$ as $m\rightarrow \infty$, where $d=n/m_1=n/m_2$.
\end{corollary}

% Specifically, if $K=2$, the likelihood is Gaussian, and  $\begin{pmatrix}
%            0  & Z_1^T Z_2\\
%             Z_2^T Z_1 & 0
%         \end{pmatrix} \in \{0,1\}^{m\times m}$ an adjacency matrix of a bipartite, biregular random graph, we have the result below. The  {pandolfi2024conjugate}.

\section{Experiments using simulated data}\label{sec_simulation}
In the following, we conduct experiments using simulated data. We compare the different preconditioners and methods for calculating predictive variances. Furthermore, we analyze our proposed methods concerning the accuracy and runtime for parameter estimation and prediction. Krylov subspace methods are compared to computations relying on sparse Cholesky decompositions. Note, however, that Cholesky decompositions also do not necessarily yield exact results since round-off errors can accumulate when using finite precision arithmetic. Code to reproduce the simulated and the real-world data experiments of this article is available at \url{https://github.com/pkuendig/KrylovGMMs}.

\subsection{Experimental setting}\label{exp_setting}
We simulate data from a model with two crossed random effect components. Unless stated otherwise, we use a balanced random effects design, i.e., each random effect $b_{k,j}$, $j=1,\dots,m_k$, $k=1,2$, occurs the same number of times, $\sum_{i=1}^n (Z_1)_{ij}=\sum_{i=1}^n (Z_2)_{ij}=n/m_1=n/m_2$, both random effect components $b_1$ and $b_2$ have the same dimension $m_1=m_2$, the crossing structure between the two random effect components is random, and the random effects variances are $\sigma_1^2=\sigma_2^2=0.25$. A random crossing structure is generated by generating incidence matrices $Z_k\in\{0,1\}^{n\times m_k}$ as follows. We first assign exactly $n/m_k$ ones to column $j$ in consecutive rows, $(Z_k)_{ij} = 1 \quad \text{for } 
i = (j-1)n/m_k+ 1,\dots, jn/m_k$, $j=1,\dots,m_k$, and zero otherwise, and we randomly permute the rows of $Z_2$. We consider various sample sizes $n$ and dimensions $m$ of the random effects as described in the subsections below. The response variable $y$ follows either a Gaussian likelihood with error variance $\sigma^2=0.25$ or a Bernoulli likelihood with a logit link function. We additionally conduct experiments for less regular unbalanced designs, unequal dimensions of the random effect components ($m_1\neq m_2$), and other signal-to-noise ratios. See the subsections below for more information. For the experiments regarding the accuracy and runtimes of parameter estimation and prediction in Sections \ref{acc_param_estim} and \ref{sec:times}, we include a fixed effects linear predictor term with five covariates and an intercept. The regression coefficients are all set to one except for the intercept, which is set to zero, and the covariates are sampled from a normal distribution with mean zero and variance chosen such that the fixed and random effects have equal variance.

For analyzing prediction accuracy, the data is randomly split into training and test data sets. We measure both the accuracy of point and probabilistic predictions for the test random effects $\tilde{b}_p=Z_{po}b + Z_{pp}b_p$; see \eqref{pred_mean}. Point predictions in the form of predictive means are evaluated using the root mean squared error (RMSE), and for probabilistic predictions, we use the log score (LS) $-\frac{1}{n_p} \sum_{i=1}^{n_p}\log(\mathcal{N}((\tilde{b}_p)_i;\omega_{p,i}, (\Omega_p)_{ii}))$, where $\omega_{p,i}$ and $(\Omega_p)_{ii}$ are the predictive latent means and variances for the random effect $(\tilde{b}_p)_i$ of test observation $i$, and $\mathcal{N}((\tilde{b}_p)_i;\omega_{p,i}, (\Omega_p)_{ii})$ denotes a Gaussian density evaluated at $(\tilde{b}_p)_i$. We also measure the runtime in seconds. 
% such that most group levels are observed in both the training and test data
% Note that the probability that a group-level is missing in the training or test set is very small, when doing randomly splitting.

All calculations are done on a laptop with an Intel i7-12800H processor and 32 GB of random-access memory. We use the \texttt{GPBoost} library version 1.6.4 for Krylov subspace and Cholesky decomposition-based methods. Cholesky-based calculations are additionally done with the \texttt{R} packages \texttt{lme4} version 1.1-35.5 and \texttt{glmmTMB} version 1.1.10. For all software implementations, parameter estimation is done by maximizing the marginal likelihood (i.e., not the restricted likelihood), and we use the same initial parameter values for all software packages. For non-Gaussian likelihoods, all implementations use the Laplace approximation for approximate marginal likelihoods and predictions. Besides the above-mentioned points, the default options of the packages are used, including package-specific internal optimizers.
% , which is the default option in all packages except for \texttt{lme4}, where the restricted maximum likelihood criterion is optimized by default.
% Consequently, the different implementations rely on the same statistical methodology and differ only in their specific software implementations. We therefore expect very similar parameter estimates and predictions.
% In particular, to minimize the negative log-marginal likelihood, the `limited-memory BFGS' algorithm is used in the \texttt{GPBoost} package, and regression coefficients for Gaussian likelihoods are found using coordinate descent and weighted least squares. The \texttt{glmmTMB} package uses the `nlminb' optimizer, and the \texttt{lme4} package uses the `nloptwrap' optimizer for Gaussian likelihoods and the `bobyqa' and `Nelder-Mead' optimizers for non-Gaussian likelihoods.

For the Krylov subspace methods, if not stated otherwise, we use the SSOR preconditioner and $t=50$ random vectors for STE and SLQ. For the CG algorithm, we use a convergence tolerance of $10^{-2}$ for calculating marginal likelihoods and gradients and a tolerance of $10^{-3}$ for predictive variances. Unless stated otherwise, we use Algorithm \ref{alg:approach6} and $s=1000$ samples for calculating predictive variances. Further, we adopt a sample average approximation approach \citep{kim2015guide} when maximizing the marginal likelihood.

\subsection{Preconditioner comparison}\label{prec_compare}
We compare the preconditioners introduced in Section \ref{sec_prec} with regard to the accuracy and runtimes for approximating log-marginal likelihoods. We use $m=100,000$ random effects and a sample size of $n=1,000,000$. These dimensions are approximately the largest such that Cholesky-based computations can be run in a reasonable amount of time (approx. four hours for one likelihood evaluation). We use both a balanced random effects design as described in Section \ref{exp_setting} and a less regular unbalanced design. The latter is generated as described in Appendix \ref{appendix:unbalance_design}. The calculation of the marginal likelihoods is done at the data-generating parameters and repeated $100$ times with different random vectors for the SLQ method. 

First, we analyze how the accuracy and runtimes depend on the number of random vectors $t$ in the SLQ method for a Gaussian likelihood. Figure \ref{fig:Pcomparison_Gaussian} reports standard deviations of the stochastic estimates and average wall-clock times for the different preconditioners for both the balanced and unbalanced random effects designs. The figure also reports the number of iterations (bottom row, right axis) of the CG method used for calculating the quadratic from in the log-marginal likelihood. The wall-clock time and negative log-marginal likelihood when using a Cholesky decomposition are annotated in the plot. We observe that the SSOR and ZIC preconditioners yield the most accurate log-marginal likelihood approximations. Compared to the diagonal preconditioner and the unpreconditioned SLQ method, the SSOR and ZIC preconditioners yield standard deviations that are approximately one order of magnitude lower for both random effects designs. This observation is consistent with the theoretical results presented in Section \ref{sec_convergence}. Furthermore, as expected from theory, the diagonal preconditioner does not result in any variance reduction for the balanced design, but for the unbalanced design, we observe smaller standard deviations compared to the unpreconditioned CG method. The runtimes and the number of CG iterations are similar for all preconditioners and also without preconditioning for the balanced design. For the unbalanced design, all preconditioners substantially reduce the runtimes and numbers of CG iterations. In line with the convergence theory in Section \ref{sec_convergence}, the diagonal preconditioner has a slightly higher number of CG iterations and runtimes compared to the SSOR preconditioner. Note that the SSOR preconditioner has a higher ``overhead" per CG iteration than the diagonal preconditioner since additional sparse triangular solves are done for the SSOR preconditioner. Finally, we observe that the Krylov subspace-based methods are much faster than Cholesky-based calculations. For instance, when using $t=50$ random vectors, they result in a speedup of approximately four orders of magnitude. Note that all preconditioners yield unbiased SLQ-approximated log-marginal likelihoods. To empirically verify this, we additionally report in Figure \ref{fig:Pcomparison_Gaussian_hist} in Appendix \ref{appendix:add_Pcomparison} histograms of log-marginal likelihoods. 
\begin{figure}[ht!]
    \centering
    \includegraphics[width=0.49\linewidth]{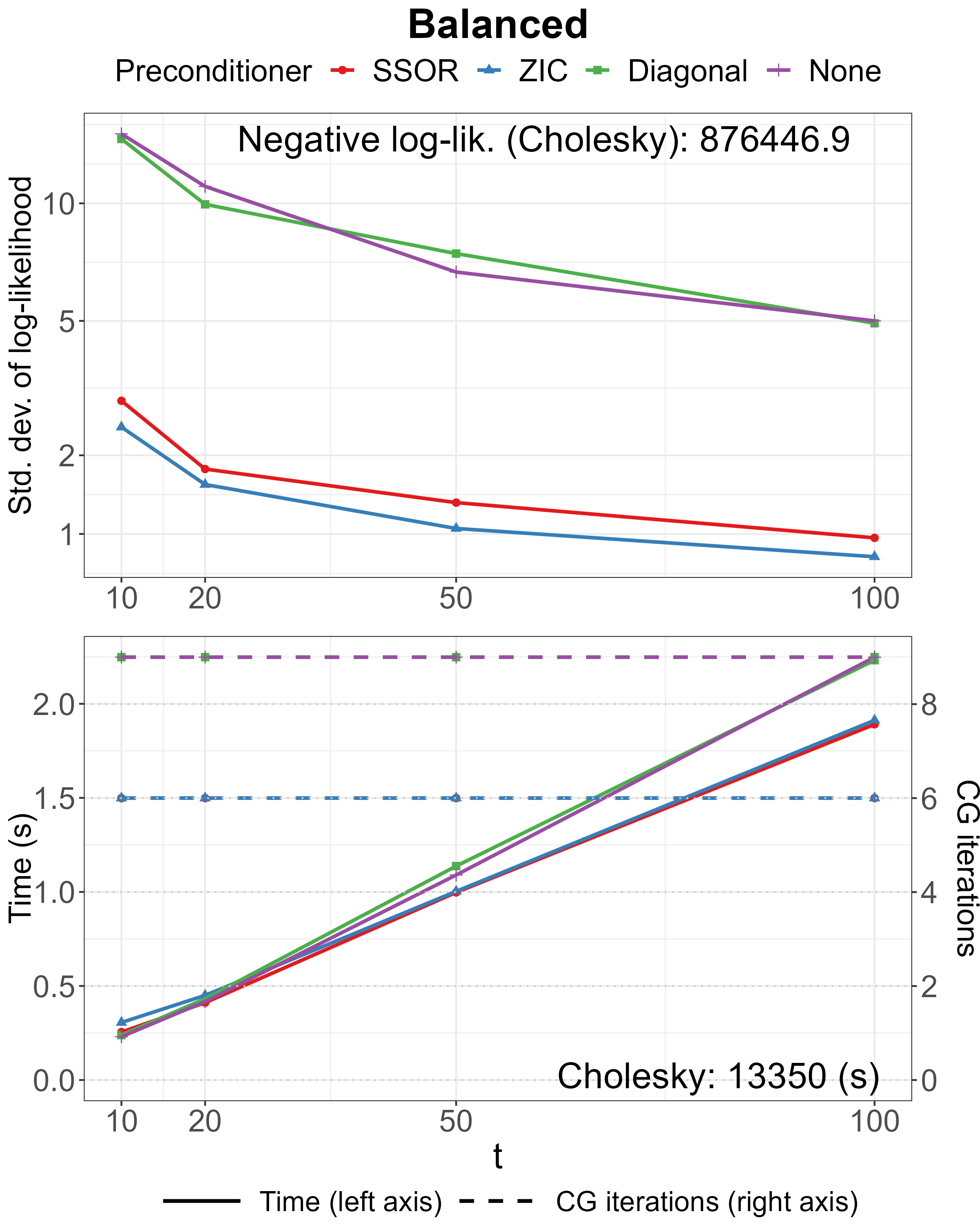}
    \includegraphics[width=0.49\linewidth]{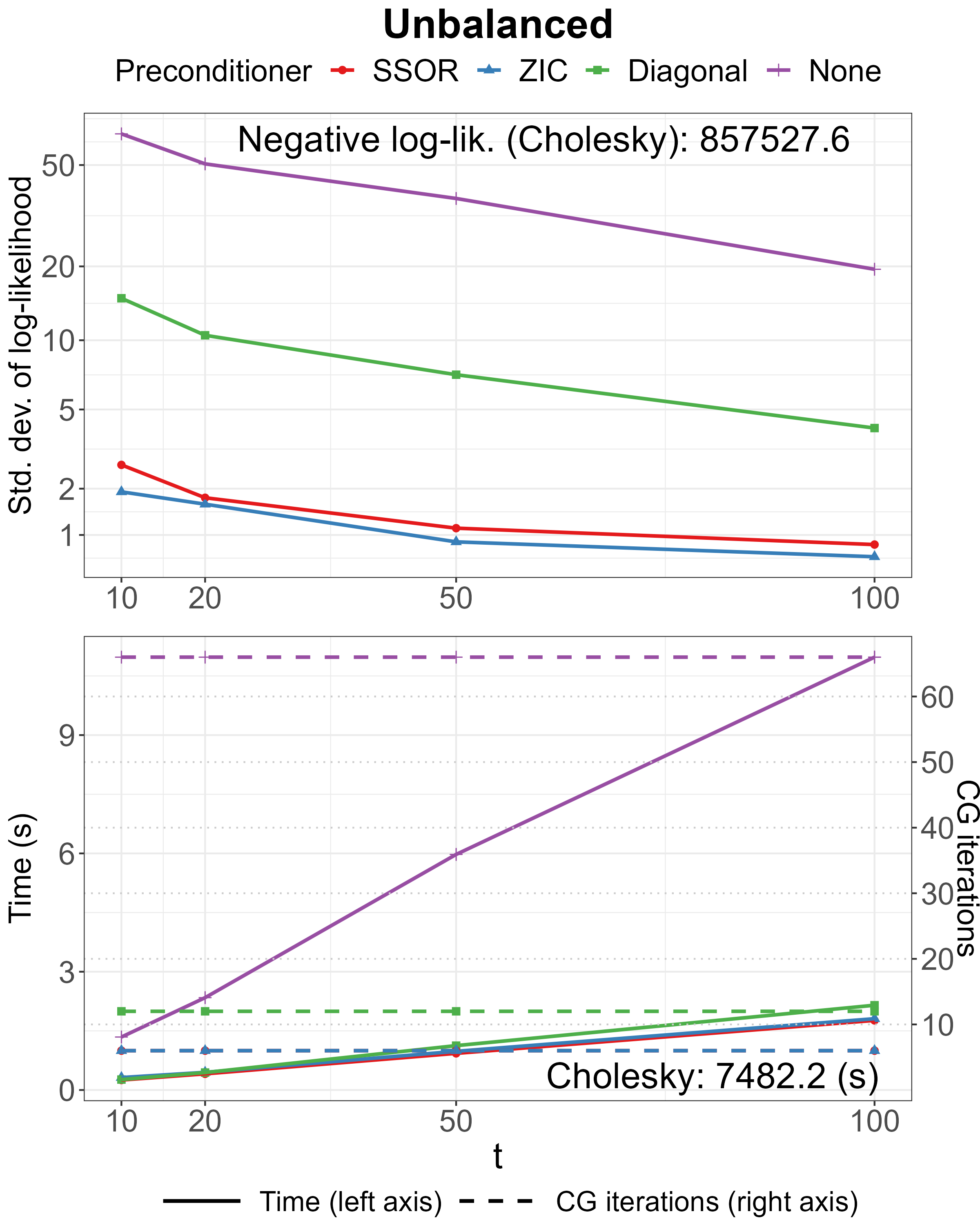}
    \caption{Accuracy (in standard deviations; top row), wall-clock time (in seconds; bottom row, left axis), and number of CG iterations (bottom row, right axis) for different preconditioners and varying numbers of random vectors $t$ in the SLQ method for calculating log-marginal likelihoods for a Gaussian likelihood and balanced and unbalanced random effects designs.}
    \label{fig:Pcomparison_Gaussian}
\end{figure}

In Figures \ref{fig:Pcomparison_Laplace} and \ref{fig:Pcomparison_Laplace_hist} in Appendix \ref{appendix:add_Pcomparison}, we report analogous results when the likelihood is Bernoulli with a logit link function and all other settings are identical. We find qualitatively very similar results as in the Gaussian setting. In particular, Krylov subspace-based methods are approximately $50,000$ times faster than Cholesky-based computations. The SSOR and ZIC preconditioners yield substantially more accurate log-marginal likelihood approximations than the diagonal preconditioner and the unpreconditioned case. In Figure \ref{fig:Pcomparison_Gaussian_other_variance} in Appendix \ref{appendix:add_Pcomparison}, we present additional results when having a higher signal-to-noise ratio obtained by using a lower error variance $\sigma^2=0.05$ for a Gaussian likelihood. The results are again qualitatively similar to the ones of the lower signal-to-noise ratio. We further analyze the impact of the irregularity of the random effects design. For this, we vary the size parameter $r$ in the unbalanced design generation approach explained in Appendix \ref{appendix:unbalance_design}. The results are reported in the left column of Figure \ref{fig:Pcomparison_Gaussian_m_over_disp} in Appendix \ref{appendix:add_Pcomparison}. We find that the runtime of the unpreconditioned CG method increases with increasing irregularity, whereas the runtimes remain constant for the preconditioned CG method for all preconditioners. The accuracy of the SLQ method is only marginally affected by the irregularity of the design with and without preconditioning. In the right column of Figure \ref{fig:Pcomparison_Gaussian_m_over_disp} in Appendix \ref{appendix:add_Pcomparison}, we also report results when varying the random effects dimension $m$. An unbalanced design and a constant ratio $n/m=10$ (as above) is used for this experiment. We find that there is a small increase in the variance of the SLQ method, the numbers of CG iterations are essentially constant, but the runtimes of the CG method increases due to a higher computational complexity per CG iteration when $m$ increases.  

% Note that the experiments in this section and the following Section \ref{section:sim_prediction} are conducted on only one simulated data set since we do not want to mix sampling variability and randomness of the simulation-based approximations. However, the results do not change when using other samples (results not shown).

% Cholesky-based computations are faster for the unbalanced design, since $Z^T W Z$ is more sparse, but still three orders of magnitude slower than Krylov subspace methods.

\subsection{Accuracy and runtimes of predictive variances approximations}\label{section:sim_prediction}
Next, we compare the accuracy of the different methods for calculating predictive variances introduced in Section \ref{section:iterativePred} and Appendix \ref{app_pred_var}, using simulated data with $n=n_p=100,000$ training and test points and a total of $m+m_p=10,000$ group levels. The response variable follows a Gaussian likelihood, and we calculate the RMSE of the approximate predictive variances to the ``exact" Cholesky-based predictive variances. We also measure the runtime for prediction, which includes the calculation of the latent predictive means and variances. Predictive distributions are calculated at the data-generating variance parameters. 

Figure \ref{fig:RMSE_vs_TIME} visualizes the accuracy (RMSE) versus the wall-clock time for different numbers of random vectors $s$ for the stochastic methods and different ranks $k$ for the Lanczos-based approximation. For the former, we average the results over $100$ independent repetitions for every $s$ and add confidence intervals for the RMSE obtained as $\pm 2 \times$ standard errors (not visible since they are very small). The runtime of Cholesky-based calculations is annotated in the plot. By far the most accurate predictive variances for a given runtime are obtained with the simulation-based Algorithm \ref{alg:approach6}. For instance, for $s=200$, predictive variances are very accurate with an RMSE of approximately $2.6\times10^{-4}$, and the runtime is more than $100$ times faster compared to Cholesky-based calculations. Algorithm \ref{alg:approach5} achieves the second highest prediction accuracy for a given runtime, and Algorithm \ref{alg:approach2} is considerably slower and less accurate. Finally, Lanczos-based approximations are very inaccurate and the RMSE decreases only very slowly with increasing rank $k$.

\begin{figure}[ht!]
\centering
    \includegraphics[width=0.6\linewidth]{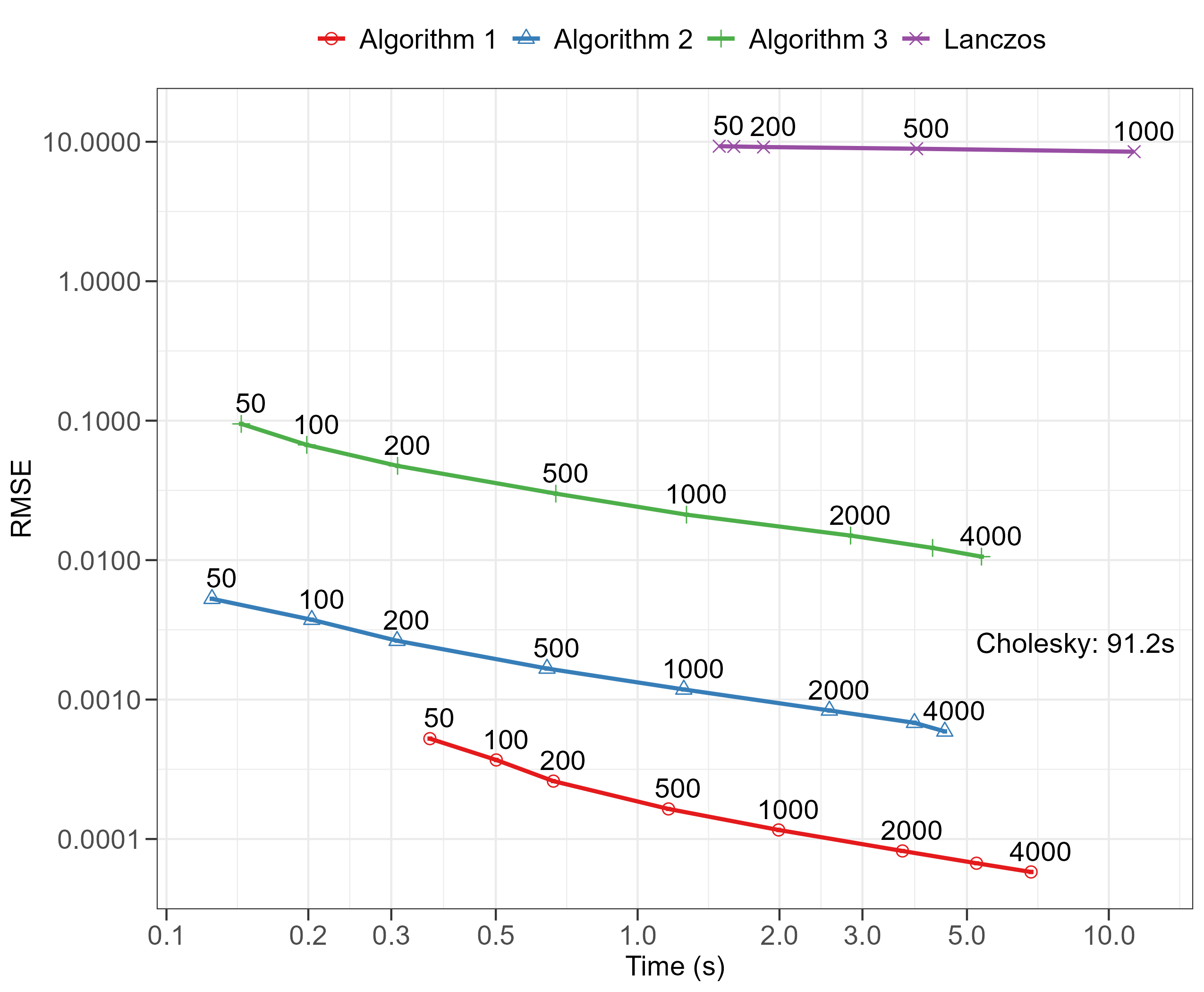}
    \caption{Accuracy-runtime comparison of different methods for predictive variances. The number of random vectors $s$ and the Lanczos rank $k$ are annotated in the plot.}
    \label{fig:RMSE_vs_TIME}
\end{figure}

\subsection{Runtime for parameter estimation} \label{sec:times}
In this subsection, we compare parameter-estimation runtimes for the proposed Krylov approach, a Cholesky-based baseline using the same implementation backend, and two widely used mixed-model packages, \texttt{lme4} and \texttt{glmmTMB}. We consider different dimensions of the random effects ranging from $m=1,000$ up to $m=1,000,000$ and $n=10m$ using the setting described in Section \ref{exp_setting}. In particular, we use two crossed random effect components and include five covariates plus an intercept in a linear predictor. Figure \ref{fig:runtime} shows the wall-clock time in seconds versus the number of random effects $m$. Krylov subspace methods achieve the fastest runtime for all $m$. For instance, for $m=20,000$ and a Gaussian likelihood, the Krylov subspace methods are approximately three orders of magnitude faster than the Cholesky-based baseline using the same implementation backend. For a Gaussian likelihood, \texttt{glmmTMB} is approximately four orders of magnitude slower than the Krylov subspace approach, and \texttt{lme4} is approximately four orders of magnitude slower than the Krylov subspace methods for a Bernoulli likelihood. In general, estimation takes the longest with \texttt{glmmTMB} for Gaussian likelihoods, and for Bernoulli likelihoods, estimation takes the longest with \texttt{lme4}. 
\begin{figure}[ht!]
    \centering
    \includegraphics[width=\linewidth]{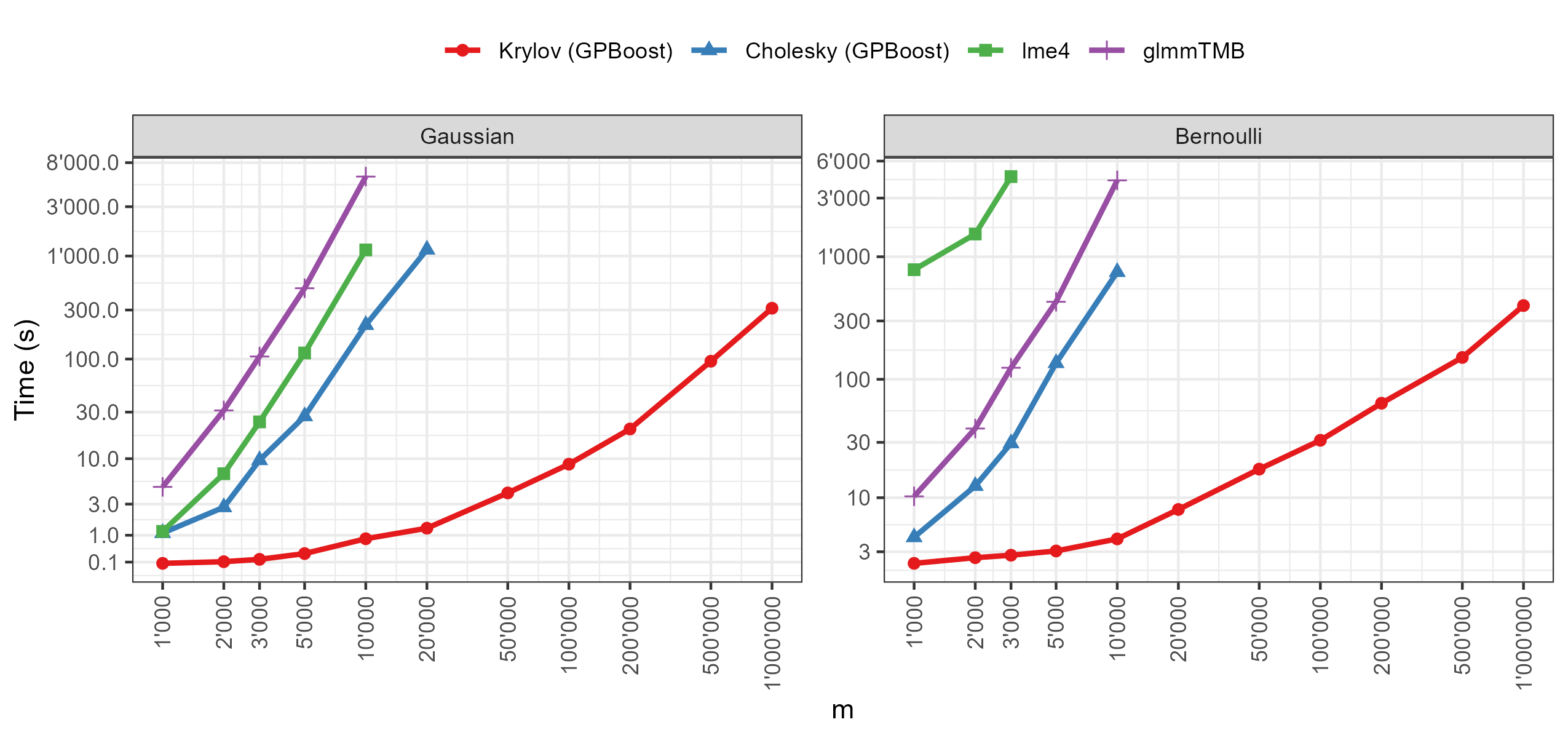}
    \caption{Average wall clock times (s) for parameter estimation and different $m$. Simulated data follows either a Gaussian or a Bernoulli likelihood.}
    \label{fig:runtime}
\end{figure}
In Figure \ref{fig:runtime_unbalanced} in Appendix \ref{appendix:add_res_est_pred}, we additionally compare the runtime for parameter estimation for a random effects design with $m_2=m_1/2$. Cholesky-based computations are faster compared to the case $m_1=m_2$ due to the increased sparsity in $Z^T W Z$, but Krylov subspace methods again clearly achieve the fastest runtime for all $m$. For instance, they are three orders of magnitude faster than Cholesky-based computations when using \texttt{GPBoost} for $m=50,000$ and a Gaussian likelihood. %Note that it would also be interesting to perform a runtime comparison for calculating log-marginal likelihoods as this would eliminate the impact of using different optimizers across different libraries, but the \texttt{lme4} and \texttt{glmmTMB} packages do not support the explicit calculation of marginal likelihoods. However, `Krylov (GPBoost)' and `Cholesky (GPBoost)' differ only in the use of Krylov subspace methods vs. Cholesky-based computations, and this comparison therefore clearly demonstrates the speedup with Krylov subspace methods.

\subsection{Accuracy of parameter estimation and predictive distributions}\label{acc_param_estim}
In the following, we analyze the properties of the variance parameter and linear regression coefficient estimates and the accuracy of predictive distributions for Gaussian and Bernoulli likelihoods when using Krylov-subspace methods. As mentioned in Section \ref{exp_setting}, we include five covariates plus an intercept in a linear predictor. We set the total number of training and prediction random effects to $m+m_p=4,000$ and simulate $n=n_p=40,000$ training and test points. This is roughly the largest number of group levels such that we can run calculations with the \texttt{glmmTMB} and \texttt{lme4} packages in a reasonable amount of time. We perform $100$ simulation repetitions, and prediction is done using estimated parameters. For Bernoulli likelihoods, repeating the estimation $100$ times with \texttt{lme4} leads to very long runtimes; see also Section \ref{sec:times}. Therefore, we do not report estimates and prediction accuracy for this case.

Figure \ref{fig:var_estimates} and Tables \ref{table:estimates_Gaussian} and \ref{table:estimates_Bernoulli} in Appendix \ref{appendix:add_res_est_pred} report the estimates for the variance parameters for both Gaussian and Bernoulli likelihoods. We observe virtually no differences among Krylov subspace methods and Cholesky-based computations. Next, Figure \ref{fig:RMSE_LS} and Tables \ref{table:prediction_Gaussian} and \ref{table:prediction_Bernoulli} in Appendix \ref{appendix:add_res_est_pred} report the RMSE for predictive means and the log score for predictive distributions. Log scores are only reported for Krylov subspace methods and Cholesky-based computations from the \texttt{GPBoost} package since the calculation of predictive variances is not supported for \texttt{lme4} and leads to very long runtimes for \texttt{glmmTMB}. The predictions obtained with Krylov subspace methods and a Cholesky decomposition have virtually the same accuracy in terms (both RMSE and log score).

\section{Real-world applications}
Next, we conduct experiments on six real-world data sets with crossed random effects, analyzing the runtime and accuracy for parameter estimation. We compare the proposed Krylov subspace methods with Cholesky-based calculations implemented in the same software backend and with the \texttt{lme4} and \texttt{glmmTMB} packages. We use the same model and computational settings as described in Section \ref{exp_setting}. We consider four regression (`cars', `building\_permits', `instEval', `MovieLens\_32m') and two classification (`employee\_access', `upselling') data sets. Table \ref{table:real_data_sets} in Appendix \ref{section:add_real_world} gives an overview of the data sets reporting the sample size, the number of covariates included in a linear predictor, the categorical variables modeled using random effects including the number of group levels $m_k$, and the total number of non-zero entries in $Z^TZ$. In Figure \ref{fig:real_world_ZtZ} in Appendix \ref{section:add_real_world}, we additionally visualize the non-zero entries of the matrix $Z^TZ$ for the different data sets. We use the ZIC preconditioner for the `cars', `building\_permits', and `instEval' data sets and the SSOR preconditioner for the other data sets.

% Non-zero entries in $Z^TZ$ occur if at least one observation belongs to a particular group level combination of any two grouped random effects. Thus, the random occurrence of non-zero entries in $Z^TZ$ in Figure \ref{fig:real_world_ZtZ} indicates random crossing among the random effects of data sets. For the `building\_permits' data set, building sites are visible in Figure \ref{fig:real_world_ZtZ} since the longitude and latitude variables are included into the random effect.

% \subsection{Results}
Figure \ref{fig:real_world_times} shows the estimation runtimes for the different data sets. In Appendix \ref{section:add_real_world}, we additionally report the wall-clock time, the log-marginal likelihood at the optimum, and the estimated parameters in tabular form. Besides, we report the average differences in estimation runtime relative to the fastest method in Figure \ref{fig:real_world_rel_diff} in Appendix \ref{section:add_real_world}. The results show that Krylov subspace methods clearly have the fastest runtime on all real-world data sets. Compared with Cholesky-based calculations using the \texttt{GPBoost} package, the average speedup of the Krylov subspace methods exceeds a factor of $25$ on those data sets for which Cholesky-based calculations are feasible. On the relatively high-dimensional `building\_permits' data set, parameter estimation with Krylov subspace methods is approximately $60$ times faster compared to Cholesky-based calculations. On average, estimation with \texttt{lme4} and \texttt{glmmTMB} is approximately $220$ and $85$ times slower, respectively, than with the Krylov subspace-based methods. As observed in experiments with simulated data, \texttt{lme4} is slow on classification data and can crash, e.g., on the `upselling' data set. Furthermore, estimation on the `MovieLens\_32m' data set takes approximately $1,000$ times longer with \texttt{lme4} than with Krylov subspace-based methods.
\begin{figure}[ht!]
    \centering
    \includegraphics[width=0.6\linewidth]{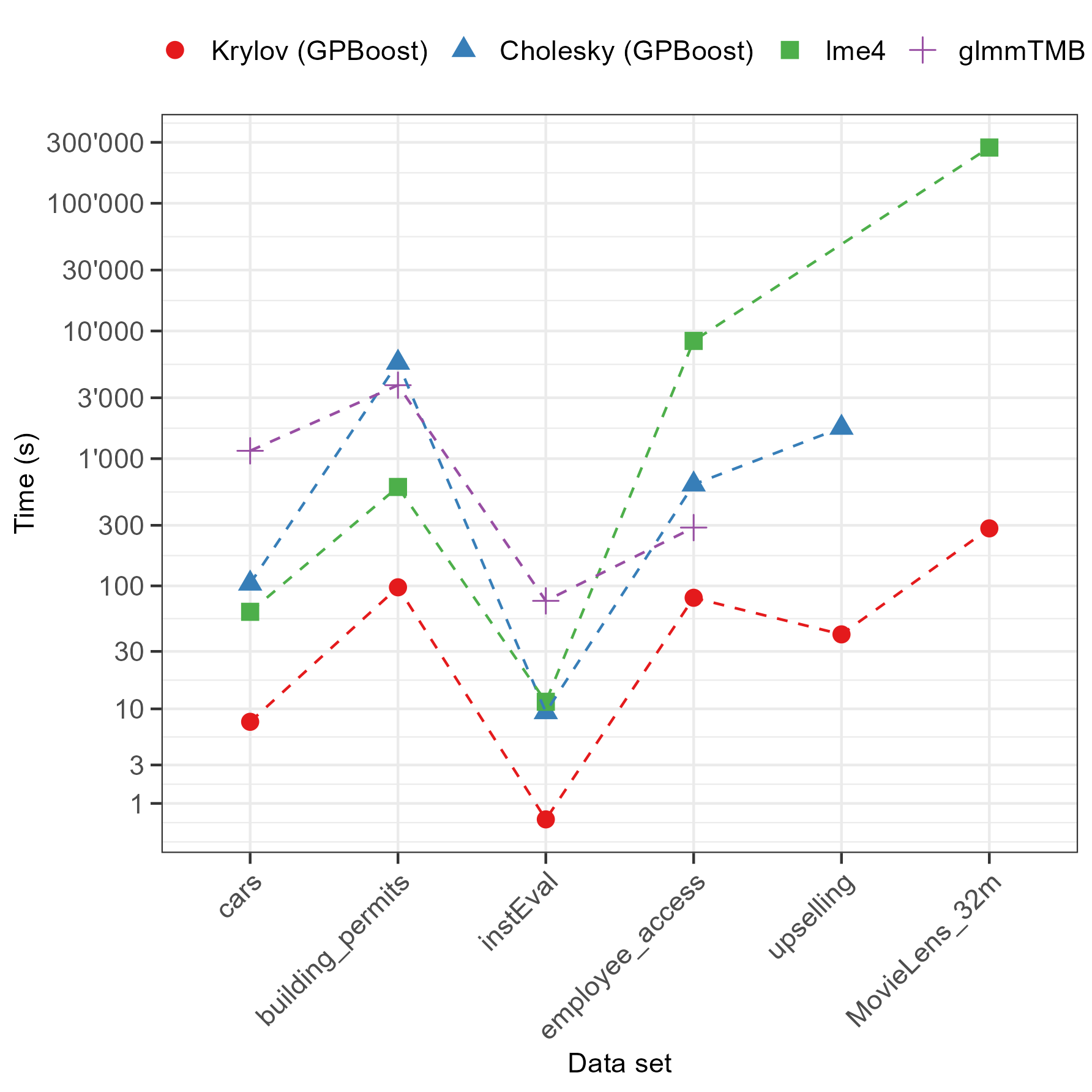}
    \caption{Comparison of runtimes for estimation on different real-world data sets.}
    \label{fig:real_world_times}
\end{figure}

 Concerning the parameter estimates reported in Tables \ref{table:real_world_coefficients} and \ref{table:real_world_variances} in Appendix \ref{section:add_real_world}, we find that all methods and packages yield essentially the same estimates except for \texttt{lme4} on the `building\_permits' and `employee\_access' data sets. Parameter estimation crashes when using the \texttt{lme4} and \texttt{glmmTMB} packages on the `upselling' data set. We have additionally tried modeling only the two variables Var216 and Var217 as random effects but still observed crashes on the `upselling' data set. For the `employee\_access' data set, we have also tried modeling the low-cardinality categorical variables such as `role\_family' and `role\_cat1' as dummy-coded fixed effects instead of random effects, but \texttt{glmmTMB} and \texttt{lme4} crash when doing this. For the `MovieLens\_32m' data set, \texttt{glmmTMB} and Cholesky-based computations in \texttt{GPBoost} crash due to memory errors. In summary, our novel Krylov subspace-based methods are substantially faster than Cholesky-based calculations and numerically more stable.

\section{Conclusion}
We present Krylov subspace methods for mixed-effects models with high-dimensional crossed random effects. We derive novel theoretical results on the convergence and accuracy of these methods, and we analyze them empirically. In experiments, we find that Krylov subspace-based methods are much faster and more stable than Cholesky-based calculations and other state-of-the-art software implementations for GLMMs. For instance, we are able to estimate a model with two crossed random effects and three linear regression coefficients on the large-scale MovieLens\_32m data set in less than five minutes, whereas \texttt{lme4} requires more than three days and Cholesky-based computations in \texttt{glmmTMB} and \texttt{GPBoost} fail due to memory limitations. 

We use the Laplace approximation for non-Gaussian likelihoods following a common practice in widely used software implementations such as \texttt{lme4} and \texttt{glmmTMB}. A potential caveat of this is that, depending on the asymptotic setting for mixed-effects models, the Laplace approximation can be biased \citep{shun1995laplace}. Directions for future research include extending our methods to other marginal likelihood and posterior approximations, analyzing alternative approximations for log-determinants, a systematic comparison under which situations the ZIC or the SSOR preconditioners are preferred, and the development of novel preconditioners.

\if0\jasa{
\section*{Acknowledgments}
This research was partially supported by the Swiss Innovation Agency - Innosuisse (grant number `55463.1 IP-ICT').
}\fi

\if1\jasa{
\section*{Disclosure statement}
There are no competing interests to declare. The authors acknowledge the use of ChatGPT version 5 for language editing and assistance in improving clarity and structure.
}\fi

\bibliographystyle{abbrvnat}
\bibliography{bib_Iterative_GMMs}

\begin{thebibliography}{13}
\providecommand{\natexlab}[1]{#1}
\providecommand{\url}[1]{\texttt{#1}}
\expandafter\ifx\csname urlstyle\endcsname\relax
  \providecommand{\doi}[1]{doi: #1}\else
  \providecommand{\doi}{doi: \begingroup \urlstyle{rm}\Url}\fi

\bibitem[Bekas et~al.(2007)Bekas, Kokiopoulou, and Saad]{bekas2007estimator}
C.~Bekas, E.~Kokiopoulou, and Y.~Saad.
\newblock An estimator for the diagonal of a matrix.
\newblock \emph{Applied numerical mathematics}, 57\penalty0 (11-12):\penalty0 1214--1229, 2007.

\bibitem[Brito et~al.(2022)Brito, Dumitriu, and Harris]{brito2022spectral}
G.~Brito, I.~Dumitriu, and K.~D. Harris.
\newblock Spectral gap in random bipartite biregular graphs and applications.
\newblock \emph{Combinatorics, Probability and Computing}, 31\penalty0 (2):\penalty0 229--267, 2022.

\bibitem[Chung(1997)]{chung1997spectral}
F.~R. Chung.
\newblock \emph{Spectral graph theory}, volume~92.
\newblock American Mathematical Soc., 1997.

\bibitem[Gardner et~al.(2018)Gardner, Pleiss, Weinberger, Bindel, and Wilson]{gardner2018gpytorch}
J.~Gardner, G.~Pleiss, K.~Q. Weinberger, D.~Bindel, and A.~G. Wilson.
\newblock Gpytorch: Blackbox matrix-matrix gaussian process inference with gpu acceleration.
\newblock \emph{Advances in neural information processing systems}, 31, 2018.

\bibitem[Harbrecht et~al.(2012)Harbrecht, Peters, and Schneider]{harbrecht2012low}
H.~Harbrecht, M.~Peters, and R.~Schneider.
\newblock {On the low-rank approximation by the pivoted Cholesky decomposition}.
\newblock \emph{Applied numerical mathematics}, 62\penalty0 (4):\penalty0 428--440, 2012.

\bibitem[Harper and Konstan(2015)]{harper2015movielens}
F.~M. Harper and J.~A. Konstan.
\newblock The movielens datasets: History and context.
\newblock \emph{Acm transactions on interactive intelligent systems (tiis)}, 5\penalty0 (4):\penalty0 1--19, 2015.

\bibitem[Horn and Johnson(2012)]{horn2012matrix}
R.~A. Horn and C.~R. Johnson.
\newblock \emph{Matrix analysis}.
\newblock Cambridge university press, 2012.

\bibitem[K{\"u}ndig and Sigrist(2025)]{kundig2024iterative}
P.~K{\"u}ndig and F.~Sigrist.
\newblock {Iterative methods for vecchia-Laplace approximations for latent Gaussian process models}.
\newblock \emph{Journal of the American Statistical Association}, 120\penalty0 (550):\penalty0 1267--1280, 2025.

\bibitem[Pargent et~al.(2022)Pargent, Pfisterer, Thomas, and Bischl]{pargent2022regularized}
F.~Pargent, F.~Pfisterer, J.~Thomas, and B.~Bischl.
\newblock Regularized target encoding outperforms traditional methods in supervised machine learning with high cardinality features.
\newblock \emph{Computational Statistics}, 37\penalty0 (5):\penalty0 2671--2692, 2022.

\bibitem[Pleiss et~al.(2018)Pleiss, Gardner, Weinberger, and Wilson]{pleiss2018constant}
G.~Pleiss, J.~Gardner, K.~Weinberger, and A.~G. Wilson.
\newblock Constant-time predictive distributions for gaussian processes.
\newblock In \emph{International Conference on Machine Learning}, pages 4114--4123. PMLR, 2018.

\bibitem[Reyes(2019)]{reyes2019statistical}
P.~C. Reyes.
\newblock \emph{Statistical learning with high-cardinality string categorical variables}.
\newblock PhD thesis, Universit{\'e} Paris-Saclay, 2019.

\bibitem[Simchoni and Rosset(2023)]{simchoni2023integrating}
G.~Simchoni and S.~Rosset.
\newblock Integrating random effects in deep neural networks.
\newblock \emph{Journal of Machine Learning Research}, 24\penalty0 (156):\penalty0 1--57, 2023.

\bibitem[Wu and Simon(2000)]{wu2000thick}
K.~Wu and H.~Simon.
\newblock Thick-restart lanczos method for large symmetric eigenvalue problems.
\newblock \emph{SIAM Journal on Matrix Analysis and Applications}, 22\penalty0 (2):\penalty0 602--616, 2000.

\end{thebibliography}
\clearpage

\appendix
\begin{bibunit}

% \begin{appendices}
\setcounter{figure}{0}
\renewcommand{\thefigure}{A\arabic{figure}}
\setcounter{table}{0}
\renewcommand{\thetable}{A\arabic{table}}

\if0\jasa{
\section*{Appendix}
}\fi

\section{Proofs of results in Section \ref{sec_convergence}}\label{sec_proofs}
\begin{notation}
We denote by $\sigma_{\min}(B)=\sigma_{\min(n_1,n_2)}(B)\leq \dots\leq \sigma_1(B) = \sigma_{\max}(B)$ the singular values of a matrix $B\in \mathbb{R}^{n_1\times n_2}$, and by $\lambda_{\min}(A)=\lambda_m(A)\leq \dots\leq \lambda_1(A) = \lambda_{\max}(A)$ the eigenvalues of a symmetric matrix $\vecmat{A}\in \mathbb{R}^{m\times m}$. Further, let 
$$D = \Sigma^{-1} + \text{diag}(Z^T W Z) \in \mathbb{R}^{m\times m},$$ 
$$\Delta = \text{diag}(Z^T W Z)\in \mathbb{R}^{m\times m},$$ 
and 
$$\Lambda = Z^T W Z - \text{diag}(Z^T W Z) \in \mathbb{R}^{m\times m}.$$
\end{notation}

We will use the following two lemmas for proving Lemma \ref{EVs_SSOR}. 
\begin{lemma}\label{evs_A_adjacency}
For any $B\in \mathbb{R}^{n_1\times n_2}$ and $A= \begin{pmatrix} 0  &  B^T\\  B & 0 \end{pmatrix}\in \mathbb{R}^{(n_1+n_2)\times (n_1+n_2)}$, the non-zero eigenvalues of $A$ are equal to $\pm$ the singular values of $B$, i.e., 
$$\{\lambda_i(A); i=1,\dots,2\min(n_1,n_2)\} = \{\pm \sigma_i(B); i = 1,\dots, \min(n_1,n_2)\}.$$

In particular, the largest singular value of $B$ and the largest eigenvalue of $A$ are equal, $\lambda_{\max}(A)= \sigma_{\max}(B)$. The same holds true for the second largest singular value and eigenvalue.

% The largest modulus singular value of any matrix $B\in \mathbb{R}^{n_1\times n_2}$ and the largest modulus eigenvalue of the matrix $A= \begin{pmatrix} 0  &  B^T\\  B & 0 \end{pmatrix} $ are equal, $|\lambda_{\max}(A)|= |\sigma_{\max}(B)|$.
\end{lemma}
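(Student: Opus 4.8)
The plan is to diagonalize $A$ explicitly using the singular value decomposition (SVD) of $B$. First I would write a full SVD $B = U S V^T$, where $U \in \mathbb{R}^{n_1\times n_1}$ and $V \in \mathbb{R}^{n_2\times n_2}$ are orthogonal, $S \in \mathbb{R}^{n_1\times n_2}$ is the rectangular diagonal matrix carrying the singular values $\sigma_1 \geq \dots \geq \sigma_{\min(n_1,n_2)} \geq 0$, and let $u_i$ and $v_i$ denote the columns of $U$ and $V$. The defining relations $B v_i = \sigma_i u_i$ and $B^T u_i = \sigma_i v_i$ for $1 \le i \le \min(n_1,n_2)$ are precisely what is needed to build eigenvectors of $A$.

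Second I would verify by direct computation that the stacked vectors $w_i^{\pm} = (v_i^T, \pm u_i^T)^T \in \mathbb{R}^{n_1+n_2}$ are eigenvectors of $A$ with eigenvalues $\pm \sigma_i$. Since $A(x^T,y^T)^T = ((B^T y)^T,(B x)^T)^T$, substituting $x = v_i$ and $y = \pm u_i$ and invoking the two SVD relations gives $A w_i^{\pm} = (\pm\sigma_i v_i^T, \sigma_i u_i^T)^T = \pm \sigma_i w_i^{\pm}$. This furnishes $2\min(n_1,n_2)$ eigenpairs whose eigenvalues form the multiset $\{\pm \sigma_i\}$.

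Third, I would count dimensions and check independence. Because $\{u_i\}$ and $\{v_i\}$ are orthonormal families, the vectors $w_i^{\pm}$ are mutually orthogonal for indices with $\sigma_i > 0$, hence span a space of dimension $2\min(n_1,n_2)$, leaving $|n_1 - n_2|$ remaining directions. These remaining directions lie in the kernels of $B$ and $B^T$ and are annihilated by $A$, so they contribute only the zero eigenvalue; this yields the claimed equality for the nonzero eigenvalues. The one routine subtlety to watch is that when $\sigma_i = 0$ the vectors $w_i^{+}$ and $w_i^{-}$ are no longer independent, but such indices produce no nonzero eigenvalue and so are irrelevant to the statement. I expect this bookkeeping of orthogonality and of the $|n_1-n_2|$ zero eigenvalues in the non-square case to be the only mildly delicate point, and the SVD-based construction handles it cleanly.

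Finally, for the ``in particular'' assertions I would order the complete eigenvalue list of $A$, which reads $\sigma_1 \ge \dots \ge \sigma_{\min(n_1,n_2)} \ge 0 \ge -\sigma_{\min(n_1,n_2)} \ge \dots \ge -\sigma_1$. Since singular values are nonnegative, the two largest eigenvalues of $A$ coincide with the two largest singular values of $B$, giving $\lambda_{\max}(A) = \sigma_{\max}(B)$ and the analogous identity for the second largest. As a cross-check I would also note the alternative route $A^2 = \text{diag}(B^T B, B B^T)$, so that the eigenvalues of $A$ square to the shared nonzero eigenvalues $\sigma_i^2$ of $B^T B$ and $B B^T$; combined with $\tr(A) = 0$, this forces the symmetric $\pm$ pairing and corroborates the explicit construction above.
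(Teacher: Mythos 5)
Your proof is correct, but it takes a genuinely different route from the paper. The paper argues via the characteristic polynomial, invoking a block-determinant identity to write $\det(A-\lambda I)=\det(\lambda^2 I - B^TB)$ and reading off the eigenvalues as $\pm\sqrt{\lambda_i(B^TB)}=\pm\sigma_i(B)$; your proof instead diagonalizes $A$ explicitly by building the eigenvectors $w_i^{\pm}=(v_i^T,\pm u_i^T)^T$ from the SVD of $B$. Your route buys more: it produces an explicit orthogonal eigenbasis, and it handles the rectangular case cleanly by accounting for the $|n_1-n_2|$ kernel directions --- a point the paper's one-line determinant identity actually glosses over, since for $n_1\neq n_2$ the correct identity carries an extra factor $\lambda^{|n_1-n_2|}$ (up to sign), i.e.\ $\det(A-\lambda I)=\pm\lambda^{n_1-n_2}\det(\lambda^2 I_{n_2}-B^TB)$, which is harmless for the nonzero spectrum but is left implicit there. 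The paper's approach, in exchange, is shorter and needs no eigenvector bookkeeping. One small correction to your third step: the worry about $\sigma_i=0$ is unnecessary, because $\langle w_i^+,w_i^-\rangle = \|v_i\|^2-\|u_i\|^2 = 0$ holds for every $i$, so $w_i^+$ and $w_i^-$ are always orthogonal and hence independent; your dichotomy is thus not needed, and in fact the full family $\{w_i^{\pm}\}$ together with the kernel vectors gives a complete orthogonal eigenbasis in all cases. This slip does not affect the validity of your argument, since those indices contribute only zero eigenvalues either way.
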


\begin{proof}[Proof of Lemma \ref{evs_A_adjacency}]
The eigenvalues of $A$ are defined as solutions to
$$\text{det}(A - \lambda I)=0.$$
Since $A - \lambda I$ is a block matrix, we have
$$\text{det}(A - \lambda I) = \text{det}(\lambda^2 I - B^TB ). $$
The eigenvalues of $A$ are thus given by 
$$\{\pm\sqrt{\lambda_i(B^TB)}; i=1,\dots,\min(n_1,n_2)\} = \{\pm\sigma_i(B); i=1,\dots,\min(n_1,n_2)\} .$$

% The matrix $A^2 = \begin{pmatrix} B^TB  &  0\\  0 & BB^T \end{pmatrix}$ has the eigenvalues 
% $$\{\lambda_i(B^TB), \lambda_j(BB^T); i = 1,\dots, \min(n_1,n_2), j = 1,\dots, \min(n_1,n_2) \} = \{\sigma_i(B)^2; i=1,\dots,\min(n_1,n_2)\}.$$
% It follows that $$|\lambda_{\max}(A)| = \sqrt{\lambda_{\max}(A^2)}= \sqrt{\sigma_{\max}(B)^2} = |\sigma_{\max}(B)|.$$
\end{proof}

\begin{lemma}\label{svs_diag}
For any matrix $B\in \mathbb{R}^{n_1\times n_2}$ and diagonal matrices $C_1^{n_1\times n_1}$ and $C_2^{n_2\times n_2}$, it holds that
$$\sigma_{\max}(B)\min(|(C_1)_{ii}|)\min(|(C_2)_{jj}|) \leq \sigma_{\max}(C_1BC_2) \leq \sigma_{\max}(B)\max(|(C_1)_{ii}|)\max(|(C_2)_{jj}|).
$$
\end{lemma}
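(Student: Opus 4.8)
The plan is to identify the largest singular value with the spectral (operator $2$-) norm, $\sigma_{\max}(M)=\|M\|_2$, and to run both inequalities off the submultiplicativity of this norm, $\|M_1M_2\|_2\le\|M_1\|_2\|M_2\|_2$, together with the elementary fact that for a diagonal matrix $C$ the singular values are exactly $\{|(C)_{ii}|\}$, so that $\|C\|_2=\max_i|(C)_{ii}|$.

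For the upper bound I would simply write
\[
\sigma_{\max}(C_1 B C_2)=\|C_1 B C_2\|_2\le \|C_1\|_2\,\|B\|_2\,\|C_2\|_2
=\max_i|(C_1)_{ii}|\;\sigma_{\max}(B)\;\max_j|(C_2)_{jj}|,
\]
which is the right-hand inequality verbatim.

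For the lower bound the key observation is that $B$ can be recovered from $C_1BC_2$. First dispose of the degenerate case: if some diagonal entry of $C_1$ or $C_2$ vanishes, then $\min_i|(C_1)_{ii}|\min_j|(C_2)_{jj}|=0$ and the claim holds trivially since $\sigma_{\max}\ge 0$. Otherwise $C_1$ and $C_2$ are invertible, and I would write $B=C_1^{-1}(C_1BC_2)C_2^{-1}$ and apply the same submultiplicative bound to this factorization, obtaining
\[
\sigma_{\max}(B)\le \|C_1^{-1}\|_2\,\sigma_{\max}(C_1BC_2)\,\|C_2^{-1}\|_2.
\]
Since $C_1^{-1}$ is diagonal with entries $1/(C_1)_{ii}$, its spectral norm is $\|C_1^{-1}\|_2=1/\min_i|(C_1)_{ii}|$, and likewise $\|C_2^{-1}\|_2=1/\min_j|(C_2)_{jj}|$. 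Substituting these and rearranging the single inequality yields exactly the left-hand bound.

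The argument therefore reduces to one application of submultiplicativity in each direction, so I do not anticipate a genuine obstacle. The only points requiring a moment of care are the diagonal-norm identity $\|C^{-1}\|_2=1/\min_i|(C)_{ii}|$ and the separate treatment of the singular case, both of which are routine once the inversion trick $B=C_1^{-1}(C_1BC_2)C_2^{-1}$ has been identified as the device that converts the upper bound into the lower one.
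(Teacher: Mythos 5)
Your proof is correct and matches the paper's argument essentially verbatim: submultiplicativity of the spectral norm gives the upper bound, and the inversion trick $B=C_1^{-1}(C_1BC_2)C_2^{-1}$ converts it into the lower bound. Your explicit treatment of the degenerate case where a diagonal entry vanishes is a small point of added rigor (the paper's proof tacitly assumes invertibility), but it does not change the route.
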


\begin{proof}[Proof of Lemma \ref{svs_diag}]
$$\sigma_{\max}(C_1BC_2)=\|C_1BC_2\|_2\leq \|C_1\|_2\|B\|_2\|C_2\|_2 = \sigma_{\max}(B)\max(|(C_1)_{ii}|)\max(|(C_2)_{jj}|).$$
Furthermore, 
$$\sigma_{\max}(B) = \|B\|_2 = \|C_1^{-1}C_1BC_2C_2^{-1}\|_2 \leq \sigma_{\max}(C_1BC_2)\frac{1}{\min(|(C_1)_{ii}|)}\frac{1}{\min(|(C_2)_{ii}|)}$$
which gives the first inequality.
\end{proof}

\begin{proof}[Proof of Lemma \ref{EVs_SSOR}]
For the SSOR preconditioner $P_{\text{SSOR}} = (L + D) D^{-1} (L + D)^T$, we have
$$L = \begin{pmatrix}
           0  &  0\\
            Z_2^T W Z_1 & 0
        \end{pmatrix}, ~~ D = \begin{pmatrix}
           D_1  &  0\\
            0 & D_2
        \end{pmatrix}, ~~ \Sigma^{-1} + Z^T W Z = \begin{pmatrix}
           D_1  & Z_1^T W Z_2\\
            Z_2^T W Z_1 & D_2
        \end{pmatrix}.$$
It can easily be seen that 
$$(D + L)^{-1} = \begin{pmatrix}
           D_1^{-1}       & 0\\
            -D_2^{-1}Z_2^T W Z_1 D_1^{-1} & D_2^{-1}
        \end{pmatrix}, $$
and we thus obtain
\begin{equation}\label{SSORE_pced_M}
    \begin{split}
        P_{\text{SSOR}}^{-1/2}(\Sigma^{-1} + Z^T W Z)P_{\text{SSOR}}^{-T/2} &= D^{1/2}(D + L)^{-1}(\Sigma^{-1} + Z^T W Z)(D + L)^{-T}D^{1/2}\\
        &= I_m - \begin{pmatrix}
           0       & 0\\
            0 & D_2^{-1/2}Z_2^T W Z_1 D_1^{-1}Z_1^T W Z_2D_2^{-1/2}
        \end{pmatrix}.
    \end{split}
\end{equation}
Because $D_2^{-1/2}Z_2^T W Z_1 D_1^{-1}Z_1^T W Z_2D_2^{-1/2}$ is a symmetric positive semi-definite matrix and using the fact that $\lambda_i(I_m + A) = 1 + \lambda_i(A)$ for any normal matrix $A$, we obtain \eqref{large_ev_ssor}.

Using Lemma \ref{evs_A_adjacency}, we have
\begin{equation*}
    \begin{split}
        \sigma_{\max}(D_2^{-1/2}Z_2^T W Z_1 D_1^{-1/2}) & = \lambda_{\max}\begin{pmatrix}
           0       & D_1^{-1/2}Z_1^T W Z_2 D_2^{-1/2}\\
            D_2^{-1/2}Z_2^T W Z_1 D_1^{-1/2} & 0
        \end{pmatrix}\\
        &=\lambda_{\max}(D^{-1/2}\Lambda D^{-1/2})\\
        &=\lambda_{\max}(D^{-1/2}\Delta^{1/2}\Delta^{-1/2}\Lambda\Delta^{-1/2}\Delta^{1/2} D^{-1/2}).
    \end{split}
\end{equation*}
Note that $\Delta^{-1/2}\Lambda\Delta^{-1/2}$ is a normalized adjacency matrix of a bipartite weighted graph. Its largest eigenvalues is thus $1$ \citep{chung1997spectral}. By Lemma \ref{svs_diag}, or Ostrowski's theorem \citep{horn2012matrix}, it follows that
$$
\min((D^{-1}\Delta)_{ii}) \leq  \sigma_{\max}(D_2^{-1/2}Z_2^T W Z_1 D_1^{-1/2}) \leq \max((D^{-1}\Delta)_{ii}).
$$
Since $D^{-1}\Delta = \text{diag}\left(\frac{1}{\Sigma^{-1}_{ii}\Delta^{-1}_{ii} + 1}\right)$ and 
\begin{equation}\label{ssor_eigen_singu}
    \lambda_{i}(D_2^{-1/2}Z_2^T W Z_1 D_1^{-1}Z_1^T W Z_2D_2^{-1/2}) = \sigma_{i}(D_2^{-1/2}Z_2^T W Z_1 D_1^{-1/2})^2, ~~1\leq i\leq\min(m_1,m_2),
\end{equation}
we have
$$
\frac{1}{(\max(\Sigma^{-1}_{ii}\Delta^{-1}_{ii}) + 1)^2} \leq  \lambda_{\max}(D_2^{-1/2}Z_2^T W Z_1 D_1^{-1}Z_1^T W Z_2D_2^{-1/2}) \leq \frac{1}{(\min(\Sigma^{-1}_{ii}\Delta^{-1}_{ii}) + 1)^2},
$$
from which follows \eqref{smallest_ev_ssor}. 

Furthermore, since 
\begin{equation*}
    \begin{split}
        \sigma_{\max}(Z_2^T W Z_1 ) & = \lambda_{\max}\begin{pmatrix}
           0       & Z_1^T W Z_2 \\
            Z_2^T W Z_1  & 0
        \end{pmatrix} 
        = \lambda_{\max}(\Lambda) 
        = \lambda_{\max}(\Delta^{1/2}\Delta^{-1/2}\Lambda\Delta^{-1/2}\Delta^{1/2}).
    \end{split}
\end{equation*}
and applying Lemma \ref{svs_diag}, we obtain 
$$ \min(\Delta_{ii})\leq \sigma_{\max}(Z_2^T W Z_1 )\leq \max(\Delta_{ii}).$$
By again using Lemma \ref{svs_diag}, we obtain
\begin{equation*}
    \begin{split}
        \min((D_1^{-1/2})_{ii})\min((D_2^{-1/2})_{ii})\min(\Delta_{ii})&\leq \sigma_{\max}(D_2^{-1/2}Z_2^T W Z_1 D_1^{-1/2})\\ &\leq \max((D_1^{-1/2})_{ii})\max((D_2^{-1/2})_{ii})\max(\Delta_{ii})
    \end{split}
\end{equation*}

and
\begin{equation*}
    \begin{split}
        \min((D_1)_{ii}^{-1})\min((D_2)_{ii}^{-1})\min(\Delta_{ii}^2) &\leq \lambda_{\max}(D_2^{-1/2}Z_2^T W Z_1 D_1^{-1}Z_1^T W Z_2D_2^{-1/2})\\ &\leq \max((D_1)_{ii}^{-1})\max((D_2)_{ii}^{-1})\max(\Delta_{ii}^2)
    \end{split}
\end{equation*}
which gives the second set of inequalities for $\lambda_{\min}^{\text{SSOR}}$ in \eqref{smallest_ev_ssor2}.
\end{proof}

\begin{proof}[Proof of Lemma \ref{EVs_Diag_no_PC}]
Consider the vectors 
$$v_k = D^{1/2}(\frac{1}{k}1_{m_1}^T,\dots,\frac{1}{k}1_{m_k}^T,-1_{m_{k+1}}^T,0,\dots,0)^T\in \mathbb{R}^m, ~~k=1,\dots,K-1,$$ 
where $1_{m_k}$ denotes a vectors of $1$'s of length $m_k$. We have $D^{-1/2}Z^T W ZD^{-1/2}v_k = 0$, and the vectors $v_k$ are linearly independent. Since $D^{-1/2}Z^T W ZD^{-1/2}$ is positive semi-definite, this shows that the $K-1$ smallest eigenvalues of $D^{-1/2}Z^T W ZD^{-1/2}$ are zero. Applying Weyl's inequality \citep{horn2012matrix} to $P_{\text{Diag}}^{-1/2}(\Sigma^{-1} + Z^T W Z)P_{\text{Diag}}^{-T/2} = D^{-1/2}\Sigma^{-1}D^{-1/2} + D^{-1/2}Z^T W ZD^{-1/2}$ gives
$$\lambda_{\min}(D^{-1/2}\Sigma^{-1}D^{-1/2}) \leq \lambda_{m+1-k}(P_{\text{Diag}}^{-1/2}(\Sigma^{-1} + Z^T W Z)P_{\text{Diag}}^{-T/2}) \leq \lambda_{\max}(D^{-1/2}\Sigma^{-1}D^{-1/2})$$
for $k=1,\dots,K-1$, which proves \eqref{smallest_ev_diag}. 

Similarly as in the proof of Lemma \ref{EVs_SSOR}, we can show, for general $K$, that
\begin{equation}\label{aux_ineq1}
    (K-1)\min((D^{-1}\Delta)_{ii}) \leq \lambda_{\max}(D^{-1/2}\Lambda D^{-1/2})\leq (K-1)\max((D^{-1}\Delta)_{ii}).
\end{equation}
For doing this, we first note that $\Lambda$ is an adjacency matrix of a connected, $K$-partite weighted graph. Moreover, 
\begin{equation*}
    \begin{split}
        \sum_{j=1}^m\Lambda_{ij} &= \sum_{k=1,k\neq\tilde k(i)}^K\sum_{j=1}^{m_k}(Z_{\tilde k(i)}^TWZ_k)_{ij}\\
        &=\sum_{k=1,k\neq\tilde k(i)}^K\sum_{j=1}^{m_k}\sum_{l=1}^n W_{ll}(Z_{\tilde k(i)})_{li}(Z_k)_{lj}\\
        &=\sum_{k=1,k\neq\tilde k(i)}^K\sum_{l=1}^nW_{ll}(Z_{\tilde k(i)})_{li}\\
        &=\sum_{k=1,k\neq\tilde k(i)}^K \Delta_{ii} = (K-1)\Delta_{ii}
    \end{split}
\end{equation*}
where $\tilde k(i) = k$ if $i\in \{\sum_{k'=0}^{k-1}m_{k'}+1,\dots,\sum_{k'=0}^{k}m_{k'}\}$ with the convention $m_0=0$. This shows that $\frac{1}{K-1}\Delta^{-1/2}\Lambda\Delta^{-1/2}$ is a normalized adjacency matrix of a connected, $K$-partite weighted graph whose largest eigenvalue is $1$. Using 
 $$\lambda_{\max}(D^{-1/2}\Lambda D^{-1/2})=(K-1)\lambda_{\max}(D^{-1/2}\Delta^{1/2}\frac{1}{K-1}\Delta^{-1/2}\Lambda\Delta^{-1/2}\Delta^{1/2} D^{-1/2})$$
 gives \eqref{aux_ineq1}, which together with
\begin{equation}\label{pced_diag}
    P_{\text{Diag}}^{-1/2}(\Sigma^{-1} + Z^T W Z)P_{\text{Diag}}^{-T/2} = I_m + D^{-1/2} \Lambda D^{-1/2}
\end{equation}
proves the statement in \eqref{largest_ev_diag}.

Next, we have
\begin{equation*}
    \begin{split}
        \Sigma^{-1} + Z^TWZ = \Sigma^{-1} + \Delta^{1/2}(I_m + \Delta^{-1/2}\Lambda\Delta^{-1/2})\Delta^{1/2}.
    \end{split}
\end{equation*}
Applying Weyl's inequality \citep{horn2012matrix}, Ostrowski's theorem \citep{horn2012matrix}, and the fact that the largest eigenvalue of $\Delta^{-1/2}\Lambda\Delta^{-1/2}$ is $K-1$, see above, we obtain \eqref{largest_ev_no_pc}. \eqref{smallest_ev_no_pc} is shown analogously as \eqref{smallest_ev_diag} above.
\end{proof}

For proofing parts of Theorem \ref{cond_number_gaussian}, we first derive the following slightly more general result.
\begin{theorem}\label{EVs_Diag_gaussian_balanced}
If the likelihood is Gaussian, it holds that
\begin{align}
&  1 + \frac{K-1}{\frac{\max(\Sigma^{-1})\sigma^2}{d_{\min}}+1} \leq  \lambda_{\max}^{\text{Diag}} \leq 1 + \frac{K-1}{\frac{\min(\Sigma^{-1})\sigma^2}{d_{\max}}+1},\label{largest_ev_diag_gauss} \\
& \frac{1}{\frac{\max(\Sigma)}{\sigma^2}d_{\max} + 1} \leq  \lambda_{m+1-k}^{\text{Diag}} \leq \frac{1}{\frac{\min(\Sigma)}{\sigma^2}d_{\min} + 1}, ~~ k=1,\dots,K-1,\label{smallest_ev_diag_gauss}
\end{align}
where $d_{\max} = \max(Z^TZ)$ and $d_{\min} = \min(Z^TZ)$.

If, in addition, $K=2$ and the design of the random effects is balanced in the sense that every random effect occurs $d_k$ times for component $k=1,2$, i.e., $d_k = \frac{n}{m_k} = \sum_{i=1}^n (Z_k)_{ij}$ for all $1\leq j \leq m_k$, it holds that
\begin{equation}\label{largest_EV_diag_balanced}
    \lambda_{\max}^{\text{Diag}} = 1 + \frac{1}{\sqrt{\left(\frac{\sigma^2}{\sigma_1^2d_1}+ 1\right)\left(\frac{\sigma^2}{\sigma_2^2d_2}+ 1\right)}}
\end{equation}
\begin{equation}\label{smallesd_EV_diag_balanced}
    \lambda_{\min}^{\text{Diag}} = 1 - \frac{1}{\sqrt{\left(\frac{\sigma^2}{\sigma_1^2d_1}+ 1\right)\left(\frac{\sigma^2}{\sigma_2^2d_2}+ 1\right)}}
\end{equation}
\end{theorem}

\begin{proof}[Proof of Theorem \ref{EVs_Diag_gaussian_balanced}]
First, \eqref{largest_ev_diag_gauss} and \eqref{smallest_ev_diag_gauss} follow directly from \eqref{largest_ev_diag} and \eqref{smallest_ev_diag}. 

For deriving \eqref{largest_EV_diag_balanced}, note that 
\begin{equation*}
    \begin{split}
D^{-1/2}\Delta^{1/2} &= \text{diag}\left(\frac{1}{\Sigma^{-1}_{ii}\Delta^{-1}_{ii} + 1}\right)^{1/2}
\\ &=\text{diag}\left(\frac{1}{\sqrt{\frac{\sigma^2}{\sigma_1^2d_1}+ 1}},\dots,\frac{1}{\sqrt{\frac{\sigma^2}{\sigma_1^2d_1}+ 1}},\frac{1}{\sqrt{\frac{\sigma^2}{\sigma_2^2d_2}+ 1}},\dots,\frac{1}{\sqrt{\frac{\sigma^2}{\sigma_2^2d_2}+ 1}}\right),
    \end{split}
\end{equation*}
which implies 
\begin{equation}\label{pced_diag_min_1}
    D^{-1/2}\Delta^{1/2}\Delta^{-1/2}\Lambda\Delta^{-1/2}\Delta^{1/2} D^{-1/2}=\frac{1}{\sqrt{\frac{\sigma^2}{\sigma_1^2d_1}+ 1}}\frac{1}{\sqrt{\frac{\sigma^2}{\sigma_2^2d_2}+ 1}}\Delta^{-1/2}\Lambda\Delta^{-1/2}.
\end{equation}
Using $\lambda_{\max}(\Delta^{-1/2}\Lambda\Delta^{-1/2})=K-1$, see the proof of Lemma \ref{EVs_Diag_no_PC}, then gives
\begin{equation*}
    \begin{split}
        \lambda_{\max}(D^{-1/2}\Lambda D^{-1/2})  &=\lambda_{\max}(D^{-1/2}\Delta^{1/2}\Delta^{-1/2}\Lambda\Delta^{-1/2}\Delta^{1/2} D^{-1/2})\\
        &=\frac{K-1}{\sqrt{\frac{\sigma^2}{\sigma_1^2d_1}+ 1}\sqrt{\frac{\sigma^2}{\sigma_2^2d_2}+ 1}}.
    \end{split}
\end{equation*}
This and \eqref{pced_diag} proof \eqref{largest_EV_diag_balanced}. 

By analogous arguments, again using \eqref{pced_diag_min_1} and $\lambda_{\min}(\Delta^{-1/2}\Lambda\Delta^{-1/2})=-1$ due to Lemma \ref{evs_A_adjacency}, we have
\begin{equation*}
        \lambda_{\min}(D^{-1/2}\Lambda D^{-1/2})  =-\frac{1}{\sqrt{\frac{\sigma^2}{\sigma_1^2d_1}+ 1}}\frac{1}{\sqrt{\frac{\sigma^2}{\sigma_2^2d_2}+ 1}},
\end{equation*}
which gives \eqref{smallesd_EV_diag_balanced}.
\end{proof}

\begin{proof}[Proof of Theorem \ref{cond_number_gaussian}]
First, \eqref{smallest_ev_ssor} and 
$$\min(\Sigma^{-1}_{ii}(Z^TWZ)_{ii}^{-1}) \geq \min(\Sigma^{-1})_{ii}\min((Z^TWZ)_{ii}^{-1})=\frac{\sigma^2}{\sigma^2_{\max}d_{\max}}$$ 
imply 
$$1 - \left(\frac{\sigma^2_{\max}d_{\max}}{\sigma^2+\sigma^2_{\max}d_{max}}\right)^2 \leq  \lambda_{\min}^{\text{SSOR}} \leq  1 - \left(\frac{\sigma^2_{\min}d_{\min}}{\sigma^2+\sigma^2_{\min}d_{\min}}\right)^2$$ which gives \eqref{cond_numb_ssor_gauss_non_balanced}. In addition,
$$\frac{1}{1 - \left(\frac{\sigma^2_{\max}d_{\max}}{\sigma^2+\sigma^2_{\max}d_{\max}}\right)^2}=\frac{1}{2}\frac{\sigma^2_{\max}d_{\max}}{\sigma^2} + \frac{3}{4} + \frac{1}{8\sigma^2_{\max}d_{\max}/\sigma^2 + 4}$$
yields the asymptotic result in \eqref{cond_numb_ssor_gauss_non_balanced}.

Furthermore, plugging in $D_k=\text{diag}(1/\sigma^2_k + d_k / \sigma^2)$, $k=1,2$, in \eqref{smallest_ev_ssor2} gives
\begin{equation*}
    \begin{split}
\frac{\lambda_{\max}^{\text{SSOR}}}{\lambda_{\min}^{\text{SSOR}}} &= \frac{1}{1 - \left(\frac{1}{\frac{\sigma^2}{\sigma^2_1d}+1}\right)\left(\frac{1}{\frac{\sigma^2}{\sigma^2_2d}+1}\right)} =\frac{1}{\frac{(\sigma^2)^2 + \sigma^2(\sigma^2_1+\sigma^2_2)d}{(\sigma^2)^2 + \sigma^2(\sigma^2_1+\sigma^2_2)d + \sigma^2_1\sigma^2_2d^2}} =\frac{(\sigma^2)^2/d + \sigma^2(\sigma^2_1+\sigma^2_2) + \sigma^2_1\sigma^2_2d}{(\sigma^2)^2/d + \sigma^2(\sigma^2_1+\sigma^2_2)},
    \end{split}
\end{equation*}
which yields 
$$\lim_{d\rightarrow \infty}\left(\frac{\lambda_{\max}^{\text{SSOR}}}{\lambda_{\min}^{\text{SSOR}}} - \left(\frac{\sigma^2_1\sigma^2_2}{\sigma^2(\sigma^2_1+\sigma^2_2)}d + 1\right) \right) = 0$$ 
and thus proves the statements in \eqref{cond_number_ssor_balanced}.

Next, Theorem \ref{EVs_Diag_gaussian_balanced} implies 
$$\left(1 + \frac{1}{\frac{\sigma^2}{\sigma^2_{\min}d_{\min}}+1}\right)\left(\frac{\sigma^2_{\min}d_{\min}}{\sigma^2} + 1\right) \leq \frac{\lambda_{\max}^{\text{Diag}}}{\lambda_{\min}^{\text{Diag}}} \leq \left(1 + \frac{1}{\frac{\sigma^2}{\sigma^2_{\max}d_{\max}}+1} \right)\left(\frac{\sigma^2_{\max}d_{\max}}{\sigma^2} + 1\right),$$
which gives
\eqref{cond_numb_diag_gauss_non_balanced}. Similarly, by Theorem \ref{EVs_Diag_gaussian_balanced}, we have
\begin{equation*}
         \frac{\lambda_{\max}^{\text{Diag}}}{\lambda_{\min}^{\text{Diag}} } = \frac{\left(1 + \frac{1}{\sqrt{\left(\frac{\sigma^2}{\sigma_1^2d}+ 1\right)\left(\frac{\sigma^2}{\sigma_2^2d}+ 1\right)}}\right)}{\left(1 - \frac{1}{\sqrt{\left(\frac{\sigma^2}{\sigma_1^2d}+ 1\right)\left(\frac{\sigma^2}{\sigma_2^2d}+ 1\right)}}\right)} = \frac{\sqrt{\left(\frac{\sigma^2}{\sigma_1^2 d} + 1\right)\left(\frac{\sigma^2}{\sigma_2^2 d} + 1\right)} + 1}{\sqrt{\left(\frac{\sigma^2}{\sigma_1^2 d} + 1\right)\left(\frac{\sigma^2}{\sigma_2^2 d} + 1\right)} - 1},
\end{equation*} 
which is the first equality in \eqref{cond_number_diag_balanced}. Using 
$$
\sqrt{\left(\frac{\sigma^2}{\sigma_1^2 d} +1 \right) \left(\frac{\sigma^2}{\sigma_2^2 d} + 1\right)}
= 1 + \frac{1}{2} \left(\frac{\sigma^2}{\sigma_1^2 d} + \frac{\sigma^2}{\sigma_2^2 d} \right) + \mathcal{O}(d^{-2}),
$$
one can show that $\frac{\lambda_{\max}^{\text{Diag}}}{\lambda_{\min}^{\text{Diag}} }$ has the linear asymptote $\frac{4\sigma^2_1\sigma^2_2}{\sigma^2(\sigma^2_1+\sigma^2_2)}$, i.e., 
$$\lim_{d\rightarrow \infty}\left(\frac{\lambda_{\max}^{\text{Diag}}}{\lambda_{\min}^{\text{Diag}}} - \left(\frac{4\sigma^2_1\sigma^2_2}{\sigma^2(\sigma^2_1+\sigma^2_2)}d+1\right)\right) = 0.$$

Finally, \eqref{none_cond_number} follows from Lemma \ref{EVs_Diag_no_PC}.
\end{proof}

% We have the following results for the extremal eigenvalues of the unpreconditioned matrix $\Sigma^{-1} + Z^T W Z$. 

\begin{proof}[Proof of Theorem \ref{cond_number_binary}]
We first note that 
\begin{equation}\label{ineq_aux_binary}
\begin{split}
\min(\Sigma^{-1}_{ii}(Z^TWZ)_{ii}^{-1}) &\geq \frac{1}{\max(\Sigma)_{ii}\max((Z^TWZ)_{ii})}\\
    & \geq \frac{1}{\max(\Sigma)_{ii}\max((Z^TZ)_{ii})\max(W)} \\
    & \geq \frac{1}{\sigma^2_{\max}d_{\max}C},
\end{split}
\end{equation}
since 
\begin{equation}\label{ineq_w_binary}
\max(W) \leq C,
\end{equation}where $C$ equals $0.25$ and $1$ for a logit and a probit link, respectively. This and \eqref{smallest_ev_ssor} imply
$$1 - \left(\frac{\sigma^2_{\max}d_{\max}C}{1+\sigma^2_{\max}d_{\max}C}\right)^2 \leq  \lambda_{\min}^{\text{SSOR}},$$ which gives the inequality in \eqref{cond_numb_ssor_binary_non_balanced}. Similarly as in the proof of Theorem \ref{cond_number_gaussian}, we have
$$\frac{1}{1 - \left(\frac{\sigma^2_{\max}d_{\max}C}{1+\sigma^2_{\max}d_{\max}C}\right)^2}=\frac{\sigma^2_{\max}C}{2}d_{\max} + \frac{3}{4} + \frac{1}{8\sigma^2_{\max}d_{\max}C + 4},$$
which yields the asymptotic result in \eqref{cond_numb_ssor_binary_non_balanced}.

For the diagonal preconditioner, \eqref{largest_ev_diag} and the inequality in \eqref{ineq_aux_binary} imply 
$$\lambda_{\max}^{\text{Diag}} \leq 1 + \frac{1}{\frac{1}{\sigma^2_{\max}d_{\max}C}+1}=\frac{2\sigma^2_{\max}d_{\max}C + 1}{\sigma^2_{\max}d_{\max}C+1},$$
and \eqref{smallest_ev_diag} and \eqref{ineq_w_binary} imply
$$ \frac{1}{1 + \sigma^2_{\max} d_{\max}C} \leq  \lambda_{\min}^{\text{Diag}}.$$
This gives \eqref{cond_numb_diag_binary_non_balanced}. For the unpreconditioned matrix, Lemma \ref{EVs_Diag_no_PC} and \eqref{ineq_w_binary} imply \eqref{none_cond_number_binary}.
\end{proof}

% \section{Effective condition numbers for the diagonal preconditioner and proof of Corollary \ref{eff_cond_SSOR_Diag_gaussian_balanced}}\label{ineq_cond_second_smalles_EV_Diag}
\begin{proof}[Proof of Theorem \ref{second_smallest_EV_SSOR_gaussian}]
Similarly as in the proof of Lemma \ref{EVs_SSOR}, we can apply Lemma \ref{evs_A_adjacency} and obtain
\begin{equation}\label{second_ev_intermed}
    \begin{split}
        \sigma_{2}\left(\frac{1}{\sigma^2}D_2^{-1/2}Z_2^T Z_1 D_1^{-1/2}\right) & =\lambda_{2}\left(\frac{1}{\sigma^2}D^{-1/2} \begin{pmatrix}
           0  & Z_1^T Z_2\\
            Z_2^T Z_1 & 0
        \end{pmatrix} D^{-1/2}\right)\\
        % & \leq \lambda_{2} \begin{pmatrix}
        %    0  & Z_1^T Z_2\\
        %     Z_2^T Z_1 & 0
        % \end{pmatrix}\max(\sigma^{-2}D^{-1})\\
        & = \lambda_{2} \begin{pmatrix}
           0  & Z_1^T Z_2\\
            Z_2^T Z_1 & 0
        \end{pmatrix}\frac{1}{\sqrt{\sigma^2/\sigma_1^2 + d_1}}\frac{1}{\sqrt{\sigma^2/\sigma_2^2 + d_2}},
    \end{split}
\end{equation}    
where we have used 
$$
\sigma^{-1}D^{-1/2} = \text{diag}\left(\frac{1}{\sqrt{\sigma^2/\sigma_1^2 + d_1}},\dots,\frac{1}{\sqrt{\sigma^2/\sigma_1^2 + d_1}},\frac{1}{\sqrt{\sigma^2/\sigma_2^2 + d_2}},\dots,\frac{1}{\sqrt{\sigma^2/\sigma_2^2 + d_2}}\right).
$$ 
Applying Theorem 3.2 of \citet{brito2022spectral} to $\lambda_{2} \begin{pmatrix} 0  & Z_1^T Z_2\\ Z_2^T Z_1 & 0 \end{pmatrix}$ then shows that 
\begin{equation*}
\begin{split}
\sigma_{2}\left(\frac{1}{\sigma^2}D_2^{-1/2}Z_2^T Z_1 D_1^{-1/2}\right) &\leq \frac{\sqrt{d_1 - 1} + \sqrt{d_2 - 1}}{\sqrt{\sigma^2/\sigma_1^2 + d_1}\sqrt{\sigma^2/\sigma_2^2 + d_2}}+\epsilon_m'\\
&\leq \frac{1}{\sqrt{d_1}} + \frac{1}{\sqrt{d_2}} +\epsilon_m'
\end{split}
\end{equation*}
asymptotically almost surely with $\epsilon_m'\rightarrow 0$ as $m\rightarrow \infty$. Using \eqref{SSORE_pced_M} and \eqref{ssor_eigen_singu} then give
\begin{equation}\label{second_smalles_EV_ineq}
\begin{split}
    \lambda_{m-1}^{\text{SSOR}} &\geq 1 - \frac{(\sqrt{d_1 - 1} + \sqrt{d_2 - 1})^2}{(\sigma^2/\sigma_1^2 + d_1)(\sigma^2/\sigma_2^2 + d_2)} -\epsilon_m
    \\
    &\geq 1 - \left(\frac{1}{d_1} + \frac{1}{d_2} + \frac{2}{\sqrt{d_1d_2}}\right) - \epsilon_m. 
    \end{split}
\end{equation}
Finally, \eqref{cond_second_smalles_EV_SSOR} follows directly from \eqref{second_smalles_EV_ineq}.
\end{proof}

\begin{theorem}\label{second_smallest_EVs_diag_gaussian}
If $K=2$, the likelihood is Gaussian, and $\begin{pmatrix} 0  & Z_1^T Z_2\\  Z_2^T Z_1 & 0 \end{pmatrix} \in \{0,1\}^{m\times m}$ is an adjacency matrix of a bipartite, biregular random graph with uniform distribution over all bipartite, biregular random graphs, the following holds:
\begin{equation}\label{cond_second_smalles_EV_Diag}
    \kappa_{m-1,2}^{\text{Diag}}=\frac{\lambda_{2}^{\text{Diag}}}{\lambda_{m-1}^{\text{Diag}}} \leq \frac{ 1 + \frac{1}{\sqrt{d_1}} + \frac{1}{\sqrt{d_2}} + \epsilon_m}{ 1 - \left(\frac{1}{\sqrt{d_1}} + \frac{1}{\sqrt{d_2}}\right) - \epsilon_m}
\end{equation}
asymptotically almost surely with $\epsilon_m\rightarrow 0$ as $m\rightarrow \infty$, where $d_k = \frac{n}{m_k}$ for $k=1,2$.
\end{theorem}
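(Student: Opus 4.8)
The plan is to follow the same strategy as the proof of Theorem \ref{second_smallest_EV_SSOR_gaussian}, but to exploit the sign symmetry of the bipartite adjacency spectrum so that a single spectral-gap estimate controls \emph{both} the second-largest and the second-smallest eigenvalue at once. The starting point is the identity \eqref{pced_diag}, namely $P_{\text{Diag}}^{-1/2}(\Sigma^{-1}+Z^TWZ)P_{\text{Diag}}^{-T/2} = I_m + D^{-1/2}\Lambda D^{-1/2}$, which gives $\lambda_i^{\text{Diag}} = 1 + \lambda_i(D^{-1/2}\Lambda D^{-1/2})$ for every $i$. In particular $\lambda_{2}^{\text{Diag}} = 1 + \lambda_2(D^{-1/2}\Lambda D^{-1/2})$ and $\lambda_{m-1}^{\text{Diag}} = 1 + \lambda_{m-1}(D^{-1/2}\Lambda D^{-1/2})$, so it suffices to bound the second-largest and second-smallest eigenvalues of $D^{-1/2}\Lambda D^{-1/2}$.

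First I would diagonalize $D^{-1/2}$ explicitly in the balanced Gaussian setting, exactly as in the proof of Theorem \ref{EVs_Diag_gaussian_balanced}. Since each $Z_k$ is a binary incidence matrix, the diagonal blocks $Z_1^TWZ_1$ and $Z_2^TWZ_2$ are themselves diagonal, so $\Lambda = \frac{1}{\sigma^2}\begin{pmatrix} 0 & Z_1^TZ_2 \\ Z_2^TZ_1 & 0 \end{pmatrix}$, and combining this with $\sigma^{-1}D^{-1/2}=\text{diag}(1/\sqrt{\sigma^2/\sigma_1^2+d_1},\dots,1/\sqrt{\sigma^2/\sigma_2^2+d_2})$ yields
\begin{equation*}
D^{-1/2}\Lambda D^{-1/2} = \frac{1}{\sqrt{\left(\sigma^2/\sigma_1^2+d_1\right)\left(\sigma^2/\sigma_2^2+d_2\right)}} \begin{pmatrix} 0 & Z_1^TZ_2 \\ Z_2^TZ_1 & 0 \end{pmatrix}.
\end{equation*}
Thus $D^{-1/2}\Lambda D^{-1/2}$ is a \emph{positive} scalar multiple of the bipartite adjacency matrix $A$, and by Lemma \ref{evs_A_adjacency} its spectrum is symmetric about zero. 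This is the structural fact that separates the diagonal case from the SSOR case: here $\lambda_{m-1}(D^{-1/2}\Lambda D^{-1/2}) = -\lambda_2(D^{-1/2}\Lambda D^{-1/2})$, so the single outlier at each extreme ($\pm$ the biregular Perron value $\sqrt{d_1d_2}$ up to normalization) is exactly what justifies studying $\kappa_{m-1,2}^{\text{Diag}}$ rather than $\kappa_{m,1}^{\text{Diag}}$.

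Next I would apply Theorem 3.2 of \citet{brito2022spectral}, precisely as in the derivation of \eqref{second_ev_intermed}, to bound $\lambda_2(A)\leq \sqrt{d_1-1}+\sqrt{d_2-1}$ up to a vanishing term asymptotically almost surely. Dividing by the normalization and using the elementary inequalities $\sqrt{d_1-1}\leq\sqrt{\sigma^2/\sigma_1^2+d_1}$ and $\sqrt{d_2-1}\leq\sqrt{\sigma^2/\sigma_2^2+d_2}$ collapses the resulting fraction to $\tfrac{1}{\sqrt{d_1}}+\tfrac{1}{\sqrt{d_2}}$, giving $\lambda_2(D^{-1/2}\Lambda D^{-1/2})\leq \tfrac{1}{\sqrt{d_1}}+\tfrac{1}{\sqrt{d_2}}+\epsilon_m$. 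Invoking the sign symmetry then yields $\lambda_{2}^{\text{Diag}}\leq 1+\tfrac{1}{\sqrt{d_1}}+\tfrac{1}{\sqrt{d_2}}+\epsilon_m$ and $\lambda_{m-1}^{\text{Diag}}\geq 1-\left(\tfrac{1}{\sqrt{d_1}}+\tfrac{1}{\sqrt{d_2}}\right)-\epsilon_m$ simultaneously, and taking the quotient produces \eqref{cond_second_smalles_EV_Diag}.

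The main obstacle is the probabilistic bookkeeping behind the ``asymptotically almost surely'' qualifier: the Brito--et al.\ spectral-gap theorem is what forces the biregular random-graph hypothesis, and I would need to check that one vanishing sequence $\epsilon_m\to 0$ simultaneously governs the second-largest eigenvalue and, through the exact sign symmetry, the second-smallest one, while confirming that the deterministic normalizing constant $\sqrt{(\sigma^2/\sigma_1^2+d_1)(\sigma^2/\sigma_2^2+d_2)}$ can be absorbed cleanly into the final bound. Every remaining step is routine once the explicit form of $D^{-1/2}$ and the zero-symmetry of the bipartite spectrum are established.
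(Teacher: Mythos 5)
Your proposal is correct and follows essentially the same route as the paper's proof: the identity \eqref{pced_diag}, the explicit normalization underlying \eqref{second_ev_intermed}, Theorem 3.2 of \citet{brito2022spectral} for the second-largest adjacency eigenvalue, and Lemma \ref{evs_A_adjacency} for the lower bound on $\lambda_{m-1}^{\text{Diag}}$. Your explicit use of the sign symmetry $\lambda_{m-1}(D^{-1/2}\Lambda D^{-1/2}) = -\lambda_2(D^{-1/2}\Lambda D^{-1/2})$ is precisely what the paper compresses into ``Lemma \ref{evs_A_adjacency} and analogous arguments,'' so it is a slightly more explicit rendering of the same argument rather than a different one.
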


\begin{proof}[Proof of Theorem \ref{second_smallest_EVs_diag_gaussian}]
We have $\lambda_{2}^{\text{Diag}} = 1 + \lambda_2(D^{-1/2} \Lambda D^{-1/2})$ because of \eqref{pced_diag}. Next, \eqref{second_ev_intermed} gives
$$
\lambda_2(D^{-1/2} \Lambda D^{-1/2}) = \lambda_{2} \begin{pmatrix}
           0  & Z_1^T Z_2\\
            Z_2^T Z_1 & 0
        \end{pmatrix}\frac{1}{\sqrt{\sigma^2/\sigma_1^2 + d_1}}\frac{1}{\sqrt{\sigma^2/\sigma_2^2 + d_2}}.$$
Analogously as in the proof of Theorem \ref{second_smallest_EV_SSOR_gaussian}, applying Theorem 3.2 of \citet{brito2022spectral} to $\lambda_{2} \begin{pmatrix} 0  & Z_1^T Z_2\\ Z_2^T Z_1 & 0 \end{pmatrix}$ then shows that 
\begin{equation*}
\begin{split}
\lambda_{2}^{\text{Diag}} &\leq 1 + \frac{\sqrt{d_1 - 1} + \sqrt{d_2 - 1}}{\sqrt{\sigma^2/\sigma_1^2 + d_1}\sqrt{\sigma^2/\sigma_2^2 + d_2}}+\epsilon_m\\
&\leq 1 + \frac{1}{\sqrt{d_1}} + \frac{1}{\sqrt{d_2}} +\epsilon_m
\end{split}
\end{equation*}
asymptotically almost surely with $\epsilon_m\rightarrow 0$ as $m\rightarrow \infty$. Lemma \ref{evs_A_adjacency} and analogous arguments give 
\begin{equation*}
\lambda_{m-1}^{\text{Diag}} \geq 1 - \left(\frac{1}{\sqrt{d_1}} + \frac{1}{\sqrt{d_2}}\right) -\epsilon_m.
\end{equation*}
The above two inequalities then result in \eqref{cond_second_smalles_EV_Diag}.
\end{proof}

\begin{proof}[Proof of Corollary \ref{eff_cond_SSOR_Diag_gaussian_balanced}]
Theorem \ref{second_smallest_EV_SSOR_gaussian} implies that 
$$
\kappa_{m-1,1}^{\text{SSOR}} \leq \frac{1}{ 1 - \frac{4}{d} - \epsilon_m}.
$$
Applying the expansion $(1-x)^{-1}=1 + x + x^2 + O(x^3)$ with $x=\frac{4}{d} + \epsilon_m$ gives \eqref{ineq_asym_eff_cond_ssor}. Similarly, Theorem \ref{second_smallest_EVs_diag_gaussian}, applying the expansion $(1-x)^{-1}=1 + x + x^2 + O(x^3)$ in the denominator of $\frac{ 1 + \frac{2}{\sqrt{d}} + \epsilon_m'}{ 1 - \frac{2}{\sqrt{d}} - \epsilon_m'}$, and multiplying by the numerator $1 + \frac{2}{\sqrt{d}} + \epsilon_m'$ give \eqref{ineq_asym_eff_cond_diag}.
\end{proof}

\section{Additional algorithms for predictive variances} \label{app_pred_var}

\subsection{Predictive variances using a stochastic estimator for the diagonal of a matrix}
We have also considered a similar algorithm as Algorithm \ref{alg:approach6} using the alternative version in \eqref{postpred_Laplace_cov1} for $\Omega_p$ instead of \eqref{postpred_Laplace_cov2}. Specifically, $\text{diag}(Z_{po} \Sigma Z^T \Psi^{-1} Z\Sigma Z_{po}^T)$ is approximated stochastically with the approach of \cite{bekas2007estimator} using the Woodbury identity, and all other terms are computed deterministically. However, experiments show that such an algorithm results in less accurate estimates compared to Algorithm \ref{alg:approach6} (results not shown).

\subsection{Predictive covariances by simulating from a multivariate Gaussian distribution}

Algorithm \ref{alg:approach5} presents an approach that allows for approximating predictive covariances $\Omega_p$ by sampling from a Gaussian distribution with covariance matrix $Z_{po} (\Sigma^{-1} + Z^TWZ)^{-1} Z_{po}^T$. The linear solves $(\Sigma^{-1} + Z^T W Z)^{-1} z_i^{(3)}$ can be done using the preconditioned CG method and the computational complexity of Algorithm \ref{alg:approach5} is thus $O(sn + slm + sn_p)$. This algorithm can also be trivially parallelized and results in an unbiased and consistent approximation for $\Omega_p$; see Appendix \ref{sec_prop_proofs} for a proof of Proposition \ref{prop_approach5}. In contrast to Algorithm \ref{alg:approach6}, this method can be used to calculate the entire covariance matrix and not just the variances on the diagonal. Furthermore, Algorithm \ref{alg:approach5} can be adapted to compute only the predictive variances $\text{diag}(\boldsymbol{\Omega}_p)$ by summing $\boldsymbol{z}_i^{(4)} \odot \boldsymbol{z}_i^{(4)}$ in Line \ref{line5}. We have also considered applying variance reduction to the estimator in Algorithm \ref{alg:approach5} by using a control variate based on $Z_{po}P^{-1}Z_{po}^T$. However, we found that the variance reduction was very small for this estimator and control variate since $\widehat{\text{Cov}}((z_i^{(4)} (z_i^{(4)})^T)_{jk}, (z_i^{(5)} (z_i^{(5)})^T)_{jk})$, $z_i^{(5)} = Z_{po} P^{-\frac{T}{2}} z_i^{(1)}$ and $j,k=1,2,\dots,n_p$, is often close to zero. \cite{kundig2024iterative} use a similar approach in Gaussian process regression.

\begin{algorithm}[ht!]
    \caption{Approximate predictive covariances using simulation}
    \label{alg:approach5}
    \begin{algorithmic}[1]
        \Require Matrices $Z_{po}, Z, \Sigma, Z_{pp}, \Sigma_p$, and $W$
        \Ensure Approximate predictive covariances $\hat{\Omega}_p$
        \For{$i \gets 1$ to $s$}
            \State{Sample $z_i^{(1)}\overset{\text{i.i.d.}}{\sim} \mathcal{N}(0, I_m)$ and $z_i^{(2)}\overset{\text{i.i.d.}}{\sim} \mathcal{N}(0, I_n)$ and set $z_i^{(3)} \gets \Sigma^{-\frac{1}{2}}z_i^{(1)} + Z^T W^{\frac{1}{2}} z_i^{(2)}$}
            \State{$z_i^{(4)} \gets Z_{po} (\Sigma^{-1} + Z^T W Z)^{-1} z_i^{(3)}$}
        \EndFor
        \State{$\hat{\Omega}_p \gets Z_{pp} \Sigma_p Z_{pp}^T + \frac{1}{s}\sum_{i=1}^s z_i^{(4)} \left(z_i^{(4)}\right)^T$}\label{line5}
    \end{algorithmic}
\end{algorithm}

\begin{proposition}\label{prop_approach5}
Algorithm \ref{alg:approach5} produces an unbiased and consistent estimator $\hat{\Omega}_p$ for the predictive covariances $\Omega_p$ given in \eqref{postpred_Laplace_cov2}.
\end{proposition}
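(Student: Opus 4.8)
The plan is to reduce the claim to two elementary facts: that $z_i^{(3)}$ is a centered Gaussian whose covariance is exactly $\Sigma^{-1}+Z^TWZ$, and that pushing this vector through the linear map $Z_{po}(\Sigma^{-1}+Z^TWZ)^{-1}$ telescopes the covariance down to the desired middle matrix. Throughout I write $M = \Sigma^{-1}+Z^TWZ$, which is positive definite (hence invertible) since $\Sigma^{-1}$ is and $Z^TWZ$ is positive semidefinite.

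First I would establish unbiasedness. Because $z_i^{(1)}\sim\mathcal N(0,I_m)$ and $z_i^{(2)}\sim\mathcal N(0,I_n)$ are independent, the vector $z_i^{(3)}=\Sigma^{-\frac12}z_i^{(1)}+Z^TW^{\frac12}z_i^{(2)}$ is centered Gaussian with
\begin{equation*}
\text{Cov}(z_i^{(3)}) = \Sigma^{-\frac12}I_m\Sigma^{-\frac12} + Z^TW^{\frac12}I_nW^{\frac12}Z = \Sigma^{-1}+Z^TWZ = M,
\end{equation*}
where $\Sigma^{-\frac12}$ and $W^{\frac12}$ are well defined because $\Sigma$ is diagonal with positive entries and $W$ is diagonal with nonnegative entries (the latter by log-concavity of the likelihood). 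Consequently $z_i^{(4)}=Z_{po}M^{-1}z_i^{(3)}$ is centered with
\begin{equation*}
\mathbb{E}\!\left[z_i^{(4)}(z_i^{(4)})^T\right] = Z_{po}M^{-1}\,\text{Cov}(z_i^{(3)})\,M^{-1}Z_{po}^T = Z_{po}M^{-1}MM^{-1}Z_{po}^T = Z_{po}M^{-1}Z_{po}^T.
\end{equation*}
Adding the deterministic term $Z_{pp}\Sigma_pZ_{pp}^T$ and using linearity of expectation over the $s$ i.i.d. replicates yields $\mathbb{E}[\hat\Omega_p]=\Omega_p$ exactly as in \eqref{postpred_Laplace_cov2}.

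For consistency I would invoke the strong law of large numbers entrywise. The vectors $z_i^{(4)}$ are i.i.d. Gaussian, so every entry of the rank-one matrix $z_i^{(4)}(z_i^{(4)})^T$ is a product of two jointly Gaussian coordinates and therefore has finite expectation (indeed finite moments of all orders). Hence $\frac1s\sum_{i=1}^s z_i^{(4)}(z_i^{(4)})^T \to Z_{po}M^{-1}Z_{po}^T$ almost surely and entrywise as $s\to\infty$, so $\hat\Omega_p\to\Omega_p$ almost surely, which gives strong consistency.

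The computations here are entirely routine; the only point requiring care — and the closest thing to an obstacle — is verifying that the particular construction of $z_i^{(3)}$ reproduces $M$ as its covariance rather than some nearby matrix. This hinges on the independence of $z_i^{(1)}$ and $z_i^{(2)}$ together with the fact that the diagonal matrices $\Sigma$ and $W$ admit symmetric square roots, so that the two independent contributions combine additively into precisely $\Sigma^{-1}+Z^TWZ$.
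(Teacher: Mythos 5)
Your proof is correct and follows essentially the same route as the paper's: observe that $z_i^{(3)}\sim\mathcal N(0,\Sigma^{-1}+Z^TWZ)$ by independence of $z_i^{(1)}$ and $z_i^{(2)}$, deduce that $z_i^{(4)}\sim\mathcal N\bigl(0, Z_{po}(\Sigma^{-1}+Z^TWZ)^{-1}Z_{po}^T\bigr)$, and conclude unbiasedness and consistency of the empirical second-moment matrix plus the deterministic term $Z_{pp}\Sigma_p Z_{pp}^T$. The only difference is that you spell out the ``standard results'' the paper invokes (entrywise strong law of large numbers with finite moments), which is a harmless elaboration rather than a different approach.
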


As an alternative to Algorithm \ref{alg:approach5}, we can use the expression in \eqref{postpred_Laplace_cov1} and approximate $Z_{po} \Sigma Z^T \Psi^{-1} Z\Sigma Z_{po}^T$ using simulation. Algorithm \ref{alg:approach2} presents this approach. The linear solve $\Psi^{-1} z_i^{(3)}$ can be done using the Woodbury identity and the preconditioned CG method. The computational complexity of Algorithm \ref{alg:approach2} is thus $O(sn + slm + sn_p)$. This algorithm also results in an unbiased and consistent approximation for $\Omega_p$; see Appendix \ref{sec_prop_proofs} for a proof of Proposition \ref{prop_approach2}.

\begin{proposition}\label{prop_approach2}
Algorithm \ref{alg:approach2} produces an unbiased and consistent estimator $\hat{\Omega}_p$ for the predictive covariances $\Omega_p$ given in \eqref{postpred_Laplace_cov1}.
\end{proposition}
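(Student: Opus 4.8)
The plan is to mirror the proofs of Propositions \ref{prop_approach6} and \ref{prop_approach5}: identify the random vector whose empirical outer product forms the stochastic part of $\hat{\Omega}_p$, verify that its population covariance coincides with the term $Z_{po} \Sigma Z^T \Psi^{-1} Z\Sigma Z_{po}^T$ appearing in \eqref{postpred_Laplace_cov1}, and then invoke linearity of expectation for unbiasedness and the strong law of large numbers for consistency. The two remaining terms $Z_{po}\Sigma Z_{po}^T$ and $Z_{pp}\Sigma_p Z_{pp}^T$ are computed deterministically and so contribute no bias or variance.

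First I would pin down the covariance of the auxiliary vector $z_i^{(3)}$ used in Algorithm \ref{alg:approach2}. Analogously to Algorithm \ref{alg:approach5}, it is constructed as an affine image of independent standard Gaussian vectors $z_i^{(1)}\sim\mathcal{N}(0,I_m)$ and $z_i^{(2)}\sim\mathcal{N}(0,I_n)$ via the maps $Z\Sigma^{\frac{1}{2}}$ and $W^{-\frac{1}{2}}$, which are well defined since $\Sigma$ and $W$ are diagonal with positive entries. Hence $\mathbb{E}[z_i^{(3)}]=0$ and $\mathrm{Cov}(z_i^{(3)}) = Z\Sigma Z^T + W^{-1} = \Psi$. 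Setting $z_i^{(4)} = Z_{po}\Sigma Z^T\Psi^{-1}z_i^{(3)}$ then gives $\mathbb{E}[z_i^{(4)}]=0$ and
\[
\mathrm{Cov}\!\left(z_i^{(4)}\right) = Z_{po}\Sigma Z^T\Psi^{-1}\,\Psi\,\Psi^{-1}Z\Sigma Z_{po}^T = Z_{po}\Sigma Z^T\Psi^{-1}Z\Sigma Z_{po}^T,
\]
which is exactly the stochastic target in \eqref{postpred_Laplace_cov1}. I would emphasize that carrying out the solve $\Psi^{-1}z_i^{(3)}$ through the Woodbury identity \eqref{woodbury} is an exact algebraic rewriting and therefore leaves this covariance unchanged.

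For unbiasedness I would write $\hat{\Omega}_p = Z_{po}\Sigma Z_{po}^T + Z_{pp}\Sigma_p Z_{pp}^T - \frac{1}{s}\sum_{i=1}^s z_i^{(4)}\left(z_i^{(4)}\right)^T$ and use $\mathbb{E}[z_i^{(4)}]=0$ to get $\mathbb{E}[z_i^{(4)}(z_i^{(4)})^T]=\mathrm{Cov}(z_i^{(4)})$ for each $i$. By linearity of expectation, $\mathbb{E}[\hat{\Omega}_p]$ equals the right-hand side of \eqref{postpred_Laplace_cov1}, i.e. $\Omega_p$. Consistency follows because the matrices $z_i^{(4)}(z_i^{(4)})^T$ are i.i.d. with finite second moments (all Gaussian moments are finite), so the strong law of large numbers applied entrywise yields $\frac{1}{s}\sum_{i=1}^s z_i^{(4)}(z_i^{(4)})^T \to \mathrm{Cov}(z_i^{(4)})$ almost surely as $s\to\infty$, whence $\hat{\Omega}_p\to\Omega_p$.

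The only genuinely delicate point is verifying the covariance identity $\mathrm{Cov}(z_i^{(3)})=\Psi$ and propagating it correctly through the linear map $z_i^{(3)}\mapsto z_i^{(4)}$; everything else is routine bookkeeping. A secondary subtlety, which I would flag exactly as in Propositions \ref{prop_approach6} and \ref{prop_approach5}, is that the idealized estimator treats the solve $\Psi^{-1}z_i^{(3)}$ as exact, so the preconditioned CG truncation error is a separate numerical approximation not accounted for in the unbiasedness and consistency statements.
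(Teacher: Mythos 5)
Your proposal is correct and follows essentially the same route as the paper's proof: observe that $z_i^{(3)} \sim \mathcal{N}(0,\Psi)$, hence $z_i^{(4)} \sim \mathcal{N}(0, Z_{po}\Sigma Z^T\Psi^{-1} Z \Sigma Z_{po}^T)$, and conclude that the empirical average of outer products is an unbiased and consistent estimator of that term while the remaining terms in \eqref{postpred_Laplace_cov1} are deterministic. You merely spell out the covariance propagation, the law of large numbers, and the exactness of the Woodbury rewriting in more detail than the paper, which compresses these into ``by standard results.''
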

%Variance reduction is also not effective for alg:approach2.
\begin{algorithm}[ht!]
    \caption{Approximate predictive covariances using simulation}
    \label{alg:approach2}
    \begin{algorithmic}[1]
        \Require Matrices $Z_{po}, Z, \Sigma, Z_{pp}, \Sigma_p$, and $W$
         \Ensure Approximate predictive covariances $\hat{\Omega}_p$
        \For{$i \gets 1$ to $s$}
            \State{Sample $z_i^{(1)}\overset{\text{i.i.d.}}{\sim} \mathcal{N}(0, I_m)$ and $z_i^{(2)}\overset{\text{i.i.d.}}{\sim} \mathcal{N}(0, I_n)$ and set $z_i^{(3)} \gets Z\Sigma^{\frac{1}{2}}z_i^{(1)} + W^{-\frac{1}{2}} z_i^{(2)}$}
            \State{$z_i^{(4)} \gets Z_{po}\Sigma Z^T\Psi^{-1} z_i^{(3)}$}
        \EndFor
        \State{$\hat{\Omega}_p \gets Z_{po} \Sigma Z_{po}^T + Z_{pp} \Sigma_p Z_{pp}^T - \frac{1}{s}\sum_{i=1}^s z_i^{(4)} \left(z_i^{(4)}\right)^T$}
    \end{algorithmic}
\end{algorithm}

We have also considered another version of such an algorithm as in Algorithms \ref{alg:approach5} and \ref{alg:approach2}. First, applying the Woodbury identity in \eqref{postpred_Laplace_cov1}, we obtain
\begin{equation}\label{latentpostpredictiveWoodbury}
\begin{split}
Z_{po} \Sigma Z^T \Psi^{-1} Z\Sigma Z_{po}^T 
=  & Z_{po} \Sigma Z^T W Z \Sigma Z_{po}^T  - Z_{po} \Sigma Z^TWZ(\Sigma^{-1} + Z^TWZ)^{-1}Z^TW Z \Sigma Z_{po}^T.     
\end{split}
\end{equation}
We can then approximate the last term in \eqref{latentpostpredictiveWoodbury} by simulation, and all other terms in \eqref{postpred_Laplace_cov1} are calculated deterministically. However, we found that this algorithm leads to less accurate estimates than Algorithms \ref{alg:approach5} and \ref{alg:approach2} (results not shown).

\subsection{Predictive variances using preconditioned Lanczos algorithms} \label{section:Lanczos_pred}
\citet{pleiss2018constant} use the Lanczos algorithm to approximate predictive covariance matrices for Gaussian process regression. We adapt and extend this approach for generalized mixed-effects models. Specifically, we approximate $\text{diag}(Z_{po} \Sigma Z^TWZ(\Sigma^{-1} + Z^TWZ)^{-1}Z^TW Z \Sigma Z_{po}^T)$ in \eqref{latentpostpredictiveWoodbury} by running the Lanczos algorithm with the matrix $\Sigma^{-1} + Z^TWZ$ using the normalized average of the column vectors of $Z^T W Z \Sigma Z_{po}^T$ as initial value and approximate $(\Sigma^{-1} + Z^TWZ)^{-1}Z^TW Z\Sigma Z_{po}^T \approx \tilde{Q}\tilde{T}^{-1} \tilde{Q}^{T} Z^T W Z \Sigma Z_{po}^T$, where $\tilde{Q}\tilde{T} \tilde{Q}^{T} \approx \Sigma^{-1} + Z^TWZ$ is obtained after $k$ steps of the Lanczos algorithm, $\tilde{\vecmat{Q}} \in \mathbb{R}^{n\times k}$ has orthonormal columns, and $\tilde{\vecmat{T}} \in \mathbb{R}^{k\times k}$ is tridiagonal. $\text{diag}(Z_{po} \Sigma Z^T W Z \Sigma Z_{po}^T)$ and the other terms in \eqref{postpred_Laplace_cov1} are deterministically computed, and predictive variances are thus approximated as
\begin{equation}
\begin{split}
\text{diag}(\hat{\Omega}_p) 
\approx\; & \text{diag}(Z_{po} \Sigma Z_{po}^T)
+ \text{diag}(Z_{pp} \Sigma_p Z_{pp}^T) 
- \text{diag}(Z_{po} \Sigma Z^T W Z \Sigma Z_{po}^T) \\
& + \text{diag}(Z_{po} \Sigma Z^TWZ\tilde{Q}\tilde{T}^{-1}\tilde{Q}^{T}Z^TW Z \Sigma Z_{po}^T).    
\end{split}
\end{equation}
The computation of a rank $k$ Lanczos approximation requires one matrix-vector multiplication with $\Sigma^{-1} + Z^TWZ$ in each iteration. Since the Lanczos algorithm suffers from numerical stability issues due to loss of orthogonality, we use a full reorthogonalization scheme \citep{wu2000thick}. Computing the Lanczos approximation then has complexity $O(k^2m)$. Furthermore, we consider a similar way of preconditioning as in \cite{kundig2024iterative} by applying the partial Lanczos algorithm to $P^{-\frac{1}{2}}(\Sigma^{-1} + Z^TWZ)P^{-\frac{T}{2}} \approx \tilde{Q}\tilde{T}\tilde{Q}$ and correcting for this by using $P^{-\frac{T}{2}}\tilde{Q}\tilde{T}^{-1}\tilde{Q}P^{-\frac{1}{2}}Z^TW Z\Sigma Z_{po}^T \approx (\Sigma^{-1} + Z^TWZ)^{-1}Z^TW Z\Sigma Z_{po}^T $ as approximation. Note that the Lanczos algorithm requires a symmetric matrix and in contrast to the preconditioned CG method, we need to explicitly calculate a factor $P^{-\frac{1}{2}}$, where $P=P^{\frac{1}{2}}P^{\frac{T}{2}}$. This is possible for the diagonal, SSOR, and ZIC preconditioners. However, in our experiments, this form of preconditioning does not lead to more accurate predictive variance estimates (results not shown). 

\section{Proofs of Propositions \ref{prop_approach6}, \ref{prop_approach5}, and \ref{prop_approach2}} \label{sec_prop_proofs}

\begin{proof}[Proof of Proposition \ref{prop_approach6}]
By standard results, $\frac{1}{s}\sum_{i=1}^s z_i^{(1)} \odot z_i^{(2)} = \frac{1}{s}\sum_{i=1}^s z_i^{(1)} \odot Z_{po} (\Sigma^{-1} + Z^TWZ)^{-1} Z_{po}^T z_i^{(1)}$ in Algorithm \ref{alg:approach6} is an unbiased and consistent estimator for $\text{diag}(Z_{po} (\Sigma^{-1} + Z^TWZ)^{-1} Z_{po}^T)$ and $\frac{1}{s}\sum_{i=1}^s z_i^{(1)} \odot z_i^{(3)} = \frac{1}{s}\sum_{i=1}^s z_i^{(1)} \odot Z_{po}P^{-1}Z_{po}^T z_i^{(1)}$ is an unbiased and consistent estimator for $\text{diag}(Z_{po}P^{-1}Z_{po}^T)$. Thus, the claim in Proposition \ref{prop_approach6} follows.
\end{proof}

\begin{proof}[Proof of Proposition \ref{prop_approach5}]
First, observe that $z_i^{(3)} = \Sigma^{-\frac{1}{2}}z_i^{(1)} + Z^T W^{\frac{1}{2}} z_i^{(2)} \sim \mathcal{N}(\vecmat{0}, (\Sigma^{-1} + Z^T W Z))$. It follows that $z_i^{(4)} = Z_{po} (\Sigma^{-1} + Z^T W Z)^{-1} z_i^{(3)} \sim \mathcal{N}(\vecmat{0}, Z_{po} (\Sigma^{-1} + Z^T W Z)^{-1} Z_{po}^T)$.
By standard results, $\frac{1}{s}\sum_{i=1}^s z_i^{(4)} \left(z_i^{(4)}\right)^T$ in Algorithm \ref{alg:approach5} is an unbiased and consistent estimator for $Z_{po} (\Sigma^{-1} + Z^T W Z)^{-1} Z_{po}^T$, and the claim in Proposition \ref{prop_approach5} thus follows.
\end{proof}

\begin{proof}[Proof of Proposition \ref{prop_approach2}]
First, observe that $z_i^{(3)} = Z\Sigma^{\frac{1}{2}}z_i^{(1)} + W^{-\frac{1}{2}} z_i^{(2)} \sim \mathcal{N}(\vecmat{0}, \Psi)$. It follows that $z_i^{(4)} = Z_{po}\Sigma Z^T\Psi^{-1} z_i^{(3)} \sim \mathcal{N}(\vecmat{0}, Z_{po}\Sigma Z^T\Psi^{-1} Z \Sigma Z_{po}^T)$. 
By standard results, $\frac{1}{s}\sum_{i=1}^s z_i^{(4)} \left(z_i^{(4)}\right)^T$ in Algorithm \ref{alg:approach2} is an unbiased and consistent estimator for $Z_{po}\Sigma Z^T\Psi^{-1} Z \Sigma Z_{po}^T$, and the claim in Proposition \ref{prop_approach2} thus follows.
\end{proof}

\section{Derivatives of log-determinants using stochastic trace estimation and variance reduction for SSOR preconditioner} \label{PAppendix}

In the following, $c=\widehat{\text{Cov}}(h(z_i),r(z_i))/\widehat{\text{Var}}(r(z_i))$ is the optimal weight for the variance reduction.

\subsection{Derivative for variance parameters}
Note that $\frac{\partial D}{\partial\theta_k} = \frac{\partial \Sigma^{-1}}{\partial\theta_k} = -\Sigma^{-1}\Sigma^{-1}$.
\begin{equation*}
    \frac{\partial\log\det(\Sigma^{-1} + Z^T W Z)}{\partial\theta_k} 
    \approx
    c \; \underbrace{\frac{\partial\log\det(P_{\text{SSOR}})}{\partial\theta_k}}_{\text{deterministic}}
    + \frac{\partial\log\det(\Sigma^{-1} + Z^T W Z)}{\partial\theta_k}
    - c \; \underbrace{\frac{\partial\log\det(P_{\text{SSOR}})}{\partial\theta_k}}_{\text{stochastic}} 
    % \frac{\partial}{\partial\theta_k} \left(
    % c \; \underbrace{\log\det(P_{\text{SSOR}})}_{\text{deterministic}} + \log\det(\Sigma^{-1} + Z^T W Z) - c\;\underbrace{\log\det(P_{\text{SSOR}}}_{\text{stochastic}})
    % \right)
\end{equation*}

\begin{equation*}
\begin{split}
    \underbrace{\frac{\partial\log\det(P_{\text{SSOR}})}{\partial\theta_k}}_{\text{deterministic}} 
    &= \frac{\partial\left(\log\det(L+D)+\log\det(D^{-1})+\log\det(L+D)\right)}{\partial\theta_k} \\
    &= \frac{\partial \log\det(D)}{\partial\theta_k} \\
    &= \tr\left(D^{-1}\frac{\partial D}{\partial\theta_k}\right)
    % &= \tr\left(D^{-1}\frac{\partial \Sigma^{-1}}{\partial\theta_k}\right) \\
    % &= -\tr\left(D^{-1}\Sigma^{-1}\Sigma^{-1}\right)
\end{split}
\end{equation*}

\begin{equation*}
\begin{split}
    \frac{\partial\log\det(\Sigma^{-1} + Z^T W Z)}{\partial\theta_k} 
    =& \tr\left((\Sigma^{-1} + Z^T W Z)^{-1}\frac{\partial(\Sigma^{-1} + Z^T W Z)}{\partial\theta_k}\right)\\
    =& \tr\left((\Sigma^{-1} + Z^T W Z)^{-1}\frac{\partial\Sigma^{-1}}{\partial\theta_k}\right)\\
    =& \tr\left(-(\Sigma^{-1} + Z^T W Z)^{-1}\Sigma^{-1}\Sigma^{-1}\right)\\
    \approx& \frac{1}{t}\sum_{i=1}^t \underbrace{-\left((\Sigma^{-1} + Z^T W Z)^{-1}z_i\right)^T\Sigma^{-1}\Sigma^{-1}P_{\text{SSOR}})^{-1} z_i}_{=:h(z_i)}
\end{split}
\end{equation*}

\begin{equation*}
\begin{split}
    \underbrace{\frac{\partial\log\det(P_{\text{SSOR}})}{\partial\theta_k}}_{\text{stochastic}} 
    =& \tr\left(P_{\text{SSOR}}^{-1}\frac{\partial P_{\text{SSOR}}}{\partial\theta_k}\right)\\
    =& \tr\left(P_{\text{SSOR}}^{-1} \frac{\partial (L + D) D^{-1} (L + D)^T}{\partial\theta_k}\right)\\
    =& \tr\left(P_{\text{SSOR}}^{-1} \left(2\; \frac{\partial D}{\partial\theta_k}D^{-1}(L + D)^T %since tr(A) = tr(A^T)
    - (L + D) D^{-1}\frac{\partial D}{\partial\theta_k}D^{-1}(L + D)^T\right)\right)\\
    \approx& \frac{1}{t}\sum_{i=1}^t 
    \underbrace{\left(P_{\text{SSOR}}^{-1}z_i\right)^T
                                     \left(2\; \frac{\partial D}{\partial\theta_k}D^{-1}(L + D)^T - (L + D) D^{-1}\frac{\partial D}{\partial\theta_k}D^{-1}(L + D)^T\right) P_{\text{SSOR}}^{-1}z_i}_{=:r(z_i)}
    % =& \tr\left(P_{\text{SSOR}}^{-1} \left(\frac{\partial D}{\partial\theta_k}D^{-1}(L + D)^T 
    %                 - (L + D) D^{-1}\frac{\partial D}{\partial\theta_k}D^{-1}(L + D)^T \right.\right.\\
    %  & \left.\left. + (L + D) D^{-1} \frac{\partial D}{\partial\theta_k}\right)\right)\\
    % =& \tr\left(P_{\text{SSOR}}^{-1} \left(2\; \frac{\partial \Sigma^{-1}}{\partial\theta_k}D^{-1}(L + D)^T %since tr(A) = tr(A^T)
    %                 - (L + D) D^{-1}\frac{\partial \Sigma^{-1}}{\partial\theta_k}D^{-1}(L + D)^T\right)\right)\\
    % =& \tr\left(P_{\text{SSOR}}^{-1} \left(-2\; \Sigma^{-1}\Sigma^{-1}D^{-1}(L + D)^T 
    %                 + (L + D) D^{-1}\Sigma^{-1}\Sigma^{-1}D^{-1}(L + D)^T\right)\right)\\
    % \approx& \frac{1}{t}\sum_{i=1}^t 
    % \underbrace{\left(P_{\text{SSOR}}^{-1}z_i\right)^T
    %                                  \left(-2\; \Sigma^{-1}\Sigma^{-1}D^{-1}(L + D)^T + (L + D) D^{-1}\Sigma^{-1}\Sigma^{-1}D^{-1}(L + D)^T\right) P_{\text{SSOR}}^{-1}z_i}_{=:r(z_i)}
\end{split}
\end{equation*}

\subsection{Further derivatives for Laplace approximations}
For efficient calculations of the derivatives of log-determinants with STE, note that $\frac{\partial\log\det(\Sigma^{-1} + Z^T W Z)}{\partial F_i} = \frac{\partial\log\det(\Sigma^{-1} + Z^T W Z)}{\partial \mu^*_i}$ and $\frac{\partial\log\det(\Sigma^{-1} + Z^T W Z)}{\partial b^*_j} = \left(\frac{\partial\log\det(\Sigma^{-1} + Z^T W Z)}{\partial\mu^*}\right)^T\frac{\partial\mu^*}{\partial b^*_j} = \left(\frac{\partial\log\det(\Sigma^{-1} + Z^T W Z)}{\partial\mu^*}\right)^T Z_j$. 
Therefore, we present in the following calculations for $\frac{\partial\log\det(\Sigma^{-1} + Z^T W Z)}{\partial \mu^*_i}$, and $\frac{\partial\log\det(\Sigma^{-1} + Z^T W Z)}{\partial \xi_l}$ is obtained by replacing $\frac{\partial W}{\partial \mu^*_i} = \text{diag}(-\frac{\partial^3 \log p(y_i| \mu^*_i,\xi)}{\partial{\mu^*_i}^3})$ with $\frac{\partial W}{\partial \xi_l}=\text{diag}(-\frac{\partial^3 \log p(y_i| \mu^*_i,\xi)}{\partial{\mu^*_i}^2\partial\xi_l})$.
Further, $\frac{\partial D}{\partial \mu^*_i}$ is a diagonal matrix with diagonal entries $\left(\frac{\partial D}{\partial \mu^*_i}\right)_{ii} = \left(Z^T\frac{\partial W}{\partial \mu^*_i}Z\right)_{ii}$, and $\frac{\partial(L+D)}{\partial \mu^*_i}$ is a lower-triangular matrix with entries
$\left(\frac{\partial(L+D)}{\partial \mu^*_i}\right)_{ij} = \mathbf{1}_{\{i=>j\}}\left(Z^T \frac{\partial W}{\partial \mu^*_i} Z\right)_{ij}$.

\begin{equation*}
    \frac{\partial\log\det(\Sigma^{-1} + Z^T W Z)}{\partial \mu^*_i} 
    \approx
    c \; \underbrace{\frac{\partial\log\det(P_{\text{SSOR}})}{\partial \mu^*_i}}_{\text{deterministic}}
    + \frac{\partial\log\det(\Sigma^{-1} + Z^T W Z)}{\partial \mu^*_i}
    - c \; \underbrace{\frac{\partial\log\det(P_{\text{SSOR}})}{\partial \mu^*_i}}_{\text{stochastic}}
\end{equation*}

\begin{equation*}
\begin{split}
    \underbrace{\frac{\partial\log\det(P_{\text{SSOR}})}{\partial \mu^*_i}}_{\text{deterministic}} 
    =& \frac{\partial\left(\log\det(L+D)+\log\det(D^{-1})+\log\det(L+D)\right)}{\partial \mu^*_i} \\
    =& \frac{\partial\log\det(D)}{\partial \mu^*_i} \\
    =& \tr\left(D^{-1}\frac{\partial D}{\partial \mu^*_i}\right)
\end{split}
\end{equation*}

\begin{equation*}
\begin{split}
    \frac{\partial\log\det(\Sigma^{-1} + Z^T W Z)}{\partial \mu^*_i}
    =& \tr\left((\Sigma^{-1} + Z^T W Z)^{-1}\frac{\partial(\Sigma^{-1} + Z^T W Z)}{\partial \mu^*_i}\right)\\
    =& \tr\left((\Sigma^{-1} + Z^T W Z)^{-1}Z^T\frac{\partial W}{\partial \mu^*_i}Z\right)\\
    \approx& \frac{1}{t}\sum_{i=1}^t \underbrace{\left((\Sigma^{-1} + Z^T W Z)^{-1}z_i\right)^TZ^T\frac{\partial W}{\partial \mu^*_i}Z P_{\text{SSOR}}^{-1} z_i}_{=:h(z_i)}
\end{split}
\end{equation*}

\begin{equation*}
\begin{split}
    \underbrace{\frac{\partial\log\det(P_{\text{SSOR}})}{\partial \mu^*_i}}_{\text{stochastic}}
    =& \tr\left(P_{\text{SSOR}}^{-1}\frac{\partial P_{\text{SSOR}}}{\partial \mu^*_i}\right)\\
    =& \tr\left(P_{\text{SSOR}}^{-1} \frac{\partial (L + D) D^{-1} (L + D)^T}{\partial \mu^*_i}\right)\\
    % =& \tr\left(P_{\text{SSOR}}^{-1} \left(\frac{\partial(L+D)}{\partial \mu^*_i} D^{-1}(L + D)^T - 
    %                                    (L + D)D^{-1} \frac{\partial D}{\partial \mu^*_i}D^{-1}(L + D)^T \right.\right.\\
    % & \left.\left.+ (L + D) D^{-1} \frac{\partial (L+D)^T}{\partial \mu^*_i} \right)\right)\\
    =& \tr\left(P_{\text{SSOR}}^{-1} \left(2\;\frac{\partial(L+D)}{\partial \mu^*_i}D^{-1}(L + D)^T 
    - (L + D)D^{-1}\frac{\partial D}{\partial \mu^*_i}D^{-1}(L + D)^T \right)\right)\\
    \approx& \frac{1}{t}\sum_{i=1}^t \underbrace{\left(P_{\text{SSOR}}^{-1}z_i\right)^T
    \left(2\;\frac{\partial(L+D)}{\partial \mu^*_i}D^{-1}(L + D)^T 
    - (L + D)D^{-1}\frac{\partial D}{\partial \mu^*_i}D^{-1}(L + D)^T \right)
    P_{\text{SSOR}}^{-1}z_i}_{=:r(z_i)}
\end{split}
\end{equation*}

\section{Alternative stochastic trace estimation for Fisher information} \label{STE_FisherI_alternative}
For $1\leq k,l\leq K$ and $z_i\sim \mathcal{N}(0,P)$, the trace terms of the Fisher information in \eqref{FisherI} can alternatively be computed with STE as follows:
\begin{equation}
\begin{split}
    &\tr\left(\Psi^{-1}\frac{\partial\Psi}{\partial\theta_k}\Psi^{-1}\frac{\partial\Psi}{\partial\theta_l}\right)\\
    =&\tr     \left(\left(W-WZ(\Sigma^{-1} + Z^TWZ)^{-1}Z^TW\right)Z_kZ_k^T\right.\\
     &\left.~~~~    \left(W-WZ(\Sigma^{-1} + Z^TWZ)^{-1}Z^TW\right)Z_lZ_l^T\right)\\
    =&  \tr\left(WZ_kZ_k^TWZ_lZ_l^T\right) \\
    & - 2 \tr\left((\Sigma^{-1} + Z^TWZ)^{-1}Z^TWZ_kZ_k^TWZ_lZ_l^TZW\right) \\
    & + ~~\tr\left((\Sigma^{-1} + Z^TWZ)^{-1}Z^TWZ_kZ_k^TWZ(\Sigma^{-1} + Z^TWZ)^{-1}Z^TWZ_lZ_l^TZW\right)\\
    \approx& \tr\left(WZ_kZ_k^TWZ_lZ_l^T\right) \\
    & - 2 \frac{1}{t}\sum_{i=1}^t \left((\Sigma^{-1} + Z^T W Z)^{-1}z_i\right)^TZ^TWZ_kZ_k^TWZ_lZ_l^TZWP^{-1}z_i\\
    & + ~~\frac{1}{t}\sum_{i=1}^t \left((\Sigma^{-1} + Z^T W Z)^{-1}z_i\right)^T Z^TWZ_kZ_k^TWZ(\Sigma^{-1} + Z^TWZ)^{-1}Z^TWZ_lZ_l^TZWP^{-1}z_i.
\end{split}    
\end{equation}

% \section{Derivation for the posterior predictive covariance matrix} \label{postpredictiveDerivation}
% The identities for the posterior predictive covariance matrix can be derived as follows, where in the second last line we use the Woodbury identity:
% \begin{equation}
%     \begin{split}
%         \Omega_p &= Z_{po} \Sigma Z_{po}^T + Z_{pp} \Sigma_p Z_{pp}^T - Z_{po} \Sigma Z^T \Psi^{-1} Z\Sigma Z_{po}^T \\
%                  &= Z_{po} \Sigma Z_{po}^T + Z_{pp} \Sigma_p Z_{pp}^T
%                     - Z_{po}\Sigma Z^T
%                     \left(Z\Sigma Z^T + W^{-1}\right)^{-1}
%                     Z \Sigma Z_{po}^T \\
%                  &= Z_{po} \Sigma Z_{po}^T + Z_{pp} \Sigma_p Z_{pp}^T
%                     - Z_{po}\Sigma 
%                     \left(\Sigma + (Z^TWZ)^{-1}\right)^{-1}
%                     \Sigma Z_{po}^T \\
%                  &= Z_{po} \Sigma Z_{po}^T + Z_{pp} \Sigma_p Z_{pp}^T
%                      - Z_{po}\Sigma 
%                      \left(\Sigma^{-1} - \Sigma^{-1}\left(\Sigma^{-1} + Z^TWZ\right)^{-1}\Sigma^{-1} \right) 
%                      \Sigma Z_{po}^T \\
%                  &= Z_{pp} \Sigma_p Z_{pp}^T + Z_{po} \left(\Sigma^{-1} + Z^TWZ\right)^{-1} Z_{po}^T                         
%     \end{split}
% \end{equation}

\clearpage
\section{Preconditioned conjugate gradient algorithm}\label{appendix:CGalgo}
\begin{algorithm}[ht!]
    \caption{Preconditioned conjugate gradient algorithm with Lanczos tridiagonal matrix}
    \begin{algorithmic}[1]
        \Require Matrix $A$, preconditioner matrix $P$, vector $b$
        \Ensure $u_{l+1} \approx A^{-1}b$, tridiagonal matrix $\tilde{T}$
        \State{early-stopping $\gets$ false}
        \State{$\alpha_0 \gets 1$}
        \State{$\beta_0 \gets 0$}
        \State{$u_0 \gets 0$}
        \State{$r_0 \gets b - Au_0$}
        \State{$z_0 \gets P^{-1}r_0$}
        \State{$h_0 \gets z_0$}
        \For{$k \gets 0$ to $l$}
            \State{$v_k \gets Ah_k$}
            \State{$\alpha_{k+1} \gets \frac{r_k^Tz_k}{h_k^Tv_k}$}
            \State{$u_{k+1} \gets u_k + \alpha_{k+1} h_k$}
            \State{$r_{k+1} \gets r_k - \alpha_{k+1} v_k$}
            \If{$||r_{k+1}||_2 <$ tolerance}
                \State{early-stopping $\gets$ true}
            \EndIf
            \State{$z_{k+1} \gets P^{-1}r_{k+1}$}
            \State{$\beta_{k+1} \gets \frac{r_{k+1}^Tz_{k+1}}{r_k^Tz_k}$}
            \State{$h_{k+1} \gets z_{k+1} + \beta_{k+1} h_k$}
            \State{$\tilde{T}_{k+1,k+1} \gets \frac{1}{\alpha_{k+1}} + \frac{\beta_k}{\alpha_{k}}$}
            \If{$k > 0$}
                \State{$\tilde{T}_{k,k+1}, \tilde{T}_{k+1,k} \gets \frac{\sqrt{\beta_k}}{\alpha_{k}}$}
            \EndIf
            \If{early-stopping}
                \State{return $u_{k+1}, \tilde{T}$}
            \EndIf
        \EndFor
    \end{algorithmic}
\end{algorithm}

\clearpage
\section{Zero fill-in incomplete Cholesky factorization}\label{appendix:ZIC_algo}
\begin{algorithm}[ht!]
    \caption{Zero fill-in incomplete Cholesky algorithm}
    \begin{algorithmic}[1]
        \Require Matrix $A\in\mathbb{R}^{m\times m}$, matrix $S\in\mathbb{R}^{m\times m}$ with sparsity pattern
        \Ensure Sparse lower triangular matrix $L$ with $A \approx LL^T$

        \For{$i\gets 1$ to $m$}
            \For{$j \gets 1$ to $m$}
                \If{$(i,j) \in \vecmat{S}$ and $i \geq j$}
                    \State{$s \gets \vecmat{L}_{i\cdot} \vecmat{L}_{j\cdot}^T$}
                    \If{$i == j$}
                        \State{$L_{ii} \gets \sqrt{A_{ii}-s}$}
                    \Else
                        \State{$L_{ij} \gets \frac{A_{ij}-s}{L_{jj}}$}
                    \EndIf
                \EndIf
            \EndFor
        \EndFor
    \end{algorithmic}
\end{algorithm}

\clearpage
\section{Simulating unbalanced random effects designs}\label{appendix:unbalance_design}
The unbalanced designs are obtained as follows. We first generate  numbers of repeated occurrences $N_{jk}$ (=$\sum_{i=1}^n (Z_k)_{ij}$) of random effect $b_{k,j}$ for level $j$ of component $k$ by simulating from a negative binomial distribution with mean $n/m_k-1$ and size parameter $r$ and then adding $1$. The choice ensures that every random effect $b_{k,j}$ occurs at least once, and there is variability in the numbers of repeated occurrences $N_{jk}$. Note that the size, or inverse overdispersion, parameter $r$ control the regularity of the random effects design with smaller $r$'s yielding less regular designs with more variability in the number of occurrences per random effect. Unless stated otherwise, we use a size parameter $r=1$. The resulting total numbers of occurrences $T_k=\sum_{j=1}^{m_k}N_{jk}$ are typically close to $n$, but not exactly equal. Furthermore, the simulated totals $T_1$ and $T_2$ for the two random effect components usually differ slightly. To ensure that both components have the same total number of occurrences, we adjust the smaller of the two totals. If, say, the simulated total number of occurrences of the first component is smaller than that of the second component, $T_1 < T_2$, we add $T_2-T_1$ occurrences to the first component by simulating $T_2-T_1$ samples from a multinomial distribution with $m_1$ factors and probabilities $1/m_1$ and adding the resulting count vector to 
$(N_{1,1}, \dots, N_{m_1,1})$. Once the final values $N_{jk}$ are obtained, we construct the incidence matrices $Z_k$ by assigning exactly $N_{jk}$ ones to column $j$ in consecutive rows:
\[
(Z_k)_{ij} = 1 \quad \text{for } 
i = \Bigl(\sum_{j'=1}^{j-1} N_{j'k}\Bigr) + 1,\dots, \sum_{j'=1}^{j} N_{j'k},
\]
and zero otherwise (we use the convention $\sum_{j'=1}^{0} N_{j'k} = 0$). Finally, we randomly permute the rows of $Z_2$, which induces a random crossing structure between the two random effect components.

\clearpage
\section{Additional results for preconditioner comparison}\label{appendix:add_Pcomparison}
In the following, we present additional results comparing the preconditioners presented in Section \ref{sec_prec}. Besides there preconditioners, we have also considered two low-rank preconditioners $P = \Sigma^{-1} + L_k L_k^T \approx \Sigma^{-1} + Z^T W Z$, where $L_k \in \mathbb{R}^{m\times k}$ obtained by (i) using the pivoted Cholesky decomposition \citep{harbrecht2012low, gardner2018gpytorch} to approximate $Z^T W Z \approx L_k L_k^T$ and (ii) applying a partial Lanczos algorithm to approximate $Z^T W Z \approx L_k L_k^T$. However, both these low-rank preconditioners lead to slow CG convergence and high variances of the SLQ method (results not shown). %See Section \ref{section:Lanczos_pred} for more information on the partial Lanczos algorithm. 
%For these low-rank preconditioners, linear solves and log-determinants can be computed using the Woodbury matrix identity and the matrix determinant lemma, respectively, and the costs thus scale linearly with $m$.

\begin{figure}[ht!]
    \centering
    \includegraphics[width=0.49\linewidth]{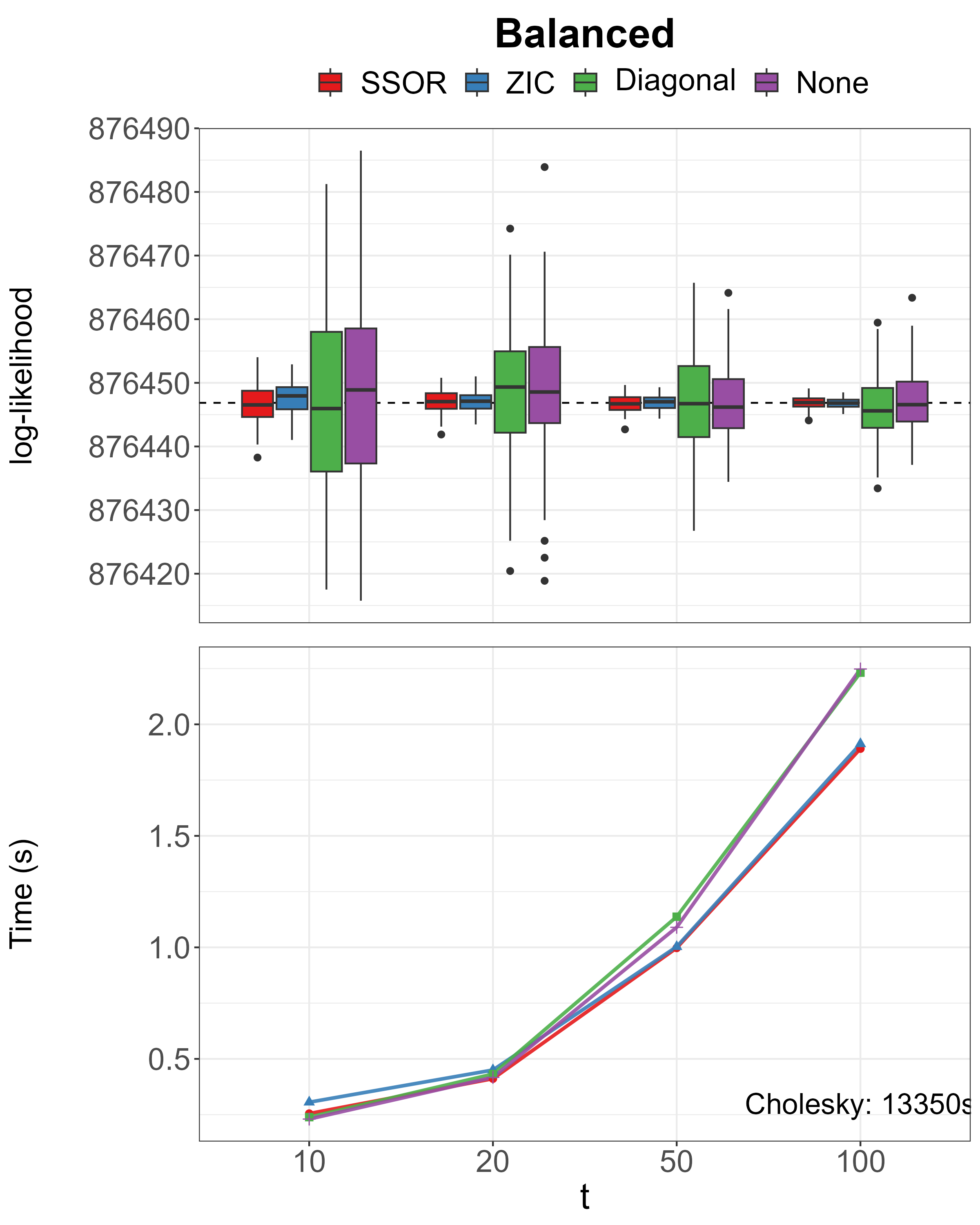}
     \includegraphics[width=0.49\linewidth]{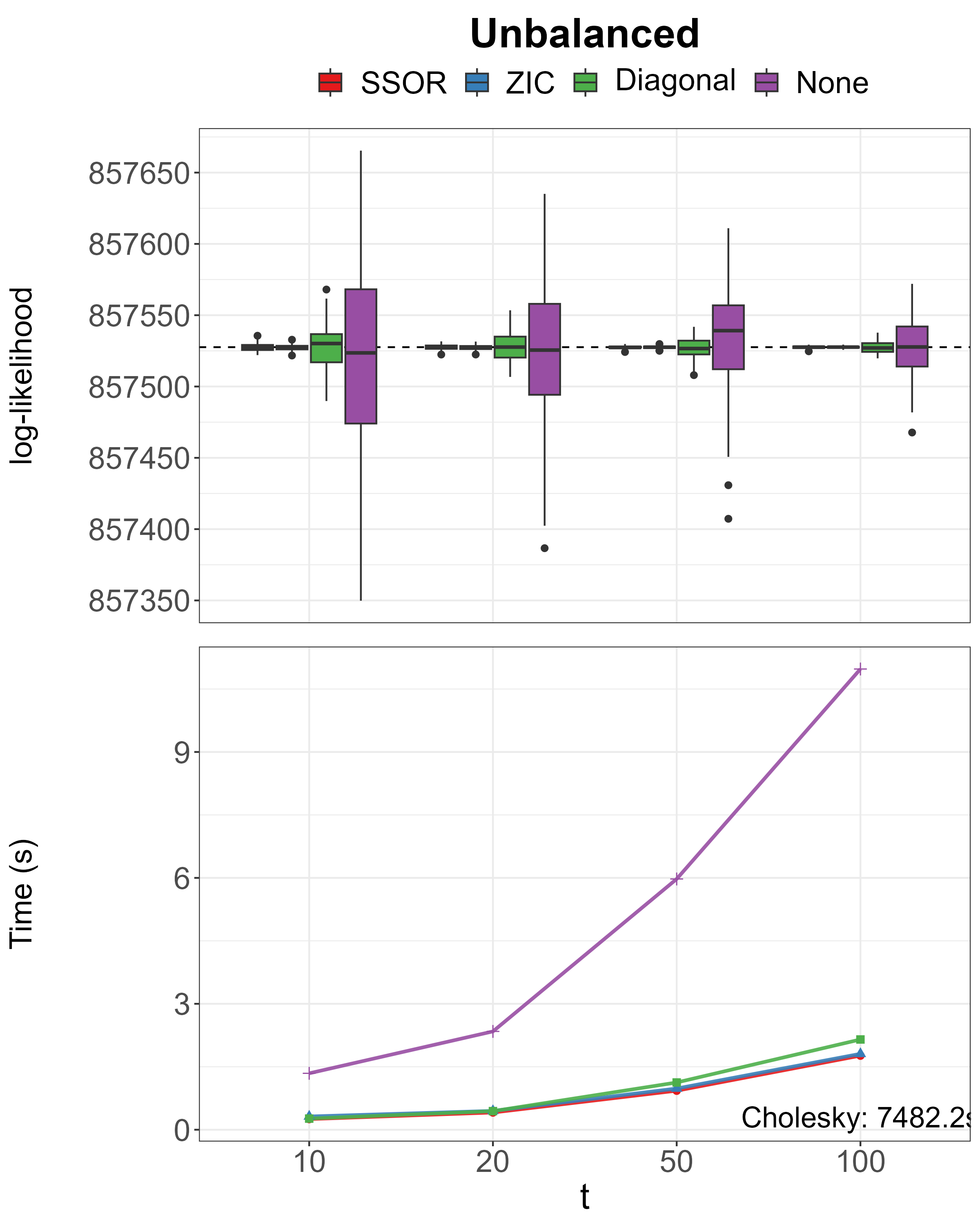}
    \caption{Negative log-likelihood and wall-clock time in seconds for different preconditioners and varying numbers of random vectors $t$ in the SLQ method for a Gaussian likelihood. The dashed line represents the result for the Cholesky decomposition. }
    \label{fig:Pcomparison_Gaussian_hist}
\end{figure}

\begin{figure}[ht!]
    \centering
    \includegraphics[width=0.49\linewidth]{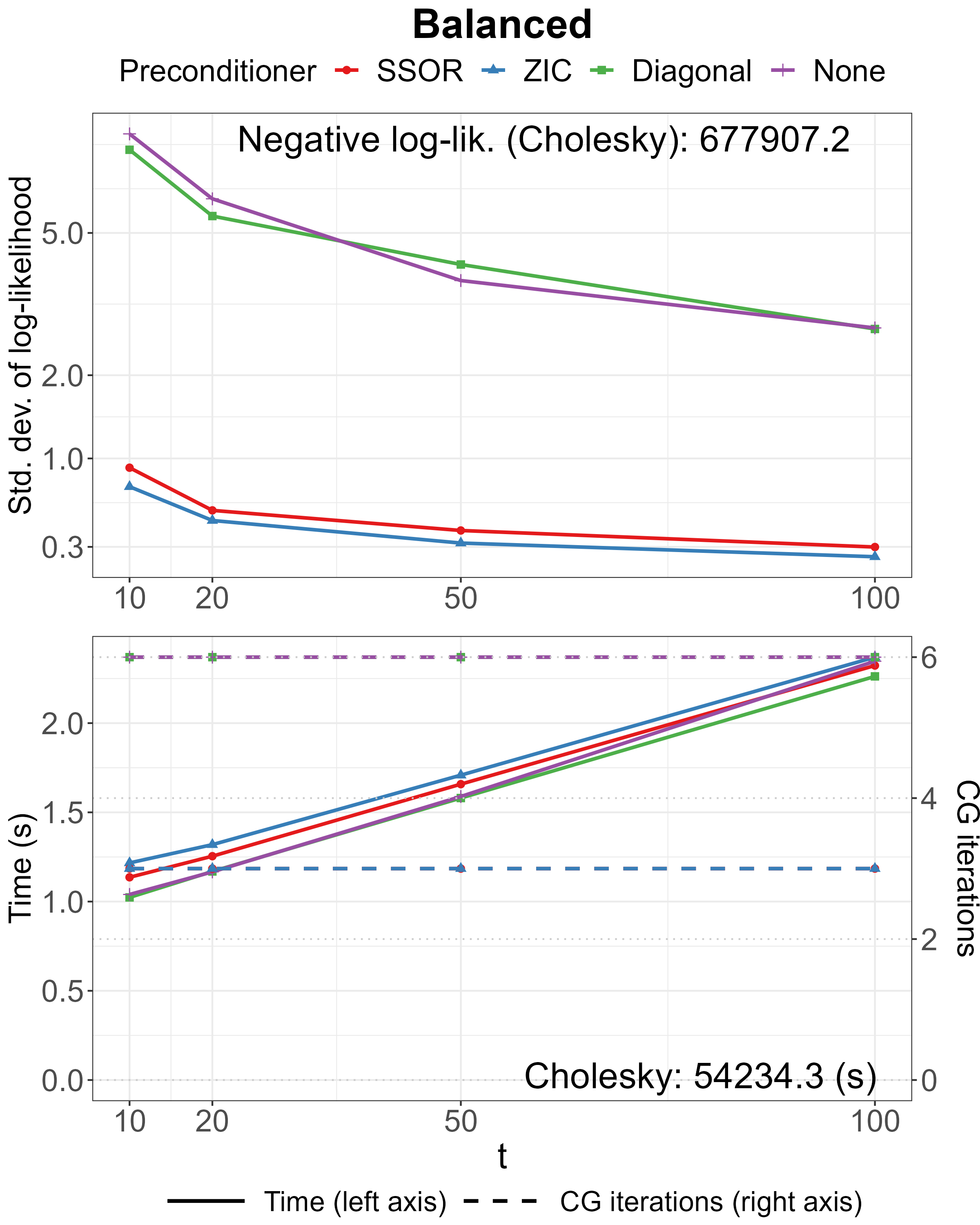}
    \includegraphics[width=0.49\linewidth]{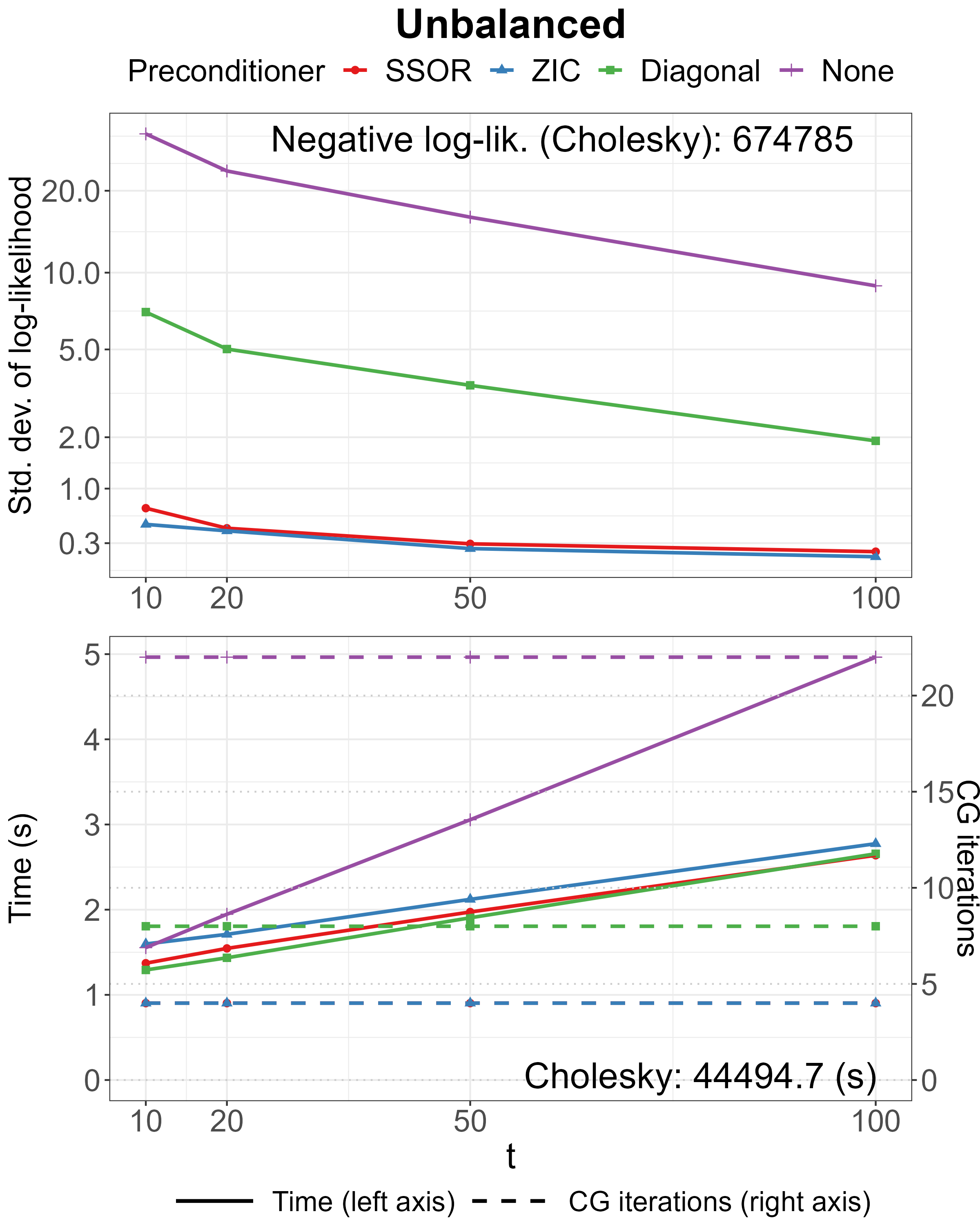}
    \caption{Accuracy (in standard deviations; top row), wall-clock time (in seconds; bottom row, left axis), and number of CG iterations (bottom row, right axis) for varying numbers of random vectors $t$ in the SLQ method for calculating log-marginal likelihoods for a Bernoulli likelihood with a logit link.}
    \label{fig:Pcomparison_Laplace}
\end{figure}

\begin{figure}[ht!]
    \centering
    \includegraphics[width=0.49\linewidth]{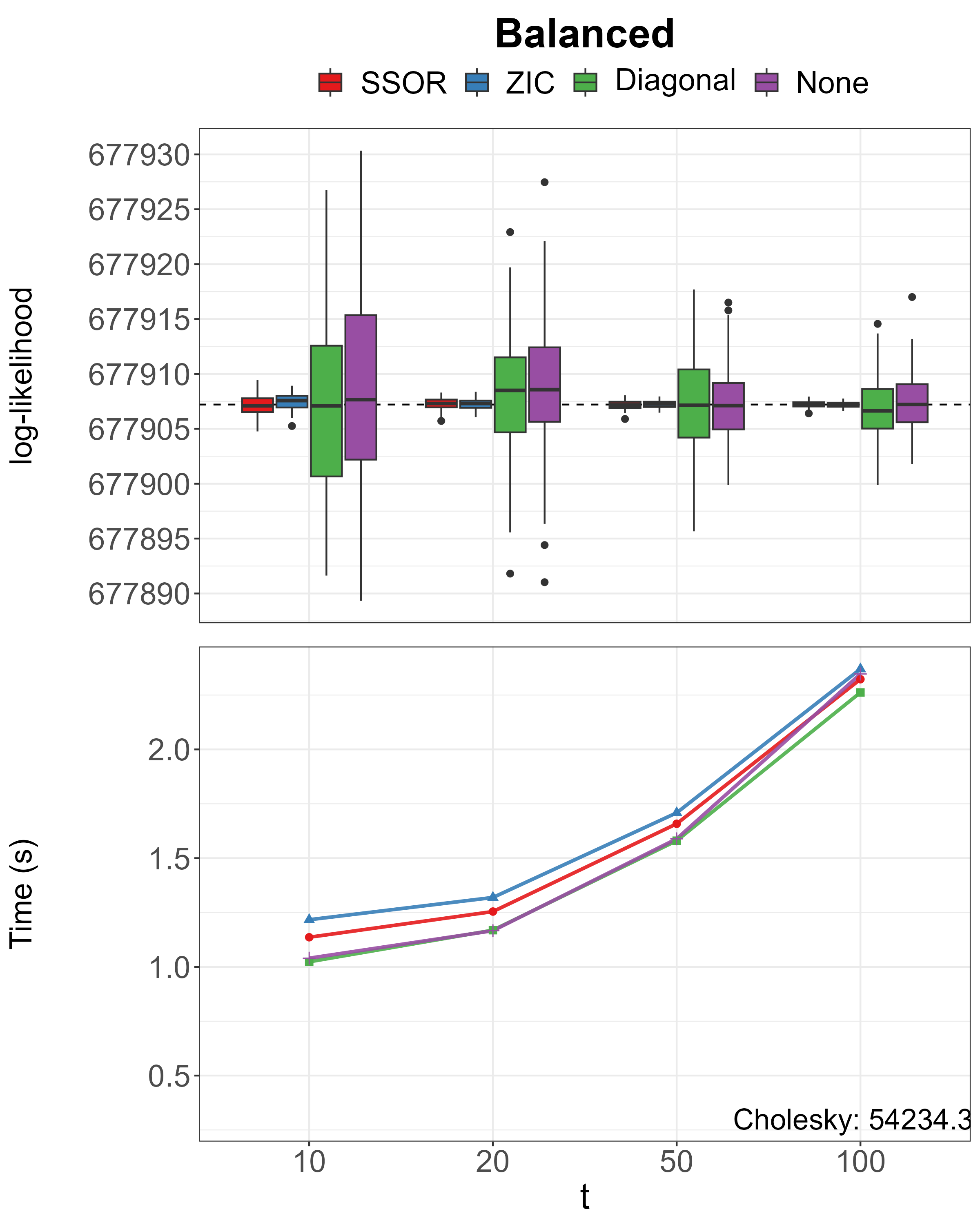}
     \includegraphics[width=0.49\linewidth]{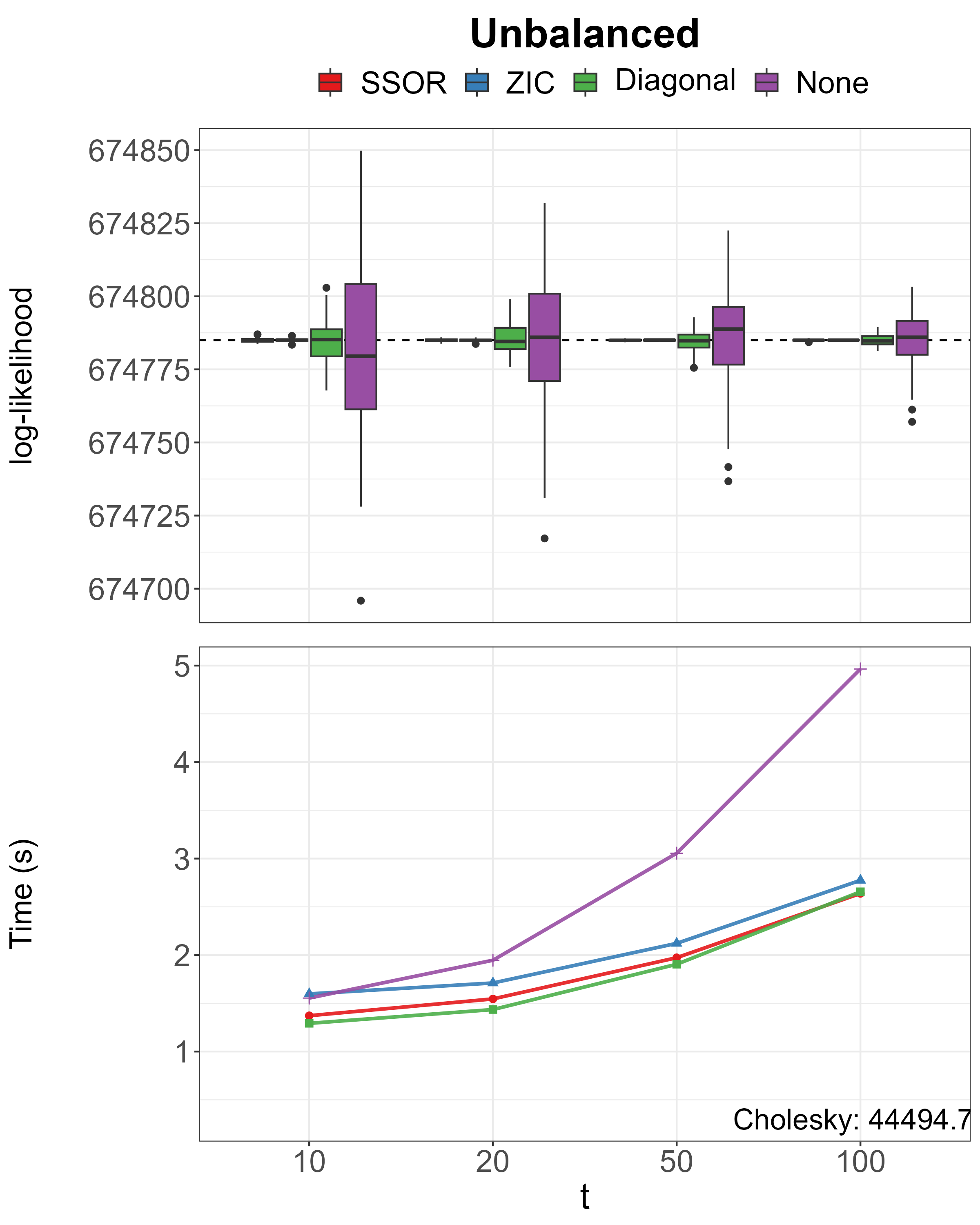}
    \caption{Negative log-likelihood and wall-clock time in seconds for varying numbers of random vectors $t$ in the SLQ method for a Bernoulli likelihood with a logit link. The dashed line represents the result for the Cholesky decomposition. }
    \label{fig:Pcomparison_Laplace_hist}
\end{figure}

\begin{figure}[ht!]
\centering
    \includegraphics[width=0.49\linewidth]{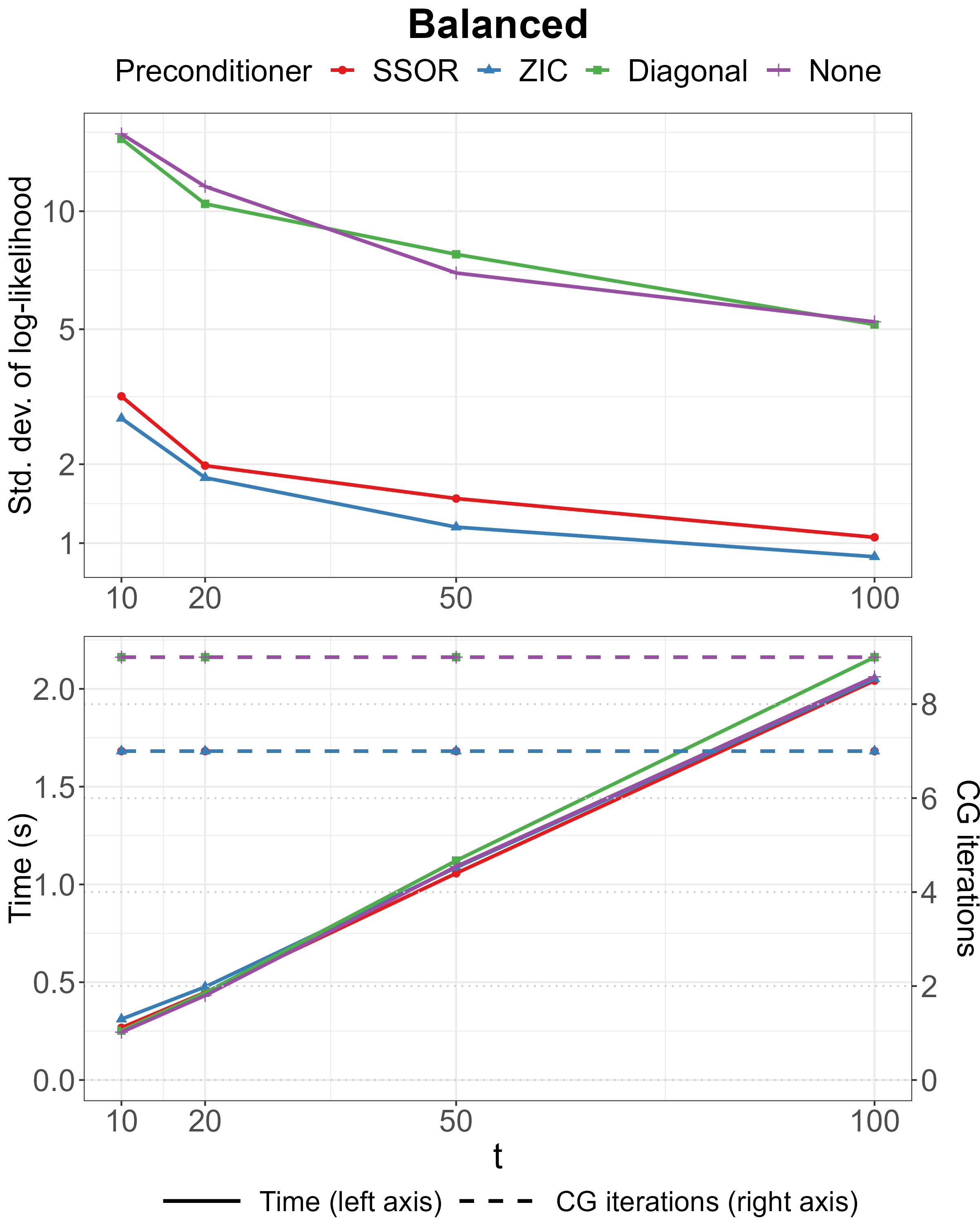}
    \includegraphics[width=0.49\linewidth]{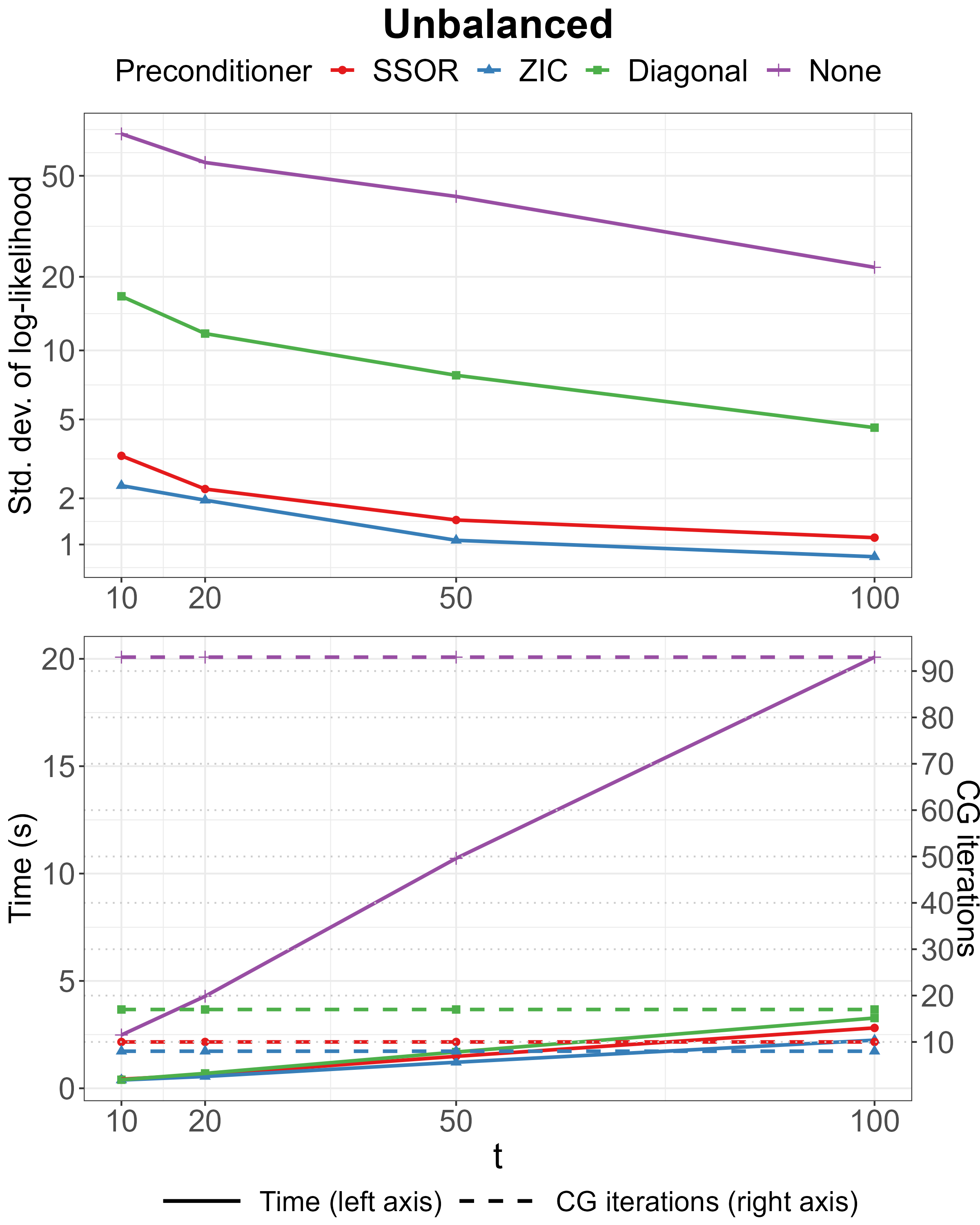}
    \caption{Accuracy (in standard deviations; top row), wall-clock time (in seconds; bottom row, left axis), and number of CG iterations (bottom row, right axis) for varying numbers of random vectors $t$ in the SLQ method for calculating log-marginal likelihoods for a Gaussian likelihood with an error variance $\sigma^2=0.05$ (= higher signal-to-noise ratio).}
    \label{fig:Pcomparison_Gaussian_other_variance}
\end{figure}

\begin{figure}[ht!]
\centering
    \includegraphics[width=0.49\linewidth]{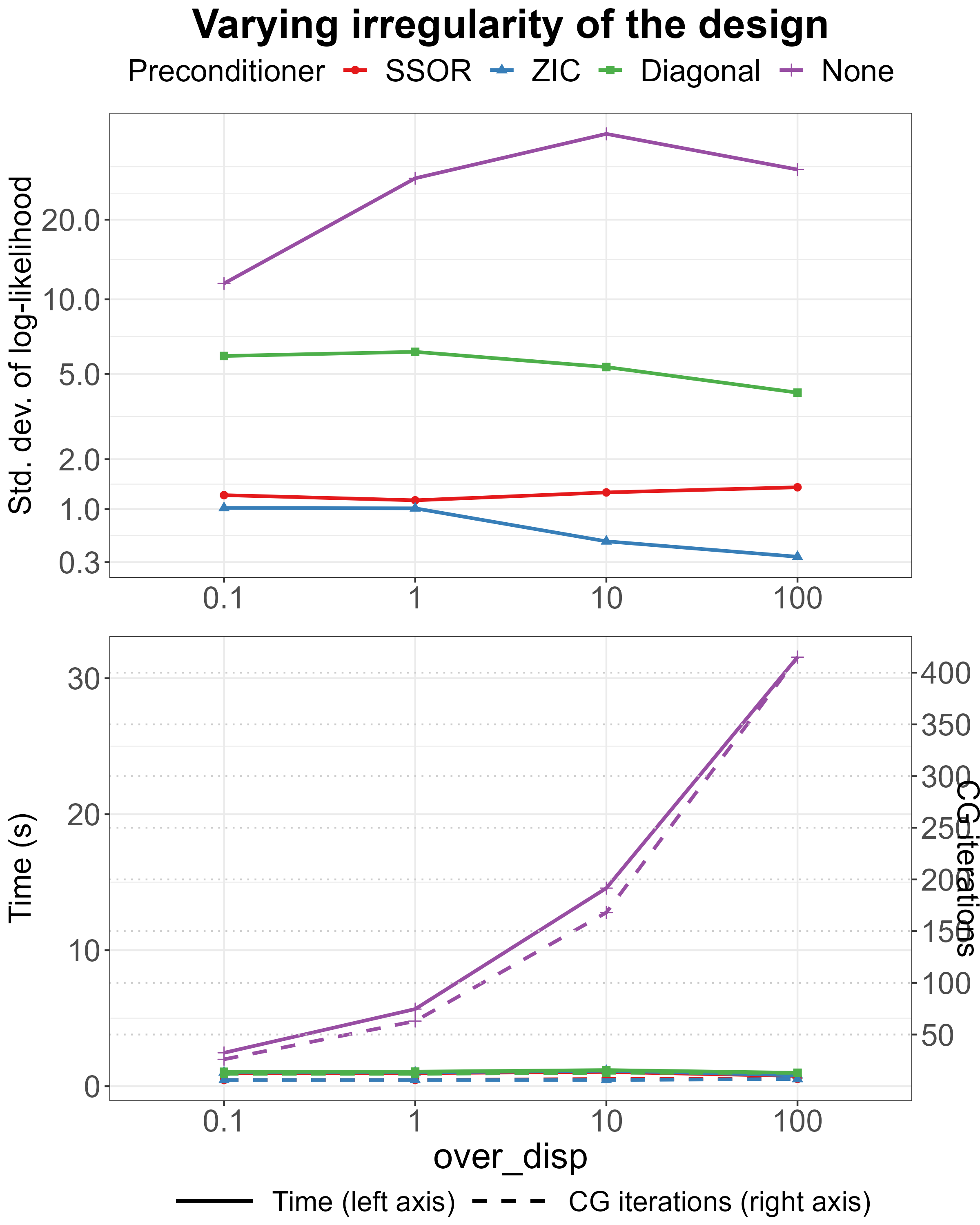}
    \includegraphics[width=0.49\linewidth]{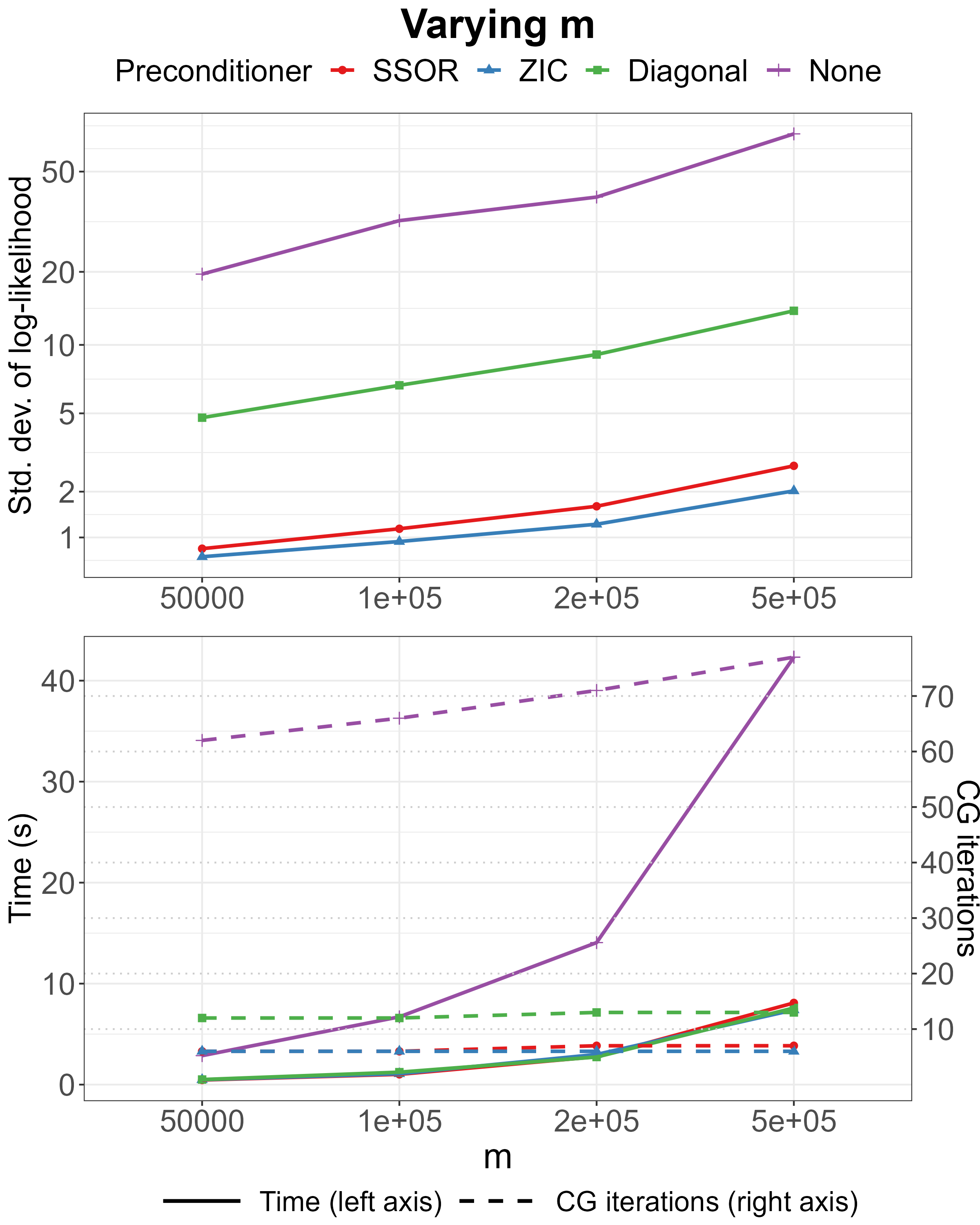}
    \caption{Accuracy (in standard deviations; top row), wall-clock time (in seconds; bottom row, left axis), and number of CG iterations (bottom row, right axis) for varying irregularity of the random effects design (left column) and random effects dimension $m$ (right column) for calculating log-marginal likelihoods for a Gaussian likelihood. `over\_disp' equals the inverse size parameter $1/r$ in the unbalanced design simulation approach explained in Appendix \ref{appendix:unbalance_design}. A larger `over\_disp' value corresponds to a less regular random effects design with more variability in the number of occurrences per random effect. An unbalanced design with $r=1$ is used when varying $m$.}
    \label{fig:Pcomparison_Gaussian_m_over_disp}
\end{figure}

\clearpage
\section{Additional results for parameter estimation and prediction in simulated experiments}\label{appendix:add_res_est_pred}

\begin{figure}[ht!]
    \centering
    \includegraphics[width=0.9\linewidth]{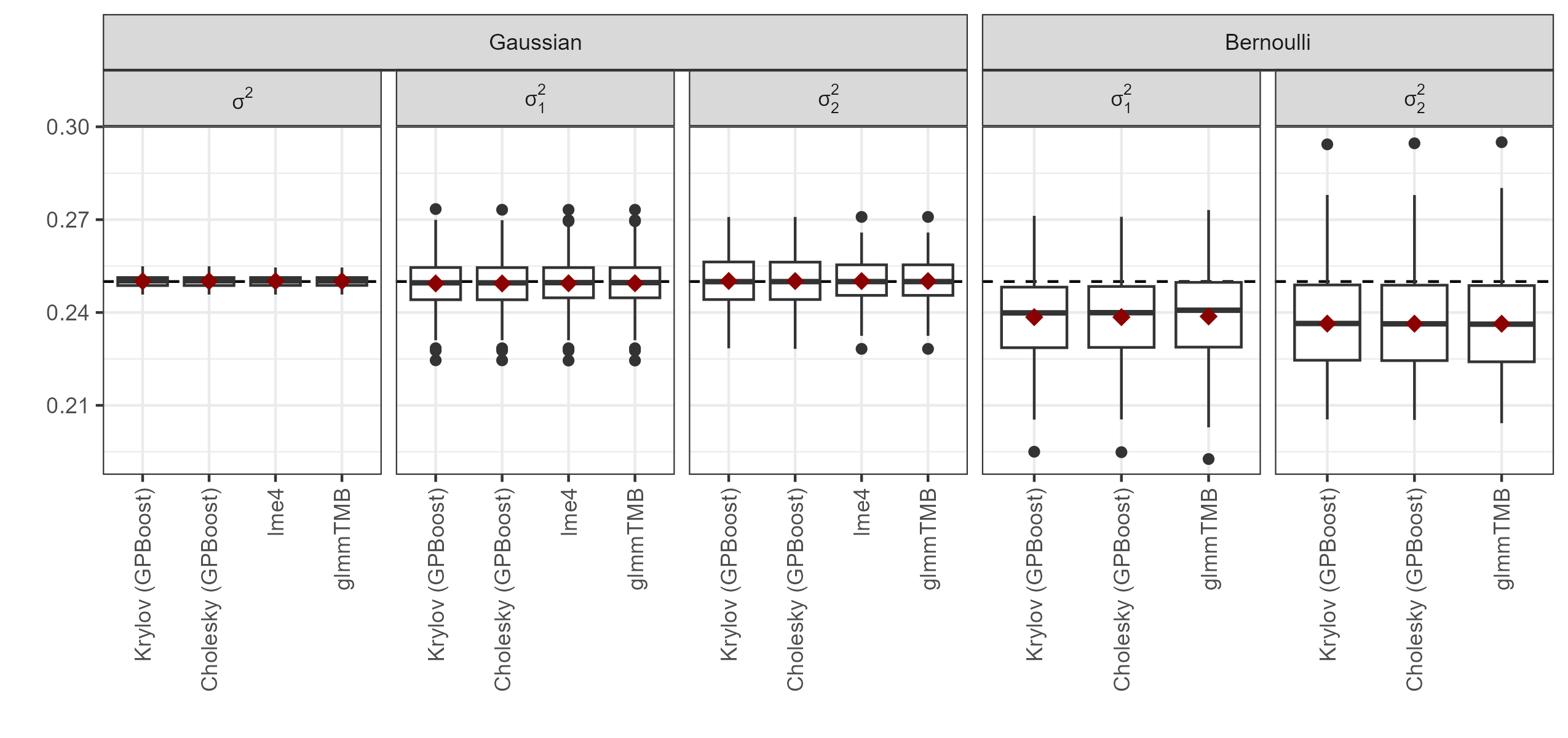}
    \caption{Estimated variance parameters. The red rhombi represent means. The dashed lines indicate the true values. For the Bernoulli likelihood, the estimates for \texttt{lme4} are not computed due to excessively long runtimes. The variance parameter estimates for Bernoulli likelihoods are slightly downward biased due to the Laplace approximation and the moderate sample size.}
    \label{fig:var_estimates}
\end{figure}

\begin{table}[ht]
\centering
\begin{tabular}{l|rrrr}
  \hline
 & Krylov (GPBoost) & Cholesky (GPBoost) & lme4 & glmmTMB \\
  \hline
  RMSE $\sigma^2$ & $1.75 \times 10^{-3}$ & $1.75 \times 10^{-3}$ & $1.73 \times 10^{-3}$ & $1.73 \times 10^{-3}$ \\ 
  Bias $\sigma^2$ & $1.24 \times 10^{-4}$ & $1.26 \times 10^{-4}$ & $1.23 \times 10^{-4}$ & $1.23 \times 10^{-4}$ \\ 
  RMSE $\sigma_1^2$ & $8.97 \times 10^{-3}$ & $8.96 \times 10^{-3}$ & $8.83 \times 10^{-3}$ & $8.83 \times 10^{-3}$ \\ 
  Bias $\sigma_1^2$ & $-5.96 \times 10^{-4}$ & $-6.15 \times 10^{-4}$ & $-5.57 \times 10^{-4}$ & $-5.57 \times 10^{-4}$ \\ 
  RMSE $\sigma_2^2$ & $8.34 \times 10^{-3}$ & $8.35 \times 10^{-3}$ & $8.13 \times 10^{-3}$ & $8.13 \times 10^{-3}$ \\ 
  Bias $\sigma_2^2$ & $1.79 \times 10^{-4}$ & $1.56 \times 10^{-4}$ & $1.53 \times 10^{-4}$ & $1.54 \times 10^{-4}$ \\ 
  RMSE $\beta_0$ & $1.46 \times 10^{-2}$ & $1.46 \times 10^{-2}$ & $1.46 \times 10^{-2}$ & $1.46 \times 10^{-2}$ \\ 
  Bias $\beta_0$ & $1.36 \times 10^{-3}$ & $1.36 \times 10^{-3}$ & $1.36 \times 10^{-3}$ & $1.36 \times 10^{-3}$ \\ 
  RMSE $\beta_1$ & $1.17 \times 10^{-2}$ & $1.17 \times 10^{-2}$ & $1.17 \times 10^{-2}$ & $1.17 \times 10^{-2}$ \\ 
  Bias $\beta_1$ & $-2.06 \times 10^{-4}$ & $-2.06 \times 10^{-4}$ & $-2.06 \times 10^{-4}$ & $-2.06 \times 10^{-4}$ \\ 
  RMSE $\beta_2$ & $1.08 \times 10^{-2}$ & $1.08 \times 10^{-2}$ & $1.08 \times 10^{-2}$ & $1.08 \times 10^{-2}$ \\ 
  Bias $\beta_2$ & $-4.41 \times 10^{-4}$ & $-4.41 \times 10^{-4}$ & $-4.41 \times 10^{-4}$ & $-4.41 \times 10^{-4}$ \\ 
  RMSE $\beta_3$ & $1.25 \times 10^{-2}$ & $1.25 \times 10^{-2}$ & $1.25 \times 10^{-2}$ & $1.25 \times 10^{-2}$ \\ 
  Bias $\beta_3$ & $-1.14 \times 10^{-3}$ & $-1.14 \times 10^{-3}$ & $-1.14 \times 10^{-3}$ & $-1.14 \times 10^{-3}$ \\ 
  RMSE $\beta_4$ & $1.31 \times 10^{-2}$ & $1.31 \times 10^{-2}$ & $1.31 \times 10^{-2}$ & $1.31 \times 10^{-2}$ \\ 
  Bias $\beta_4$ & $1.28 \times 10^{-3}$ & $1.28 \times 10^{-3}$ & $1.28 \times 10^{-3}$ & $1.28 \times 10^{-3}$ \\ 
  RMSE $\beta_5$ & $1.06 \times 10^{-2}$ & $1.06 \times 10^{-2}$ & $1.06 \times 10^{-2}$ & $1.06 \times 10^{-2}$ \\ 
  Bias $\beta_5$ & $3.21 \times 10^{-4}$ & $3.21 \times 10^{-4}$ & $3.21 \times 10^{-4}$ & $3.21 \times 10^{-4}$ \\ 
   \hline
\end{tabular}
\caption{Root mean squared error (RMSE) and bias of the variance and coefficient estimates for Gaussian likelihoods.}
\label{table:estimates_Gaussian}
\end{table}

\begin{table}[ht]
\centering
\begin{tabular}{l|rrr}
  \hline
 & Krylov (GPBoost)  & Cholesky (GPBoost)  & glmmTMB \\ 
  \hline
RMSE $\sigma_1^2$ & $2.01 \times 10^{-2}$ & $2.01 \times 10^{-2}$ & $2.05 \times 10^{-2}$ \\ 
  Bias $\sigma_1^2$ & $-1.15 \times 10^{-2}$ & $-1.15 \times 10^{-2}$ & $-1.13 \times 10^{-2}$ \\ 
  RMSE $\sigma_2^2$ & $2.15 \times 10^{-2}$ & $2.15 \times 10^{-2}$ & $2.20 \times 10^{-2}$ \\ 
  Bias $\sigma_2^2$ & $-1.36 \times 10^{-2}$ & $-1.36 \times 10^{-2}$ & $-1.36 \times 10^{-2}$ \\ 
  RMSE $\beta_0$ & $1.73 \times 10^{-2}$ & $1.73 \times 10^{-2}$ & $1.73 \times 10^{-2}$ \\ 
  Bias $\beta_0$ & $-3.68 \times 10^{-4}$ & $-3.67 \times 10^{-4}$ & $-4.76 \times 10^{-4}$ \\ 
  RMSE $\beta_1$ & $5.18 \times 10^{-2}$ & $5.18 \times 10^{-2}$ & $5.10 \times 10^{-2}$ \\ 
  Bias $\beta_1$ & $1.33 \times 10^{-3}$ & $1.34 \times 10^{-3}$ & $7.09 \times 10^{-4}$ \\ 
  RMSE $\beta_2$ & $5.33 \times 10^{-2}$ & $5.33 \times 10^{-2}$ & $5.19 \times 10^{-2}$ \\ 
  Bias $\beta_2$ & $-6.51 \times 10^{-3}$ & $-6.53 \times 10^{-3}$ & $-5.87 \times 10^{-3}$ \\ 
  RMSE $\beta_3$ & $5.10 \times 10^{-2}$ & $5.09 \times 10^{-2}$ & $5.05 \times 10^{-2}$ \\ 
  Bias $\beta_3$ & $6.00 \times 10^{-4}$ & $5.82 \times 10^{-4}$ & $1.86 \times 10^{-4}$ \\ 
  RMSE $\beta_4$ & $5.25 \times 10^{-2}$ & $5.25 \times 10^{-2}$ & $5.24 \times 10^{-2}$ \\ 
  Bias $\beta_4$ & $-1.24 \times 10^{-3}$ & $-1.23 \times 10^{-3}$ & $-1.30 \times 10^{-3}$ \\ 
  RMSE $\beta_5$ & $5.18 \times 10^{-2}$ & $5.18 \times 10^{-2}$ & $4.99 \times 10^{-2}$ \\ 
  Bias $\beta_5$ & $-6.36 \times 10^{-3}$ & $-6.36 \times 10^{-3}$ & $-5.61 \times 10^{-3}$ \\ 
   \hline
\end{tabular}
\caption{Root mean squared error (RMSE) and bias of the variance and coefficient estimates for Bernoulli likelihoods.} 
\label{table:estimates_Bernoulli}
\end{table}

\begin{figure}[ht!]
    \centering
    \includegraphics[width=0.9\linewidth]{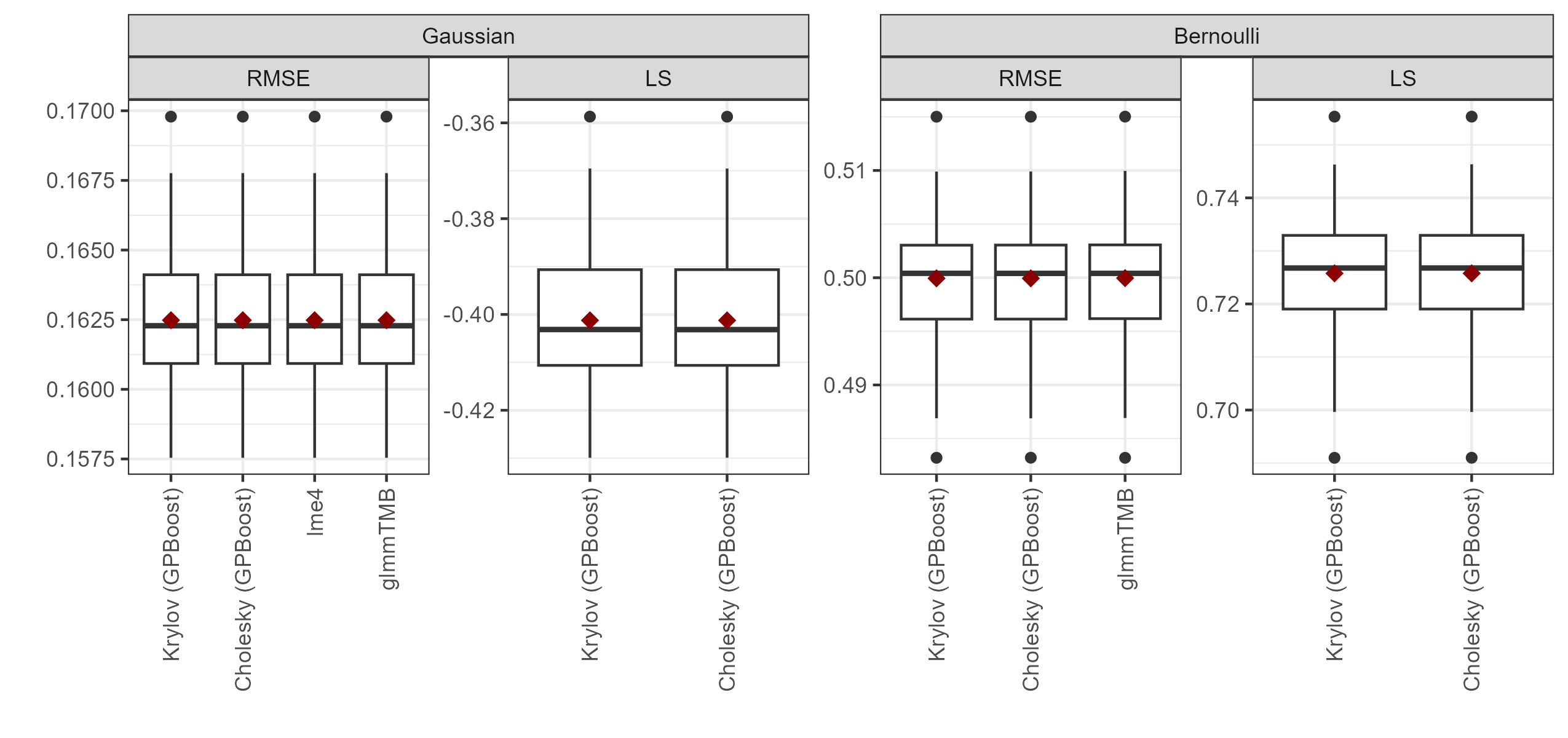}
    \caption{RMSE for predictive means and log score (LS) for probabilistic predictions for Gaussian and Bernoulli likelihoods.}
    \label{fig:RMSE_LS}
\end{figure}

\begin{table}[ht]
\centering
\begin{tabular}{l|rrrr}
  \hline
 & Krylov (GPBoost) & Cholesky (GPBoost) & lme4 & glmmTMB \\
  \hline
$\overline{\text{RMSE}}$ & $1.62 \times 10^{-1}$ & $1.62 \times 10^{-1}$ & $1.62 \times 10^{-1}$ & $1.62 \times 10^{-1}$ \\ 
  sd($\overline{\text{RMSE}}$) & $2.32 \times 10^{-4}$ & $2.32 \times 10^{-4}$ & $2.32 \times 10^{-4}$ & $2.32 \times 10^{-4}$ \\ 
  $\overline{\text{LS}}$ & $-4.01 \times 10^{-1}$ & $-4.01 \times 10^{-1}$ &  &  \\ 
  sd($\overline{\text{LS}}$) & $1.40 \times 10^{-3}$ & $1.40 \times 10^{-3}$ &  &  \\ 
   \hline
\end{tabular}
\caption{Average root mean squared error (RMSE) for predictive means and average log score (LS) for probabilistic predictions with corresponding standard errors for Gaussian likelihoods.}
\label{table:prediction_Gaussian}
\end{table}

\begin{table}[ht]
\centering
\begin{tabular}{l|lll}
  \hline
 & Krylov (GPBoost)  & Cholesky (GPBoost)  & glmmTMB  \\ 
  \hline
$\overline{\text{RMSE}}$ & $5.00 \times 10^{-1}$ & $5.00 \times 10^{-1}$ & $5.00 \times 10^{-1}$ \\ 
  sd($\overline{\text{RMSE}}$) & $5.37 \times 10^{-4}$ & $5.37 \times 10^{-4}$ & $5.37 \times 10^{-4}$ \\ 
  $\overline{\text{LS}}$ & $7.26 \times 10^{-1}$ & $7.26 \times 10^{-1}$ &  \\ 
  sd($\overline{\text{LS}}$) & $1.12 \times 10^{-3}$ & $1.12 \times 10^{-3}$ &  \\ 
   \hline
\end{tabular}
\caption{Average root mean squared error (RMSE) for predictive means and average log score (LS) for probabilistic predictions with corresponding standard errors for Bernoulli likelihoods.}
\label{table:prediction_Bernoulli}
\end{table}

\begin{figure}[ht!]
    \centering
    \includegraphics[width=\linewidth]{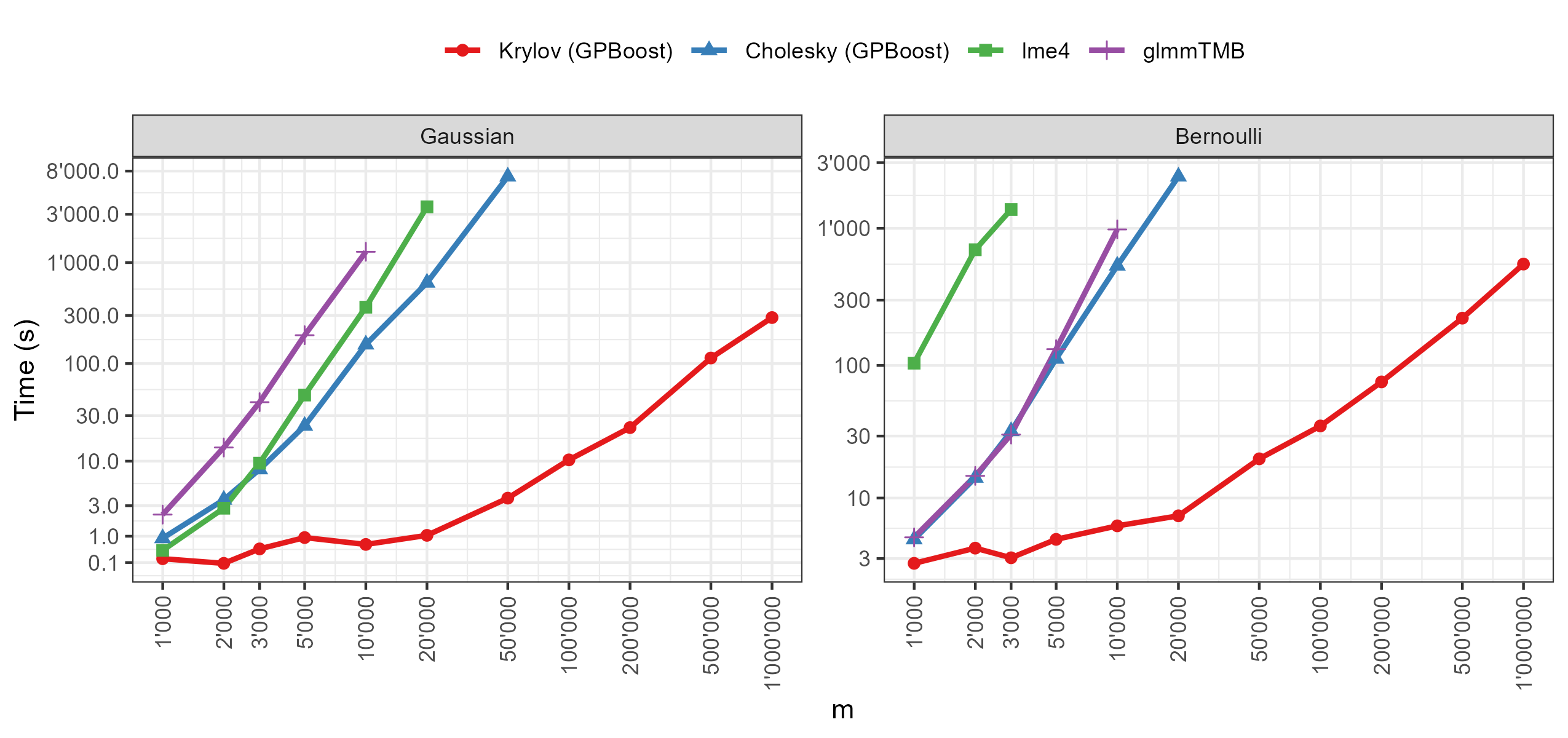}
    \caption{Average wall clock times (s) for parameter estimation and different $m$ with $m_2=\frac{m_1}{2}$. The dimensions of the random effects $m$ are chosen so that they are divisible by three and correspond as closely as possible to those in Figure \ref{fig:runtime}. Simulated data follows either a Gaussian or a Bernoulli likelihood.}
    \label{fig:runtime_unbalanced}
\end{figure}

\clearpage
\section{Additional information and results for the real-world data sets}\label{section:add_real_world}

For more information on the `MovieLens\_32m' data set, see \citet{harper2015movielens}. The data sets `upselling' (`KDDCup09\_upselling') and `employee\_access' (`Amazon\_employee\_access') have previously been used by \citet{pargent2022regularized}. For more information on the `instEval' and `cars' data sets, we refer to \citet{simchoni2023integrating}, and the data set `building\_permits' (`chicago\_building\_permits') has been used by \citet{reyes2019statistical}. Table \ref{table:real_data_sets} gives an overview of the data sets reporting the sample size, the number of covariates included in a linear predictor, the categorical variables modeled using random effects including the number of group levels $m_k$, and the total number of non-zero entries in $Z^TZ$. In Figure \ref{fig:real_world_ZtZ}, we additionally visualize the non-zero entries of the matrix $Z^TZ$ for the different data sets. In general, an extra category is added for missing values in categorical variables and numeric predictor variables are simply imputed using the mean. 

\begin{table}[ht]
\centering
\begin{tabular}{c|lrrrlclrrr}
  \hline
  &data set & $n$ & $p$ & $K$ & $y$ & \% 1 & Cat. var. & $m_k$ & $\text{nnz}(Z^TZ)$\\ 
  \hline
  \multirow[c]{7}{*}[0in]{\rotatebox{90}{\textbf{Regression}}}
  & cars &  97,729 & 66 & 2 & log(price) &  & model\_id &  15,226 &   185,807\\ 
  &&       &  &  &  &  & location\_id &  12,235 &  \\ 
  \cline{2-10} 
  &building\_permits & 527,168 & 6 & 3 & log(cost) &  & contact\_name &  98,362 & 2,544,893\\ 
  &&       &  &  &  &  & latitude & 206,197 \\
  &&       &  &  &  &  & longitude & 206,170 \\
  \cline{2-10}
  &instEval &  73,421 & 22 & 2 & rating &  & student &   2,972 &   150,942 \\ 
  &&       &  &  &  &  & teacher &   1,128 & \\ 
    \cline{2-10}
  &MovieLens\_32m &  32,000,204 & 2 & 2 & rating  &  & userId &   200,948  &   64,285,788 \\ 
  & & &  &  &  &  & movieId &   84,432 & \\ 
  \hline
  \multirow[c]{13}{*}[0in]{\rotatebox{90}{\textbf{Classification}}}
  &employee\_access &  32,769 & 0 & 9 & approval & 0.94 & resource &   7,518 &   469,262 \\ 
  &&       &  &  &  &  & manager &   4,243 &         \\ 
  &&       &  &  &  &  & role\_cat1 &     128 &         \\ 
  &&       &  &  &  &  & role\_cat2 &     177 &         \\ 
  &&       &  &  &  &  & role\_dep. &     449 &         \\ 
  &&       &  &  &  &  & role\_title &     343 &         \\ 
  &&       &  &  &  &  & role\_descr. &   2,358 &         \\ 
  &&       &  &  &  &  & role\_family &      67 &         \\ 
  &&       &  &  &  &  & role\_code &     343 &         \\ 
  \cline{2-10} 
  &upselling &  50,000 & 34 & 4 & up-selling & 0.07 & Var216 &   2,016 &   454,852 \\
  &&       &  &  &  &  & Var217 &  13,991 &         \\ 
  &&       &  &  &  &  & Var198 &   4,291 &         \\
  &&       &  &  &  &  & Var199 &   5,074 &         \\ 
  \hline
\end{tabular}
\caption{Summary of real-world data sets. $n$ is the number of samples, $p$ is the number of predictor variables, $K$ is the number of categorical variables modeled with random effects, $y$ describes the response variable, `\% 1' is the frequency of the 1's in the response variable for the classification data sets, `Cat. var.' describes the categorical variables modeled with random effects, $m_k$ is the number of group levels, and $\text{nnz}(Z^TZ)$ is the number of non-zero entries in the matrix $Z^TZ$.}
\label{table:real_data_sets}
\end{table}

\begin{figure}[ht!]
    \centering
    \begin{tabular}{cc}
        \includegraphics[width=0.45\linewidth]{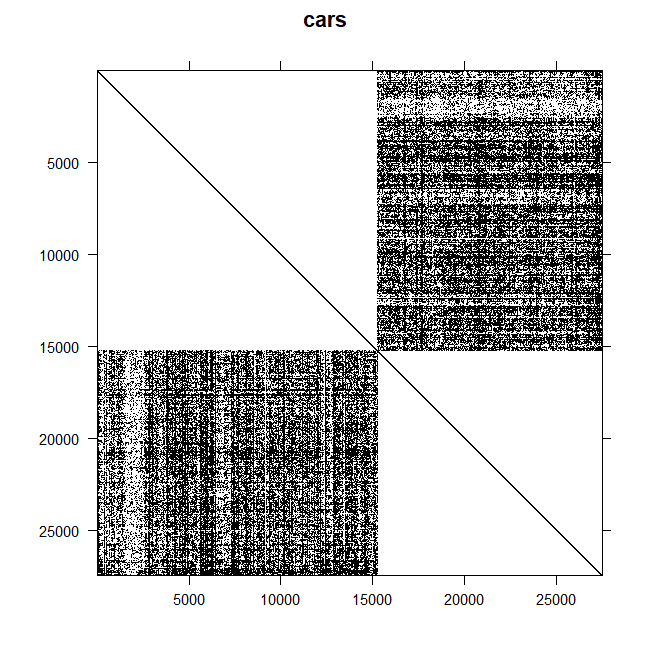} &
        \includegraphics[width=0.45\linewidth]{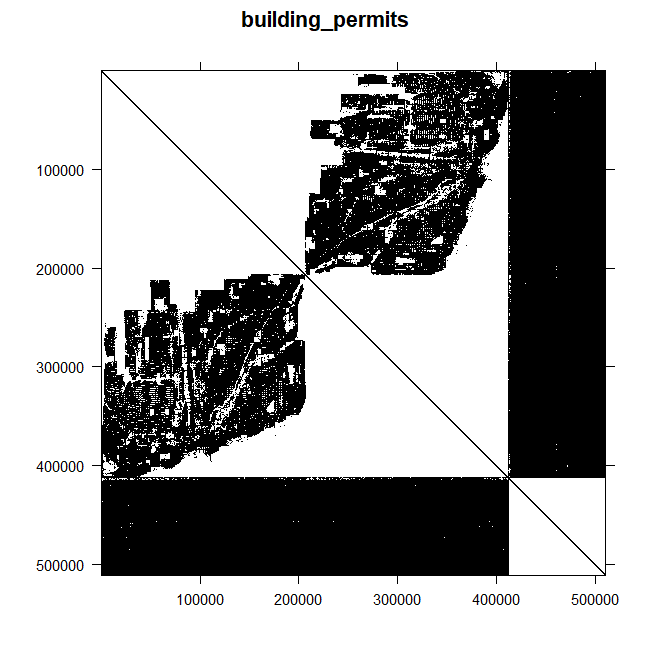} \\
        \includegraphics[width=0.45\linewidth]{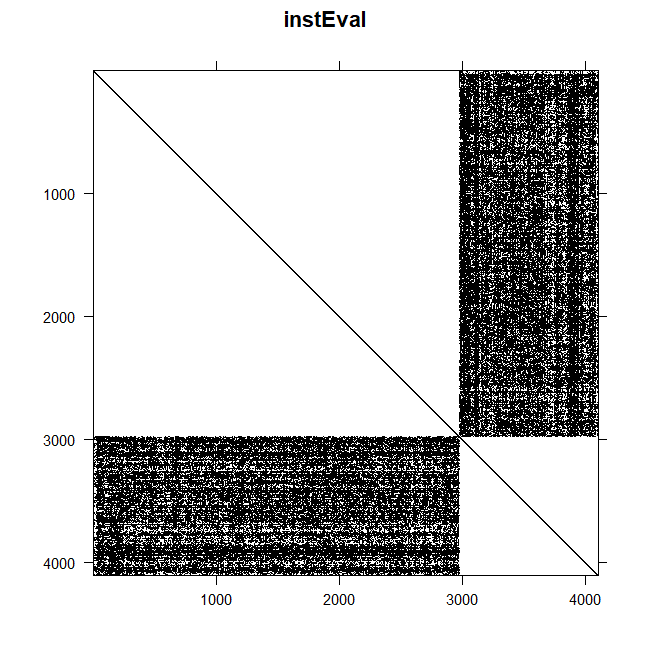} &
        \includegraphics[width=0.45\linewidth]{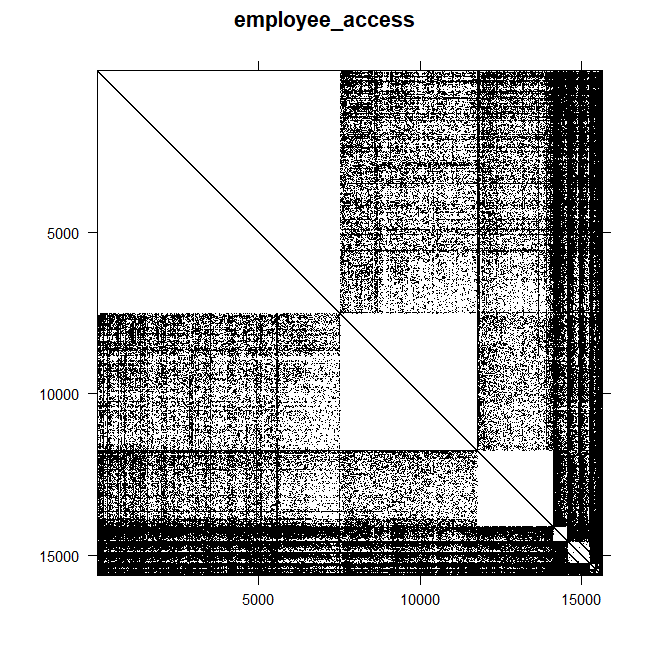} \\
        \includegraphics[width=0.45\linewidth]{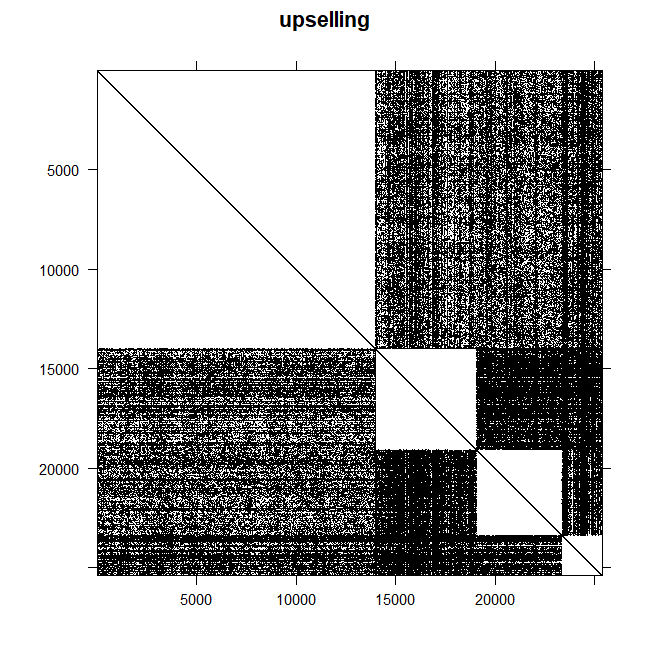} &
    \end{tabular}
    \caption{Non-zero entries of the matrix $Z^TZ$ for the real-world data sets (except the MovieLens\_32m data set for which $Z^TZ$ is too large to visualize).}
    \label{fig:real_world_ZtZ}
\end{figure}

\begin{table}[ht]
\centering
\begin{tabular}{llrr}
  \hline
data set & method & Time (s) & nll\_optimum \\ 
  \hline
cars & Cholesky (GPBoost) & 104.3 & 27518.0 \\ 
   & Krylov (GPBoost) & 7.7 & 27517.1 \\ 
   & glmmTMB & 1152.2 & 27518.1 \\ 
   & lme4 & 62.3 & 27518.0 \\ 
   \hline
  building\_permits & Cholesky (GPBoost) & 5647.5 & 867567.7 \\ 
   & Krylov (GPBoost) & 97.4 & 867566.5 \\ 
   & glmmTMB & 3763.2 & 867567.8 \\ 
   & lme4 & 600.0 & 867567.3 \\ 
  \hline
  instEval & Cholesky (GPBoost) & 9.4 & 118764.0 \\ 
   & Krylov (GPBoost) & 0.5 & 118763.5 \\ 
   & glmmTMB & 76.1 & 118764.0 \\ 
   & lme4 & 11.5 & 118764.0 \\ 
  \hline
    MovieLens\_32m & Cholesky (GPBoost) & crashed & \\ 
     & Krylov (GPBoost) & 283.8 & 40613057 \\ 
        & glmmTMB & crashed \\ 
   & lme4 &  273152.9 & 40613057 \\
   \hline
  employee\_access & Cholesky (GPBoost) & 628.0 & 5494.4 \\ 
   & Krylov (GPBoost) & 80.5 & 5492.3 \\ 
   & glmmTMB & 287.9 & 5494.4 \\ 
   & lme4 & 8362.9 & 5504.5 \\ 
  \hline
  upselling & Cholesky (GPBoost) & 1747.8 & 11488.5 \\ 
   & Krylov (GPBoost) & 41.1 & 11488.4 \\ 
      & glmmTMB & crashed & \\ 
   & lme4 & crashed & \\
   \hline
\end{tabular}
\caption{Time for parameter estimation and negative log-marginal likelihood at the optimum for different real-world data sets and models.}
\label{table:real_world_times}
\end{table}

\begin{figure}[ht!]
    \centering
    \includegraphics[width=0.65\linewidth]{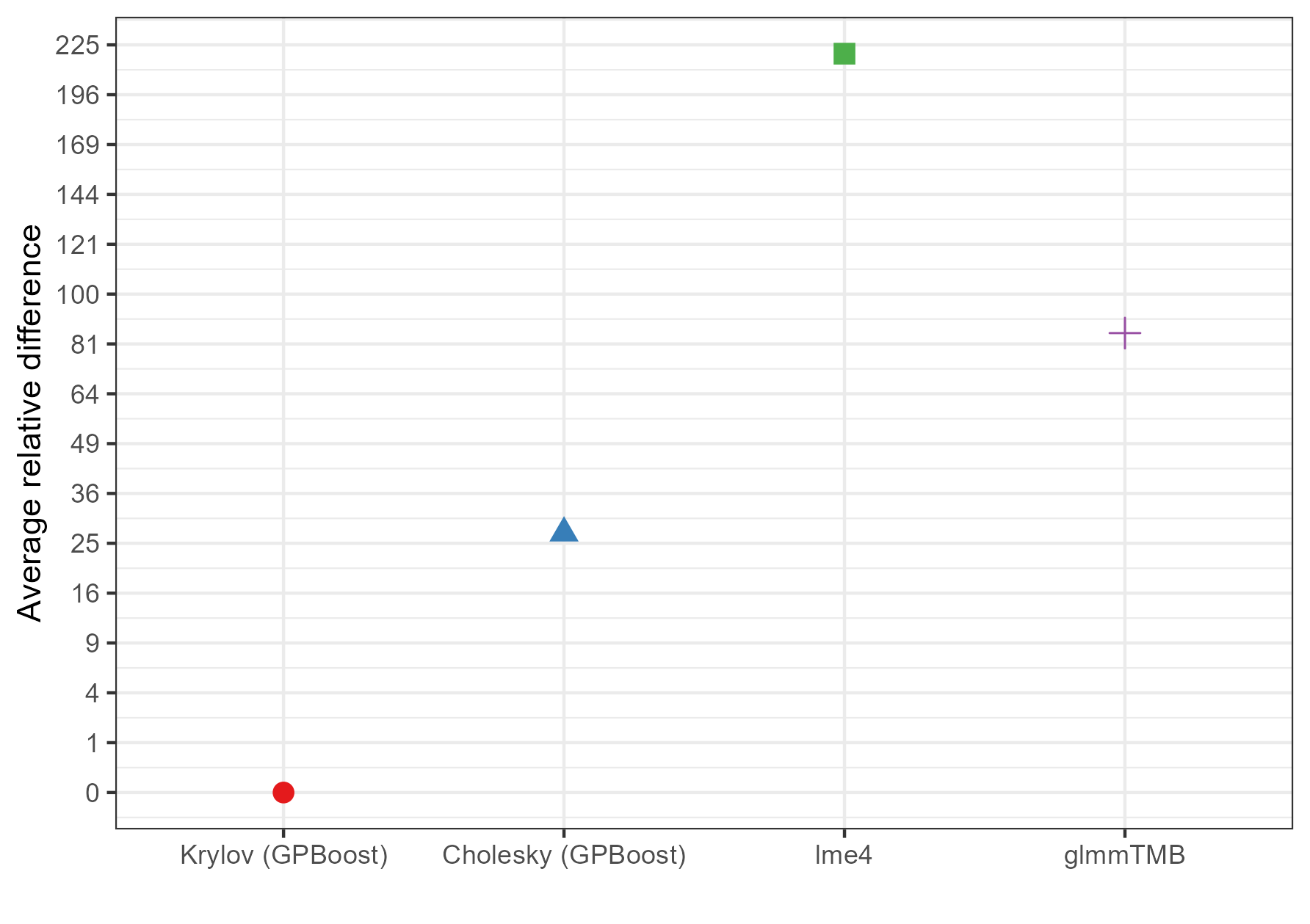}
    \caption{Average relative difference in runtime to the fastest model for the real-world data sets. For every method, the average is calculated over the data sets for which the method does not crash.}
    \label{fig:real_world_rel_diff}
\end{figure}

\begin{table}[ht]
\centering
\begin{tabular}{llcccc}
  \hline
data set & method & $\sigma^2$ & $\sigma_1^2$ & $\sigma_2^2$ & $\sigma_3^2$ \\ 
  \hline
cars & Cholesky (GPBoost) & $5.24 \times 10^{-2}$ & $3.28 \times 10^{-1}$ & $2.04 \times 10^{-1}$ &  \\ 
   & Krylov (GPBoost) & $5.24 \times 10^{-2}$ & $3.28 \times 10^{-1}$ & $2.04 \times 10^{-1}$ &  \\ 
   & glmmTMB & $5.24 \times 10^{-2}$ & $3.28 \times 10^{-1}$ & $2.04 \times 10^{-1}$ &  \\ 
   & lme4 & $5.24 \times 10^{-2}$ & $3.28 \times 10^{-1}$ & $2.04 \times 10^{-1}$ &  \\
  \hline
  building\_permits & Cholesky (GPBoost) & $1.31 \times 10^{0}$ & $9.10 \times 10^{-1}$ & $3.98 \times 10^{-2}$ & $3.97 \times 10^{-2}$ \\ 
   & Krylov (GPBoost) & $1.31 \times 10^{0}$ & $9.09 \times 10^{-1}$ & $3.98 \times 10^{-2}$ & $3.96 \times 10^{-2}$ \\ 
   & glmmTMB & $1.31 \times 10^{0}$ & $9.10 \times 10^{-1}$ & $3.69 \times 10^{-2}$ & $4.25 \times 10^{-2}$ \\ 
   & lme4 & $1.31 \times 10^{0}$ & $9.10 \times 10^{-1}$ & $7.93 \times 10^{-2}$ & $1.08 \times 10^{-4}$ \\
  \hline 
  instEval & Cholesky (GPBoost) & $1.38 \times 10^{0}$ & $1.07 \times 10^{-1}$ & $2.57 \times 10^{-1}$ &  \\ 
   & Krylov (GPBoost) & $1.38 \times 10^{0}$ & $1.07 \times 10^{-1}$ & $2.57 \times 10^{-1}$ &  \\ 
   & glmmTMB & $1.38 \times 10^{0}$ & $1.07 \times 10^{-1}$ & $2.57 \times 10^{-1}$ &  \\ 
   & lme4 & $1.38 \times 10^{0}$ & $1.07 \times 10^{-1}$ & $2.57 \times 10^{-1}$ &  \\ 
\hline
MovieLens\_32m & Cholesky (GPBoost) & crashed & \\ 
     & Krylov (GPBoost) & $7.23 \times 10^{-1}$ & $1.93 \times 10^{-1}$ & $2.32 \times 10^{-1}$ \\ 
    & glmmTMB & crashed & \\ 
    & lme4 & $7.23 \times 10^{-1}$ & $1.93 \times 10^{-1}$ & $2.32 \times 10^{-1}$ &  \\
  \hline
  employee\_access & Cholesky (GPBoost) &  & $1.52 \times 10^{0}$ & $5.78 \times 10^{0}$ & $9.48 \times 10^{-6}$ \\ 
   & Krylov (GPBoost) &  & $1.52 \times 10^{0}$ & $5.79 \times 10^{0}$ & $1.17 \times 10^{-5}$ \\ 
   & glmmTMB &  & $1.52 \times 10^{0}$ & $5.78 \times 10^{0}$ & $3.14 \times 10^{-10}$ \\ 
   & lme4 &  & $1.60 \times 10^{0}$ & $6.06 \times 10^{0}$ & $0.00 \times 10^{0}$ \\ 
  \hline 
  upselling & Cholesky (GPBoost) &  & $3.56 \times 10^{-1}$ & $2.59 \times 10^{-1}$ & $9.86 \times 10^{-3}$ \\ 
   & Krylov (GPBoost) &  & $3.58 \times 10^{-1}$ & $2.58 \times 10^{-1}$ & $9.58 \times 10^{-3}$ \\ 
         & glmmTMB & crashed & \\ 
   & lme4 & crashed & \\
   \hline
\end{tabular}
\caption{Estimates for $\sigma^2$, $\sigma_1^2$, $\sigma_2^2$, and $\sigma_2^3$ for different real-world data sets and models. For reasons of space, we do not report estimates for other variance parameters.}
\label{table:real_world_variances}
\end{table}

\begin{table}[ht]
\centering
\begin{tabular}{llcccc}
  \hline
data set & method & $\beta_0$ & $\beta_1$ & $\beta_2$ & $\beta_3$ \\ 
  \hline
cars & Cholesky (GPBoost) & $1.02 \times 10^{1}$ & $3.50 \times 10^{-1}$ & $-1.77 \times 10^{-1}$ & $-2.64 \times 10^{-3}$ \\ 
   & Krylov (GPBoost) & $1.02 \times 10^{1}$ & $3.50 \times 10^{-1}$ & $-1.77 \times 10^{-1}$ & $-2.64 \times 10^{-3}$ \\ 
   & glmmTMB & $1.02 \times 10^{1}$ & $3.50 \times 10^{-1}$ & $-1.77 \times 10^{-1}$ & $-2.63 \times 10^{-3}$ \\ 
   & lme4 & $1.02 \times 10^{1}$ & $3.50 \times 10^{-1}$ & $-1.77 \times 10^{-1}$ & $-2.64 \times 10^{-3}$ \\ 
  \hline   
  building\_permits & Cholesky (GPBoost) & $3.62 \times 10^{0}$ & $5.44 \times 10^{-4}$ & $-1.36 \times 10^{-5}$ & $2.01 \times 10^{-3}$ \\ 
   & Krylov (GPBoost) & $3.62 \times 10^{0}$ & $5.44 \times 10^{-4}$ & $-1.36 \times 10^{-5}$ & $2.00 \times 10^{-3}$ \\ 
   & glmmTMB & $3.62 \times 10^{0}$ & $5.44 \times 10^{-4}$ & $-1.36 \times 10^{-5}$ & $2.02 \times 10^{-3}$ \\ 
   & lme4 & $3.62 \times 10^{0}$ & $5.44 \times 10^{-4}$ & $-1.36 \times 10^{-5}$ & $2.01 \times 10^{-3}$ \\  
  \hline    
  instEval & Cholesky (GPBoost) & $3.31 \times 10^{0}$ & $5.21 \times 10^{-2}$ & $7.23 \times 10^{-2}$ & $1.37 \times 10^{-1}$ \\ 
   & Krylov (GPBoost) & $3.31 \times 10^{0}$ & $5.21 \times 10^{-2}$ & $7.23 \times 10^{-2}$ & $1.37 \times 10^{-1}$ \\ 
   & glmmTMB & $3.31 \times 10^{0}$ & $5.21 \times 10^{-2}$ & $7.23 \times 10^{-2}$ & $1.37 \times 10^{-1}$ \\ 
   & lme4 & $3.31 \times 10^{0}$ & $5.21 \times 10^{-2}$ & $7.23 \times 10^{-2}$ & $1.37 \times 10^{-1}$ \\ 
   \hline
   MovieLens\_32m & Cholesky (GPBoost) & crashed & \\ 
   & Krylov (GPBoost) & $3.39  \times 10^{0}$ & $-7.29\times 10^{-2}$ & $-4.93\times 10^{-2}$ \\ 
    & glmmTMB & crashed & \\ 
     & lme4 & $3.39  \times 10^{0}$ & $-7.29\times 10^{-2}$ & $-4.93\times 10^{-2}$ \\
  \hline   
  employee\_access & Cholesky (GPBoost) & $7.21 \times 10^{0}$ &  &  &  \\ 
   & Krylov (GPBoost) & $7.16 \times 10^{0}$ &  &  &  \\ 
   & glmmTMB & $7.19 \times 10^{0}$ &  &  &  \\ 
   & lme4 & $6.92 \times 10^{0}$ &  &  &  \\ 
  \hline   
  upselling & Cholesky (GPBoost) & $-1.74 \times 10^{0}$ & $2.32 \times 10^{-5}$ & $9.42 \times 10^{-3}$ & $3.86 \times 10^{-6}$ \\ 
   & Krylov (GPBoost) & $-1.74 \times 10^{0}$ & $2.32 \times 10^{-5}$ & $9.48 \times 10^{-3}$ & $3.94 \times 10^{-6}$ \\ 
    & glmmTMB & crashed & \\ 
   & lme4 & crashed & \\
   \hline
\end{tabular}
\caption{Estimates for $\beta_0$, $\beta_1$, $\beta_2$, and $\beta_3$ for different real-world data sets and models. For reasons of space, we do not report estimates for other coefficients.}
\label{table:real_world_coefficients}
\end{table}

\clearpage
\renewcommand{\refname}{Appendix References}
\putbib[bib_Iterative_GMMs] % prints only citations from this bibunit
\end{bibunit}

\end{document}